\theoremstyle{plain}
\newtheorem*{theorem*}{Theorem}
\newtheorem{thm}{Theorem}[section]
\newtheorem{lemm}[thm]{Lemma}
\newtheorem{dfn}[thm]{Definition}
\newtheorem{rmk}[thm]{Remark}
\newtheorem*{FTA}{Factorization theorem A}
\newtheorem*{FTB}{Factorization theorem B}
\newtheorem*{IVL}{Involution theorem}
\newtheorem{cor}[thm]{Corollary}
\newtheorem*{cor*}{Corollary}
\newtheorem{notation}[thm]{Notation}
\newtheorem{exa}[thm]{Example}
\newtheorem{conv}[thm]{Convention}
\newcommand{\C}{\mathbb{C}}
\newcommand{\Z}{\mathbb{Z}}
\newcommand{\e}{\ell}
\newcommand{\Span}{\mathrm{Span}}
\newcommand{\End}{\mathrm{End}}
\newcommand{\Hom}{\mathrm{Hom}}
\newcommand{\grF}{\mathrm{gr}^{F}}
\newcommand{\Jm}{J_{m}}
\newcommand{\Jinf}{J_{\infty}}
\newcommand{\Spec}{\mathrm{Spec}}
\newcommand{\FG}{\mathfrak{g}}
\newcommand{\fraksl}{\mathfrak{sl}}
\newcommand{\nondp}{(~\mid~)}
\newcommand{\Ei}{\mathrm{Ei}}
\newcommand{\sech}{\mathrm{sech}}
\newcommand{\W}{\mathscr{W}}
\newcommand{\V}{\mathcal{V}}
\newcommand{\Sym}{\mathrm{Sym}}
\newcommand{\kmg}{\widehat{\mathfrak{g}}}
\title{Generalized AKS scheme of integrability via vertex algebra}
\author{Wenda Fang}
\address{Research Institute for Mathematical Sciences, Kyoto University,
 Kyoto 606-8502 JAPAN}
\email{wenda@kurims.kyoto-u.ac.jp}
\thanks{This work was supported by JST SPRING, Grant Number JPMJSP2110.}
\keywords{Vertex algebras, Integrable systems, Poisson scheme, Mathematical physics}
\begin{document}
\maketitle
\begin{abstract}
In this paper, we define and study the classical $R$-matrix for vertex Lie algebra, based on which we propose to construct a new vertex Lie algebra. We give a systematic way to construct the $R$-matrix for affine Kac-Moody vertex Lie algebra and study the universal vertex algebra associated with the new vertex Lie algebra that we obtained by $R$-matrix. As an application, using the classical $R$-matrix we defined, we give a new scheme to construct infinite-dimensional (Liouville) integrable systems via the Feigin-Frenkel center. The scheme-theoretical explanation of our equations and the classical $\W$-algebra case of our theory will come later.
\end{abstract}
\section{Introduction}
This is the first one of a series of papers on studying the connection between vertex algebras and infinite-dimensional Hamiltonian integrable systems via the $R$-matrix method.
\subsection{Classical R-matrix for the vertex Lie algebra} The classical $R$-matrix of Lie algebras appeared in Skylanin's works as a ``by-product" of the quantum inverse-scattering method \cite{KBI}. It also appeared in the earlier paper by Michaelis \cite{M1}. Later on, in the paper \cite{Se2} Semenov-Tyan-Shanskii established the connection between the classical $R$-matrices and integrable systems \cite{Se2}. Let $\FG$ be a Lie algebra with non-degenerate, $ad$-invariant, bilinear form $\nondp$. We say $R\in\End\FG$ is a classical $R$-matrix of $\FG$ if the new bracket defined by
\[[X,Y]_{R}:=[R(X),Y]+[X,R(Y)],\ \forall X,Y\in\FG,\]
together with the vector space $\FG$ forms a Lie algebra.
In \cite{Se2}, Semenov-Tyan-Shanskii proved a sufficient condition for an $R\in\End\FG$ to be an $R$-matrix of a Lie algebra $\FG$ called the modified Yang-Baxter equation. The classical Yang-Baxter equation in the \cite{Se2} has been explained as a ``unitary" condition of the $R$-matrix. This motivated numerous works in different areas, e.g. quantum groups \cite{ChaP}, Poisson-Lie groups\cite{YKSch}. In addition, he built the connection between the classical $R$-matrix and the Adler-Kostant-Symes scheme, which is a powerful tool to construct integrable systems such as the Toda lattice, etc (it is also possible to construct the KdV equation via the classical AKS scheme \cite{Ad}), see Section \ref{subsec:PVAAHS}. Also, in the same paper, Semenov-Tyan-Shanskii showed the connection between the Riemann-Hilbert problem and the $R$-matrix method. In the paper \cite{RS}, Reyman and Semenov-Tyan-Shanskii gave a geometric interpretation of the AKS scheme and the classical $R$-matrix. The classical $R$-matrix theory also appeared in the physical understanding of integrability, see \cite{BS,CWY1,CWY2}.\par
The vertex algebras were introduced by Borcherds \cite{Bo1} and can be viewed as an axiomatic approach to 2d chiral conformal field theory. In the 2d case, thanks to the conformal symmetry, the state space of our quantum field theory becomes particularly simple. In many cases, it becomes the highest weight module for certain infinite-dimensional Lie algebra, e.g., Kac-Moody algebra, etc., see \cite{BPZ,Sch}. Vertex algebra naturally produces vertex Lie algebra and vice versa (for precise definition see Section \ref{secpre1}). From the algebraic viewpoint, the vertex (operator) algebras are analogous to both Lie algebras and commutative associative algebras. Hence, it is natural to generalize the classical $R$-matrix to the vertex algebra case. X. Xu defined the classical $R$-matrix for vertex operator algebra \cite{Xu1}, which is a kind of special vertex algebra with a conformal vector. However, it seems difficult to construct examples by his approach. In contrast, we may construct many examples through our construction.\par
As an analog of Semenov-Tyan-Shanskii's work in the paper \cite{Se2}, we obtain some similar results called factorization theorem A and factorization theorem B, which we stated in Section \ref{sec:PFT}. Besides this, in Section \ref{sec:MYBE}, we show that for a vertex Lie algebra $L$, an operator $R_L\in\End L$, if $R_L$ satisfies the modified Yang-Baxter equation \eqref{MCYB} of vertex Lie algebra then $R_L$ is a classical $R$-matrix of $L$. In particular, factorization theorem B gives an effective way to construct the $R$-matrix for the affine vertex Lie algebra. Here is an example, consider a Lie algebra $\FG$ with a non-degenerate bilinear $ad$-invariant form $\nondp$ and admit the Lie algebra decomposition $\FG=\mathfrak{a}\oplus\mathfrak{b}$ where $\mathfrak{b}$ is an isotropic subalgebra of $\FG$. This gives a vertex Lie algebras factorization automatically, i.e., $v_{k}(\FG)=v_k(\mathfrak{a})\oplus v_k'(\mathfrak{b})$ (see Example \ref{exa:AVLA} and Theorem \ref{thm:RM} for definitions). Let $P_+,\ P_-$ be two projections from $v_k(\FG)$ to $v_k(\mathfrak{a}),\ v_k'(\mathfrak{b})$, respectively. Then the operator $R_{L}:=\frac{1}{2}(P_+-P_-)$ is an $R$-matrix of the vertex Lie algebra $v_k(\FG)$.\par
One of the applications of the $R$-matrix of vertex Lie algebra is that it can be used to construct new vertex algebras. We calculate the examples in Section \ref{subsec:ECR}. A sensible idea is to construct the factorizable $R$-matrix (see Theorem \ref{thm:RM} and Corollary \ref{cor:FRMT}) via the Manin triple of Lie algebra, which possibly gives a connection between our theory with quantum groups.
\subsection{Poisson vertex algebra and Hamiltonian systems}\label{subsec:PVAAHS}
From the viewpoint of mathematical physics, the Poisson vertex algebra (PVA) may obtain by taking the (quasi-)classical limit of a vertex algebra \cite{DSK06,K1,Li1}. More mathematically, it is a Poisson analogue of a Lie conformal algebra. Since any vertex algebra is filtrated and its structure is quite complicated, in many cases, people study the geometry of the associated vertex Poisson scheme, i.e., the spectrum of associated Poisson vertex algebra, see \cite{Ara2,Ara3,AraM,Fre}.\par
One of the origins of the PVA is as follows. There is a long-stand problem 
stated by A. Belavin in the paper \cite{B1}, which asks about the connection between $\W$-algebras and the integrable systems, in particular the KdV equation. The same ideas also appeared in many other previous works, for example \cite{DG}. This problem has been completely solved by V. Kac and his collaborators in a series of papers, see \cite{DSKV13}. In particular, in the paper \cite{BDSK09}, Kac and his collaborators showed that the Poisson vertex algebra plays an important role in the infinite-dimensional Hamiltonian integrable systems. Many connections between PVA, Dubrovin-Novikov theory \cite{DN1,DN2}, and Dubrovin-Zhang theory \cite{DZ} have been indicated in the papers 
 \cite{Ca1,Ca15}.\par
Another goal of this paper is to generalize the AKS scheme by using the classical $R$-matrix we defined. Following \cite{Se2}, we explain the main observation of the classical AKS scheme and how it connects to the classical $R$-matrix, see also \cite{Kir1,Ber1,AM}. Consider a Lie algebra $\FG$ with a non-degenerate, $ad$-invariant bilinear form $(\cdot,\cdot)$ and an $R$-matrix on $\mathfrak{g}$, then we have two Poisson structures on $\mathfrak{g}^{*}$ given by the Lie brackets $[\cdot,\cdot]$ and $[\cdot,\cdot]_{R}$ known as the Poisson-Lie-Kirillov-Kostant Poisson structure. Denote the corresponding Poisson brackets by $\{\cdot,\cdot\}$ and $\{\cdot,\cdot\}_{R}$. The key observation of the AKS scheme is that the elements in the Poisson center with respect to $\{\cdot,\cdot\}$ Poisson commute with respect to $\{\cdot,\cdot\}_{R}$. Therefore, we obtain a family of the first integrals. If we restrict our attention to the algebraic function on $\mathfrak{g}^{*}$, then $\mathbb{C}[\mathfrak{g}^{*}]\cong S(\mathfrak{g})$. We see that as a physical system on $\mathfrak{g}^{*}$, the functions of physical observables lie in $S(\mathfrak{g})$. This observation gives an effective way to construct integrable systems. In addition, it has been used to explain various aspects of soliton equations, see \cite{TU}.\par
Poisson vertex algebra gives a framework of infinite-dimensional integrable systems (field-theoretic systems), see Section \ref{secpre2}. Let $L$ be a vertex Lie algebra with a $\lambda$-bracket $\{\cdot_{\lambda}\cdot\}$ and a classical $R$-matrix $R_L$. Then $L$ has two $\lambda$-brackets, $\{\cdot_{\lambda}\cdot\}$ and $\{\cdot_{\lambda}\cdot\}_{R_{L}}$, see Section \ref{subsec:CRLCRB}.
Consider the Poisson vertex algebra $\V:=\Sym(L)$ and extend the $\lambda$-brackets $\{\cdot_{\lambda}\cdot\}$ and $\{\cdot_{\lambda}\cdot\}_{R_{L}}$ via the Master formula \eqref{MF}. This gives two $\lambda$-brackets on $\V$.
\begin{IVL}\label{thm7.1}
Let $L$ be a vertex Lie algebra with a $\lambda$-bracket $\{\cdot_{\lambda}\cdot\}$, $R_{L}$ be an $R$-matrix of $L$, $\mathcal{V}:=\Sym(L)$, $\mathfrak{z}(\V)$ be the center of $\V$ with respect to $\{\cdot_{\lambda}\cdot\}$. Then $\forall f,g\in\mathfrak{z}(\V),\ \{f_{\lambda}g\}_{R_{L}}=0$.
\end{IVL}
This gives a way to construct the Liouville integrable systems. Let $\V,\{\cdot_{\lambda}\cdot\},\ \{\cdot_{\lambda}\cdot\}_{R_{L}}$ as in the involution theorem. Choose a family of linearly independent polynomials $h_n\in\mathfrak{z}(\V),\ n=1,2,\cdots$ and take functionals $\int h_n\in\V /\partial\V$ as Hamiltonians. We get non-trivial integrable systems through the twisted $\lambda$-bracket $\{\cdot_{\lambda}\cdot\}_{R_{L}}$, see Definition \ref{dfn:HEQ}.\par
Due to the existence of the Feigin-Frenkel center \cite{FF}, in the affine vertex algebra case we have the following corollary.
\begin{cor*}
Consider affine vertex algebra $V^{k}(\FG)$ at the critical level, i.e., $k=-h^{\vee}$ where $h^{\vee}$ is the dual Coxter number of $\FG$. Let $\mathfrak{z}(\widehat{\FG})$ be the Feigin-Frenkel center contained in the $V^{-h^{\vee}}(\FG),\ \mathcal{V}:=\grF V^{k}(\FG)\cong\Sym'(v_0(\FG))$ as Poisson vertex algebras, see Remark \ref{rmk:level0}. Suppose $R_{L}$ be an $R$-matrix of $v_{0}(\FG)$, then $\forall f,g\in\grF\mathfrak{z}(\widehat{\FG}),$ $\{f_{\lambda}g\}_{R_{L}}=0$.
\end{cor*}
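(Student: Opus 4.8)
The plan is to realize the Corollary as a special case of the Involution theorem, taking the vertex Lie algebra to be $L = v_0(\FG)$ so that $\V = \Sym'(v_0(\FG))$ is identified with $\grF V^{-h^\vee}(\FG)$ as Poisson vertex algebras through the isomorphism of Remark \ref{rmk:level0}. Under this identification the given $R_L$ is an $R$-matrix of $L = v_0(\FG)$ as required by the Involution theorem, so the only thing to check is that the associated graded of the Feigin-Frenkel center lands inside the Poisson center $\mathfrak{z}(\V)$ featured in that theorem. Once the inclusion $\grF\mathfrak{z}(\widehat{\FG}) \subseteq \mathfrak{z}(\V)$ is known, the vanishing $\{f_\lambda g\}_{R_L} = 0$ for $f,g \in \grF\mathfrak{z}(\widehat{\FG})$ follows immediately.

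To establish this containment I would first recall from \cite{FF} that the Feigin-Frenkel center $\mathfrak{z}(\widehat{\FG})$ is exactly the center of the vertex algebra $V^{-h^\vee}(\FG)$, i.e. the space of $z$ with $[z_\lambda a] = 0$ for every $a \in V^{-h^\vee}(\FG)$ (equivalently $a_{(n)} z = 0$ for all $n \geq 0$). Next I would spell out how the Poisson $\lambda$-bracket on $\V = \grF V^{-h^\vee}(\FG)$ arises from the quasi-classical limit: the filtration $F_\bullet$ is compatible with the $\lambda$-bracket, the $n \geq 0$ products strictly lower the filtration degree, and the induced Poisson bracket is the symbol map $\{\bar{a}_{\lambda}\bar{b}\} = \overline{[a_\lambda b]}$ on associated graded pieces. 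Granting this, for $z \in \mathfrak{z}(\widehat{\FG})$ with symbol $\bar{z} \in \V$ and any $\bar{a} \in \V$ lifted to $a \in V^{-h^\vee}(\FG)$, one computes $\{\bar{z}_{\lambda}\bar{a}\} = \overline{[z_\lambda a]} = 0$ because $z$ is central; hence $\bar{z} \in \mathfrak{z}(\V)$. Since $\grF\mathfrak{z}(\widehat{\FG})$ is spanned by such symbols, the desired inclusion holds, and applying the Involution theorem finishes the argument.

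The step I expect to be the main obstacle is the passage from the vertex-algebra center to the Poisson center under taking $\grF$. One must verify carefully that the filtration degree bookkeeping is correct, so that the symbol of $[z_\lambda a]$ genuinely represents the Poisson bracket $\{\bar{z}_{\lambda}\bar{a}\}$ rather than a term sitting in a lower filtration piece, and that the vanishing of $[z_\lambda a]$ in $V^{-h^\vee}(\FG)$ forces the vanishing of its symbol rather than merely a statement modulo lower terms. This compatibility of the center with the quasi-classical limit is the genuine content behind the Corollary; the rest is a direct invocation of the Involution theorem.
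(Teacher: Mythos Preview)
Your approach is correct and matches the paper's: the Corollary is deduced from the Involution theorem together with the identification $\grF V^{-h^\vee}(\FG)\cong\Sym'(v_0(\FG))$ (Theorem~\ref{thm:Ar} and Remark~\ref{rmk:level0}), and the only extra ingredient is the inclusion $\grF\mathfrak{z}(\widehat{\FG})\subseteq\mathfrak{z}(\V)$. Your worry about the filtration bookkeeping is unwarranted, however: by definition \eqref{eq:thmLi3} the Poisson $n$-th product is $\sigma_p(a)_{(n)}\sigma_q(b)=\sigma_{p+q-n}(a_{(n)}b)$, so if $a_{(n)}z=0$ in $V^{-h^\vee}(\FG)$ then $\sigma_p(a)_{(n)}\sigma_q(z)=0$ on the nose---there is nothing subtle to check.
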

We give a specific example here. Take the affine vertex algebra $V^{-2}(\mathfrak{sl}_2)$, the $R$-matrix of $v_0(\mathfrak{sl}_2)$ associated with the Borel decomposition (see Definition \ref{dfn:RM}), and apply our construction, one obtains 
that the following system
\begin{equation*}
\begin{aligned}
&\frac{d f}{dt_2}=0,\\
&\frac{d h}{dt_2}=-8 fe({h^{\prime}}^2+4 f^{\prime} e^{\prime}+2 e f^{\prime\prime}+hh^{\prime\prime}+2 f e^{\prime\prime}),\\
&\frac{d e}{d t_2}=4 he({h^{\prime}}^2+4 f^{\prime}e^{\prime}+2 e f^{\prime \prime}+hh^{\prime\prime}+2f e^{\prime\prime}),
\end{aligned}
\end{equation*}
where the primes are the derivative with respect to displacement variable $x$ is Liouville integrable. In Example \ref{exa:BSL2O}, we show that one solution of this system is related to the KdV one soliton. Also, in the same example, we show that the solutions of this system are in one-to-one correspondence with the points on $\mathbb{CP}^{1}$.\par
Our approach extends the $R$-matrix method in the Lie algebras to the vertex algebras and in a special case, we recover the classical case. Same as the classical AKS scheme our scheme has geometric meaning that we will explain in our further work. In addition, this approach is possibly generalized to the classical $\W$-algebra and related to the Khesin-Zakharevich Poisson-Lie group \cite{LM}. The author also hopes this paper and further works will explain the integrable systems, in particular soliton equations, in a systematic way.\par
This paper is organized as follows. In Section \ref{secpre1} we shall refer to notations and results given in \cite{FBZ,K1}. In Section \ref{secpre2}, we summarize how Poisson vertex algebras are related to the integrable systems. In Section \ref{sec:MYBE}, we prove two sufficient conditions for an operator $R_{L}\in \mathrm{End} L$ to be an $R$-matrix. In Section \ref{sec:PFT}, we prove our factorization theorem for vertex Lie algebra, also in Section \ref{subsec:ECR}, we show that the universal affine vertex algebra of $v_{k}(\mathfrak{g})_{R_{L}}$ associated to the Iwasawa decomposition is simple. In Section \ref{sec:AKS}, we generalize the Adler-Kostant-Symes scheme into the vertex algebra case. In Section \ref{sec:app}, we compute a few examples given by our construction.\\
{\it Acknowledgements} \\ 
This paper is the master thesis of the author, and he
wishes to express his gratitude to his supervisor Tomoyuki Arakawa for lots of advice to improve this paper. He especially thanks Prof. Takahiro Shiota for useful discussions and many suggestions on this paper. Also, he thanks Vladimir E. Zakharov for introducing the integrable systems to him. He is deeply grateful to Sun Furihata, and Shoma Sugimoto for their many pieces of advice and encouragement. He thanks Ryo Sato, and Shigenori Nakatsuka for useful comments and discussions. He appreciates that Dr. Xuanzhong Dai and Professor Marek Rychlik read his manuscript carefully and give many advice. He thanks Dr. Yuto Moriwaki for the opportunity to visit the RIKEN and his valuable comments on this paper.
\begin{conv}
In this paper, our ground field will be $\C$. In addition, the word ``classical" in the classical AKS scheme does not state the counterpart of the quantum case, but the Lie algebraic version of the AKS scheme.
\end{conv}
\section{Preliminaries I}\label{secpre1}
\subsection{Vertex algebras and vertex Lie algebras} In this part we recall the definitions of vertex algebras, all concepts can be found in \cite{FBZ} and \cite{LL}. 
\begin{dfn}\label{def:VA}
	A vertex algebra is a vector space $V$ equipped with a vector $|0\rangle$ (called the vacuum vector), a linear map 
\begin{align*}
Y(?,z):V\longrightarrow\End\ V[[z^{\pm}]],\ a\longmapsto Y(a,z)=\sum_{n\in\mathbb{Z}}a_{(n)}z^{-n-1},
\end{align*} and a linear operator $T\in\End\ V$. These data are subject to the following axioms:
\begin{itemize}
\item (vacuum axioms) $Y(|0\rangle,z)=Id_{V}$. For $a\in V,$
$$Y(a,z)|0\rangle\in V[[z]],$$
and $Y(a,z)|0\rangle\mid_{z=0}=a$.\\
\item (translation axiom) For any $A\in V$,
$$[T,Y(a,z)]=\partial_{z}Y(a,z),$$
and $T|0\rangle=0$.\\
\item (Locality axiom) $\forall a,b\in V,$ $Y(a,z),\ Y(b,z)$ are locally mutual, i.e., $\exists N_{ab}\in\mathbb{Z}_{>0}$ s.t. 
$$(z-w)^{N_{ab}}[Y(a,z),Y(b,w)]=0.$$
\end{itemize}
\end{dfn}
The vacuum axiom is equivalent to $a_{(n)}|0\rangle=0,\ \forall n\geq0$ and $a_{(-1)}|0\rangle=a$. Also, for simplicity, we may denote $Y(a,z)$ by $a(z)$.
\begin{exa}\label{exa:VA}
Let $\mathfrak{g}$ be a finite-dimensional Lie algebra with normalized invariant bilinear form $(\cdot\mid\cdot)$. The Kac-Moody affinization is written as $\widehat{\mathfrak{g}}=\mathfrak{g}[t,t^{-1}]\oplus\mathbb{C}K$. The commutator on $\widehat{\mathfrak{g}}$ is given by formula:
$$[xt^{n},yt^{m}]=[x,y]t^{n+m}+n(x\mid y)\delta_{n+m,0}K,\ [\widehat{\mathfrak{g}},K]=0.$$
Let $k$ be a complex number, the universal affine vertex algebra associated with $\mathfrak{g}$ at a level $k$ is defined as
$$V^{k}(\mathfrak{g})=U(\widehat{\mathfrak{g}})\otimes_{U(\mathfrak{g}[t]\oplus\mathbb{C}K)}\mathbb{C}_{k},$$
where $\mathbb{C}_{k}$ is the one-dimensional representation of $\mathfrak{g}[t]\oplus\mathbb{C}K$ on which $\mathfrak{g}[t]$ acts trivially and $K$ acts as a multiplication by $k$. Then $V^{k}(\mathfrak{g})$ has a natural vertex algebra structure \cite{FZh}. We shall call it the universal affine vertex algebra associated with $\mathfrak{g}$ at level $k$. It is customary to denote the vacuum vector in this case by $v_{k}$.
\end{exa}
Let $V$ be a vertex algebra, and $I\subset V$. We say I is a vertex  algebra ideal if it is a $T$-invariant subspace satisfying $Y(a,z)b\in I((z)),\ \forall a\in I,\ b\in V$. In addition, we say the vertex algebra is simple if $V$ has no proper ideal.\par
A vertex algebra $V$ has skew-symmetry property, i.e., $Y(A, z) B=e^{z T} Y(B,-z) A$ in $V((z))$. 
Hence, all vertex algebra ideals are two-sided ideals. Together with the vacuum axiom, we conclude that if $|0\rangle\in I,$ then $I$ is not a proper ideal.
\begin{dfn}\label{dfn:CVA}

The center $\mathfrak{z}(V)$ of a vertex algebra V is a subspace defined by
$$\mathfrak{z}(V)=\{b\in V\mid a_{(n)}b=0\ \forall a\in V,\ n\gg0\}.$$ Equivalently, $b\in\mathfrak{z}(V)$ if and only if $[Y(a,z),Y(b,w)]=0\ ,\forall a\in V$.
\end{dfn}
It is well-known that for the affine vertex algebra $V^{k}(\mathfrak{g})$,  $\mathfrak{z}(V^{k}(\mathfrak{g}))=\mathbb{C}|0\rangle$, unless $k=-h^{\vee}$ where $h^{\vee}$ is the dual Coxeter number of $\mathfrak{g}$. Hence, we introduce the following definition.
\begin{dfn}[\cite{Ha1,FF}]\label{dfn:FFC}
The center of the affine vertex algebra $V^{-h^{\vee}}(\mathfrak{g})$ is called the Feigin-Frenkel center and denoted by $\mathfrak{z}(\widehat{\mathfrak{g}})$. In addition, the elements that lie in the Feigin-Frenkel center are called the Segal-Sugawara operators.
\end{dfn}
\begin{exa}\label{exa:SO}
For the affine vertex algebra $V^{-2}(\mathfrak{sl}_{2})$, let $S=e_{-1}f_{-1}+f_{-1}e_{-1}+\frac{1}{2}h_{-1}h_{-1}$ then
$$\mathfrak{z}(\widehat{\mathfrak{sl}}_{2})=\mathbb{C}[T^{r}S\mid r\geq0].$$
\end{exa}
\begin{dfn}\label{def2.1}
 A vertex Lie algebra is a vector space $L$, together with a linear operator 
 \begin{equation*}
 	Y^{-}(?,z):L\longrightarrow\mathrm{End}L\otimes z^{-1}\mathbb{C}[[z^{-1}]],\ a\longmapsto\sum_{n\geq0}a_{(n)}z^{-n-1},
 \end{equation*}
where the operators $a_{(n)}$'s satisfy $a_{(n)}v=0\ \forall n\ll0$, and a linear operator $T$ on $L$.
These data satisfy the following axioms:
\begin{itemize}
\item (translation) $Y^{-}(Ta,z)=\partial_{z}Y^{-}(a,z)$,\\
\item (skew-symmetry) $Y^{-}(a,z)b=(e^{zT}Y^{-}(b,-z)a)_{-},$\\
\item (commutator) 
\begin{equation*}
[a_{(m)},Y^{-}(b,w)]=\sum_{n\geq0}\binom{m}{n}(w^{m-n}Y^{-}(a_{(n)}b,w))_{-},
\end{equation*}
where for a power series $a(z)=\sum_{m\in\mathbb{Z}}a_{m}z^{m},$ we define $a(z)_{-}=\sum_{m<0}a_{m}z^{m}.$
\end{itemize}
\end{dfn}
\begin{rmk}
It follows from Haisheng Li's paper \cite{Li1} that the commutator is equivalent to the following half Jacobi identity: $\forall u,v\in L$,
\begin{equation}\label{Eq:HJI}
\begin{aligned}
(&x^{-1}\delta(\frac{y-z}{x})Y^{-}(u,y)Y^{-}(v,z)-x^{-1}\delta(\frac{z-y}{-x})Y^{-}(v,z)Y^{-}(u,x))_{-}\\
&=(z^{-1}\delta(\frac{y-x}{z})Y^{-}(Y^{-}(u,x)v,z))_{-},
\end{aligned}
\end{equation}
where $\delta(z):=\sum_{n\in\mathbb{Z}}z^{n}$ and whenever $(z-w)^{m}$ form appears, it is understood to be the binomial expansion in non-negative power of $w$.
\end{rmk}
\begin{rmk}\label{rmk:lambda}
Let $(V,Y^{-}(?,z),T)$ be a vertex Lie algebra. For all $a,b\in V$, define $$[a_{\lambda}b]:=\sum_{n\geq0}\frac{\lambda^{n}}{n}a_{(n)}b.$$
This is called $\lambda$-bracket. In addition, one re-writes Definition \ref{def2.1} in terms of $\lambda$-bracket. More precisely, the translation, skew-symmetry, and the Jacobi identity are equivalent to 
\begin{align*}
&[T a_\lambda b]=-\lambda[a_\lambda b], \quad[a_\lambda T b]=(\lambda+T)[a_\lambda b], \quad \mathrm{(sesquilinearity)}  \\
&[a_\lambda b]=-[b_{-\lambda-T} a], \mathrm{(skew symmetry ) } \\
&[a_\lambda[b_\mu c]]-[b_\mu[a_\lambda c]]=[[a_\lambda b]_\mu c],\mathrm{(Jacobi\ identity)}
\end{align*}
respectively. Then $(V,[\cdot_{\lambda}\cdot],T)$ forms a Lie conformal algebra (for definition see \cite{K1}). 
\end{rmk}
One can construct vertex Lie algebra from vertex algebra. For a vertex algebra $(V,|0\rangle,T,Y(?,z)),$ $\forall A\in V$ we denote
	$$Y^{-}(A,z):=\sum_{n\geq0}A_{(n)}z^{-n-1}.$$
Then $(V,Y^{-}(?,z),T)$ forms a vertex Lie algebra, see {\cite[Lemma 16.1.3]{FBZ}}.
\begin{exa}\label{exa:AVLA}
Consider the affine vertex algebra defined in Example \ref{exa:VA}. We denote the vector space spanned by the vacuum vector $v_{k}\in V^{k}(\mathfrak{g})$ and $xt^{n}.v_{k},\ n\leq-1$ by $v_{k}(\mathfrak{g})$. Then $(v_{k}(\mathfrak{g}),Y^{-}(?,z),T)$ has vertex Lie algebra structure. We will call the vertex Lie algebra $v_k(\FG)$ the affine vertex Lie algebra of Lie algebra $\FG$ at level $k$.
\end{exa}
Since we will use the notation of vertex Lie subalgebra, we give an example here.
\begin{exa}\label{exa:VLSA}
Consider vertex Lie algebra $(v_{k}(\mathfrak{sl}_2),Y^{-}(?,z),T)$ and let $f,h,e$ be the standard generators of $\mathfrak{sl}_2$. Then the subspace of $v_k(\mathfrak{sl_2})$ spanned by $v_k,h_{-1}.v_k,\ e_{-1}.v_k$ forms a vertex Lie subalgebra of $v_{k}(\mathfrak{sl}_2)$.
\end{exa}
\begin{dfn}\label{def2.2}
Let $(L_{1},Y_{1}^{-},T_{1})$ and $(L_{2},Y^{-}_{2},T_{2})$ be vertex Lie algebras. A linear map $\varphi:L_{1}\longrightarrow L_{2}$ is called a weak vertex Lie algebra homomorphism, if
\begin{equation*}
	[T,\varphi]=0,\ \varphi[Y^{-}_{1}(u,z)v]=Y_{2}^{-}(\varphi(u),z)\varphi(v),\ \forall u,v\in L_{1}.
\end{equation*}
\end{dfn}
Note that by [\cite{LL}, Proposition 3.12.1] if $(L_{1},Y_{1}^{-},T_{1})$ and $(L_{2},Y_{2}^{-},T_{2})$ is vertex Lie algebras, then $(L_{1}\oplus L_{2},Y^{-}_{1}\oplus Y^{-}_{2},T_{1}\oplus T_{2})$	is also a vertex Lie algebra.\par
Follows \cite{FBZ}, let $\mathrm{L}$ be a vertex Lie algebra and denote $L(D^{\times}):=L\otimes\mathbb{C}((t)).$ Consider the operator $\partial:=T\otimes1+Id\otimes\partial_{t}$ acting on $L(D^{\times}),$ then 
$$\mathrm{Lie}(L):=L(D^{\times})/\mathrm{Im}(\partial),$$ forms a Lie algebra called local Lie algebra of $L$. For all $a\in L$ we denote the image of $a\otimes t^{n}$ in $\mathrm{Lie}(L)$ by $a_{[n]}$. Then the commutator on $\mathrm{Lie}(L)$ reads as:
\begin{equation}\label{eq:LLA}
[a_{[m]},b_{[k]}]=\sum_{n\geq0}\binom{m}{n}(a_{(n)}b)_{[m+k-n]}.
\end{equation}
Let $\mathrm{Lie}(L)_{+}$ be the completion of the span of all $a_{[n]},\ n\geq0$, then $\mathrm{Lie}(L)_{+}$ is a Lie subalgebra of $\mathrm{Lie}(L)$ with respect to the Lie bracket defined by \eqref{eq:LLA}.
\begin{exa}\label{exa:LLA}
Take affine vertex Lie algebra $v_{k}(\mathfrak{g})$ as in Example \ref{exa:AVLA}, then $\mathrm{Lie}(v_{k}(\mathfrak{g}))\cong\widehat{\mathfrak{g}}$ as Lie algebras.
\end{exa}
By {\cite[Lemma 16.1.7]{FBZ}}, for a vertex Lie algebra we can construct a vertex algebra. Consider a vertex Lie algebra $\mathrm{L}$. Let $U(L)$ be the universal enveloping algebra of $\mathrm{Lie}(L)$ and $U(L)_{+}$ be the universal enveloping algebra of $\mathrm{Lie}(L)_{+}$, then 
$$\mathrm{Vac}(L):=U(L)\otimes_{U(L)_{+}}\mathbb{C},$$
where $\mathbb{C}$ is the trivial one-dimensional representation of $\mathrm{Lie}(L)_{+}$, and the vacuum vector is $1\otimes1$. For all $a\in L$, the field is given by formula $$Y(a_{[-1]}|0\rangle,z)=\sum_{n\in\mathbb{Z}}a_{[n]}z^{-n-1},$$
where $|0\rangle:=1\otimes1$. 
For example, let vertex Lie algebra $L$ be $v_{k}(\mathfrak{g}),$ then it follows from Example \ref{exa:LLA} that $\mathrm{Vac}(L)\cong V^{k}(\mathfrak{g})$ as vertex algebras.
\subsection{Poisson vertex algebras}
\begin{dfn}
A Poisson vertex algebra is a quintuple $(\mathscr{V},\partial,1,\cdot,\{\cdot_{\lambda}\cdot\})$ such that $(\mathscr{V},1,\partial)$ is a commutative associative unital differential algebra and $(\mathscr{V},\partial,\{\cdot,\cdot\})$ is a Lie conformal algebra where $\lambda$-bracket is given by $\{\cdot_{\lambda}\cdot\}$, and such that the $\lambda$-bracket $\{\cdot_{\lambda}\cdot\}$ and the associative product $``\cdot"$ satisfy
\begin{enumerate}\nonumber
\item(left Leibniz rule) $\{a_{\lambda}bc\}=\{a_{\lambda}b\}c+b\{a_{\lambda}c\},$
\item(right Leibniz rule) $\{ab_\lambda c\}=\{a_{\lambda+\partial}c\}_\rightarrow b+\{b_{\lambda+\partial}c\}_\rightarrow a$,
\end{enumerate}
where the right arrow and $\lambda$-bracket are understood as the following. If $\{a_{\lambda}b\}=\sum_{n\in\mathbb{Z}_{+}}\frac{\lambda^{n}}{n!}a_{(n)}b,$ when a right arrow appears, that means that $\lambda+\partial$ has to be moved to the right i.e., $\{a_{\lambda+\partial}b\}_{\rightarrow}=\sum_{n\in\mathbb{Z}_{+}}\frac{a_{(n)}b}{n!}(\lambda+\partial)^{n}$. If no right arrow appears, then $\{a_{-\lambda-\partial}b\}=\sum_{n\in\mathbb{Z}_{+}}\frac{(-\lambda-\partial)^{n}}{n!}a_{(n)}b$.
\end{dfn}
\begin{exa}\label{exa:pva}
Consider the vertex Lie algebra given in Example \ref{exa:AVLA} $(L,\{\cdot_{\lambda}\cdot\},\partial)$. Let $\mathscr{V}:=\Sym(L)$ i.e., the symmetric algebra of vector space $L$. If we extend the $\lambda$-bracket by the left and right Leibniz rule, then the symmetric algebra $\mathscr{V}$ has a Poisson vertex algebra structure.
\end{exa}
According to Proposition 3.8 in the paper \cite{Li1}, we know for an affine vertex Lie algebra $v_k(\FG)$, the quotient space $\mathrm{Sym}'(v_k(\FG)):=\mathrm{Sym}(v_k(\FG))/(1-v_k)$ is also a Poisson vertex algebra. In this paper, we will call it the modified Poisson vertex algebra obtained from $v_k(\FG)$.
\begin{rmk}
The right Leibniz rule above can be obtained from the left Leibniz rule and the skewsymmetry, cf. \cite{K1} Exercise 13.
\end{rmk}
\begin{rmk}
One defines the Poisson vertex algebra through vertex Lie algebra by the following way. A Poisson vertex algebra is a differential algebra $(V,\partial)$ equipped with a vertex Lie algebra structure $(Y^{-}(?,z),\partial)$ such that $\forall a,b,c\in V,$
$$Y^{-}(a, x)(b c)=\left(Y^{-}(a, x) b\right) c+b\left(Y^{-}(a, x) c\right).$$
Using the formula in Remark \ref{rmk:lambda}, we see the equivalence.
\end{rmk}
One obtains the Poisson vertex algebra by taking Li's filtration. We recall facts about it here.\par
Let $V$ be a vertex algebra and $$F^{p}V:=\{a_{\left(-n_1-1\right)}^1 \ldots a_{\left(-n_r-1\right)}^r b\mid a^{i}\in V,b\in V,n_{i}\in\mathbb{Z}_{\geq0},n_1+n_2+\cdots+n_r\geq p\}.$$ Then,
\begin{align*}
& V=F^0 V \supset F^1 V \supset \cdots,\ \bigcap_p F^p V=0, \\
& T F^p V \subset F^{p+1} V, \\
& a_{(n)} F^q V \subset F^{p+q-n-1} V \text { for } a \in F^p V, n \in \mathbb{Z}, \\
& a_{(n)} F^q V \subset F^{p+q-n} V \text { for } a \in F^p V, n \geq 0,
\end{align*}
and we set $F^{p}V=V,\ \forall p<0$. The filtration $\{F^{p}V\}$ is called the Li filtration of $V$.\par
Follows {\cite[Proposition 4.2]{Li1}}, let $\grF V:=\bigoplus_{p}F^{p}V/F^{p+1}$ be the associated graded vector space, then $\grF V$ has a Poisson vertex algebra structure given by the following theorem.\par
Let $V$ be a vertex algebra 
the associated graded vector space of $V$ has a Poisson vertex algebra structure defined by
\begin{align}
\sigma_p(a) \cdot \sigma_q(b) & :=\sigma_{p+q}\left(a_{(-1)} b\right)\label{eq:thmLi1}, \\
\partial \sigma_p(a) & :=\sigma_{p+1}(T a)\label{eq:thmLi2}, \\
\sigma_p(a)_{(n)} \sigma_q(b) & :=\sigma_{p+q-n}\left(a_{(n)} b\right)\label{eq:thmLi3},
\end{align}
for all  $a \in F^p V \setminus F^{p+1} V,\ b \in F^q V,\ n \geq 0$.
\begin{rmk}
In terms of generating series, the last property \eqref{eq:thmLi3} can be written as $$Y^{-}(\sigma_{p}(a),z)\sigma_q(b)=\sum_{n\geq0}\sigma_{p+q-n}(a_{(n)}b)z^{-n-1}.$$
\end{rmk}
Let $X$ be a finite type affine scheme i.e., $X=\Spec R$, where $R$ is a finitely generated unital, associative, commutative $\mathbb{C}$-algebra. The $m$-jet scheme of $X$ denoted by $\Jm X$ is characterized by the following property: for every unital
commutative, associative algebra $A$,
$$\Hom_{\mathrm{Sch}}(\mathrm{Spec}\ A,\Jm X)\cong\Hom_{\mathrm{Sch}}(\mathrm{Spec} A[z]/(z^{m+1}),X).$$
If $m>n$, we have a family of projection maps $\pi_{mn}:J_{m}X\rightarrow J_{n}X$. This family of projection maps yields a projective system $\{\pi_{mn},J_m X\}_{m\geq n}$ of schemes. We define the infinite jet scheme $\Jinf X$ of X as $\Jinf X:=\varprojlim J_m X$.
Choose the presentation of algebra $R$ as $R=\mathbb{C}[x^1,x^2,\cdots,x^r]/\langle f_1,f_2,\cdots f_s\rangle$. Then, $J_{m}X$ can be realized as follows. Define variables $x^j_{(-i)},\ i=1,2,\cdots,m+1$ and derivation $T$ on the ring $\mathbb{C}[x^j_{(-i)}\mid i=1,2,\cdots,m+1,j=1,2,\cdots,r]$. The $T$ acts as $$T x_{(-i)}^j= \begin{cases}i x_{(-i-1)}^j & \text { for } i \leq m, \\ 0 & \text { for } i=m+1,\end{cases}.$$ 
Identify $x^j$ and $x^j_{(-1)}$ here, and let 
$$\Jm R=\mathbb{C}\left[x_{(-i)}^j \mid 1\leq i\leq m+1,1\leq j\leq r\right] /\left\langle T^j f_i \mid 1\leq i\leq s,0\leq j\leq m+1\right\rangle.$$
Then, $J_{m}X\cong \mathrm{Spec} J_{m}R$ as affine schemes. The infinite jet scheme of $X$ is realized by taking $m\rightarrow\infty,$ 
$$\Jinf R=\mathbb{C}\left[x_{(-i)}^j \mid i\geq1, j\geq1\right] /\left\langle T^j f_i \mid 1\leq i\leq s,j\geq0\right\rangle,$$
and $\Jinf X=\mathrm{Spec}\Jinf R$.
The following theorem is proved by Arakawa in \cite{Ara2}.
\begin{thm}[{\cite[Proposition 2.3.1]{Ara2}}]\label{thm:Ar}
Let $R$ be a Poisson algebra with Poisson bracket $\{\cdot,\cdot\}$. Then $\Jinf R$ has a unique vertex Poisson algebra structure, called the vertex Poisson structure at level 0, such that
$$u_{(n)} v= \begin{cases}\{u, v\} & \text { if } n=0, \\ 0 & \text { if } n>0,\end{cases}$$
$\forall u,v\in R\subset\Jinf R$.
Let $\mathfrak{g}$ be a Lie algebra and consider the Poisson vertex Lie algebra structure on the $\mathbb{C}[\Jinf\mathfrak{g}^{*}]$. Then by taking the Li's filtration, $\forall k\in\mathbb{C},$ 
$$\mathrm{gr}^{F}V^{k}(\mathfrak{g})\cong\mathbb{C}[\Jinf\mathfrak{g}^{*}],$$
as Poisson vertex algebras where $\mathbb{C}[\Jinf\mathfrak{g}^{*}]$ is equipped with the Poisson vertex algebra structure at level 0.
In particular, at the critical level $k=-h^{\vee}$, we have $\grF\mathfrak{z}(\widehat{\mathfrak{g}})\subset\grF V^{k}(\mathfrak{g})\cong\mathbb{C}[\Jinf\mathfrak{g}^{*}]$ as Poisson vertex algebras.
\end{thm}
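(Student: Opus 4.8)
\textit{Strategy.} The plan is to separate the two genuinely distinct assertions packaged in the statement: first, the construction (existence and uniqueness) of the level-$0$ vertex Poisson structure on $\Jinf R$ for an arbitrary Poisson algebra $R$; and second, the identification $\grF V^{k}(\FG)\cong\C[\Jinf\FG^{*}]$ as Poisson vertex algebras, from which the ``in particular'' clause about $\grF\mathfrak{z}(\widehat{\FG})$ is immediate. The uniform tool is that $\Jinf R$ is generated as a differential ($T$-)algebra by $R\subset\Jinf R$: writing $R=\C[x^1,\dots,x^r]/J$ with $J$ a Poisson ideal, the presentation $\Jinf R=\C[x^j_{(-i)}]/\langle T^{\ell}f_i\rangle$ recalled above shows that every element is a differential polynomial in the images of the $x^j$.

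For the first assertion, uniqueness is the soft half: by sesquilinearity (Remark \ref{rmk:lambda}) together with the left and right Leibniz rules, a $\lambda$-bracket is completely determined by its values on a differential generating set, so prescribing $\{u_{\lambda}v\}=\{u,v\}$ (equivalently $u_{(0)}v=\{u,v\}$ and $u_{(n)}v=0$ for $n>0$) for $u,v\in R$ pins down at most one structure. For existence I would define the bracket on $R$ by this formula, extend it to all of $\C[x^j_{(-i)}]$ by the Leibniz rules and sesquilinearity, and then check two points. First, well-definedness: the differential ideal $\langle T^{\ell}f_i\rangle$ is a $\lambda$-bracket ideal. Since it is $T$-stable, sesquilinearity reduces the verification to brackets of the $f_i$, and because $J$ is a Poisson ideal one has $\{f_i,\,\cdot\,\}\in J$, so the bracket descends to $\Jinf R$. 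Second, the axioms: skew-symmetry and the Jacobi identity are each compatible with the Leibniz rules and sesquilinearity, so it suffices to verify them on generators, where $\{u_{\lambda}v\}$ is $\lambda$-independent and the two axioms collapse exactly to the antisymmetry and Jacobi identity of the Poisson bracket on $R$.

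For the second assertion I take $R=\C[\FG^{*}]\cong S(\FG)$ with the Kirillov--Kostant bracket $\{x,y\}=[x,y]$. The identification proceeds in two layers. As commutative algebras, the PBW theorem for $V^{k}(\FG)$ (Example \ref{exa:VA}) together with the explicit form of the Li filtration identifies $\grF V^{k}(\FG)$ with the polynomial algebra on the symbols $x_{(-i)}$, $i\ge1$, which is precisely the coordinate ring $\C[\Jinf\FG^{*}]$. As Poisson vertex algebras, I would compare $\lambda$-brackets on the generators $\FG=F^{0}V/F^{1}V$: using $x_{(0)}y=[x,y]_{(-1)}v_k$, $x_{(1)}y=k(x\mid y)v_k$ and $x_{(n)}y=0$ for $n\ge2$ in $V^{k}(\FG)$, formula \eqref{eq:thmLi3} gives $\sigma_{0}(x)_{(0)}\sigma_{0}(y)=\sigma_{0}([x,y])$ while $\sigma_{0}(x)_{(1)}\sigma_{0}(y)=\sigma_{-1}(k(x\mid y)v_k)=0$, since $\sigma_{-1}$ lands in $F^{-1}V/F^{0}V=0$. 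Thus the central term disappears in the associated graded, the induced bracket is $k$-independent and equal to the Kirillov--Kostant bracket, and by the uniqueness established above the two level-$0$ structures coincide.

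Finally, the ``in particular'' follows by functoriality: equipping the Feigin--Frenkel center $\mathfrak{z}(\widehat{\FG})$ (Definition \ref{dfn:FFC}) with the induced Li filtration $F^{p}\mathfrak{z}=\mathfrak{z}\cap F^{p}V^{-h^{\vee}}(\FG)$ realizes $\grF\mathfrak{z}(\widehat{\FG})$ as a Poisson vertex subalgebra of $\grF V^{-h^{\vee}}(\FG)\cong\C[\Jinf\FG^{*}]$. The step I expect to cost the most work is the commutative-algebra identification $\grF V^{k}(\FG)\cong\C[\Jinf\FG^{*}]$: one must show that the evident surjection from the jet ring onto the associated graded is injective, i.e.\ that passing to $\grF$ creates no new relations. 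This is exactly the place where the freeness of $V^{k}(\FG)$ provided by PBW has to be leveraged carefully, rather than the softer formal manipulations that suffice for the bracket computations.
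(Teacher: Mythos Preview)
The paper does not supply its own proof of this theorem: it is stated with attribution to \cite[Proposition 2.3.1]{Ara2} and invoked as a black box, so there is no argument in the text to compare against.

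That said, your outline is a correct reconstruction of the standard proof. Uniqueness via differential generation of $\Jinf R$ by $R$, existence via the Master Formula (Theorem~\ref{thm:elambda}) followed by descent through the differential ideal generated by the Poisson ideal $J$, and the reduction of skew-symmetry and Jacobi to the Poisson axioms on $R$ are all sound. For the second assertion, your computation that $\sigma_0(x)_{(1)}\sigma_0(y)=\sigma_{-1}(k(x\mid y)v_k)$ lands in $F^{-1}V/F^0V=0$ via \eqref{eq:thmLi3} is exactly the mechanism by which the level-dependent central term is killed in the associated graded, and invoking the uniqueness from the first part to conclude is the right move. You have also correctly identified the one genuinely non-formal point: the injectivity of $\C[\Jinf\FG^{*}]\to\grF V^{k}(\FG)$, which in the literature is handled either by a direct PBW-basis argument for $V^k(\FG)$ or via Li's general structure theorem for $\grF V$ in terms of the $C_2$-quotient (see \cite{Li1}).
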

\begin{rmk}\label{rmk:level0}
Using the $\lambda$-bracket language, the Poisson vertex algebra structure at level 0 on $\mathbb{C}[\Jinf\mathfrak{g}^{*}]$ can be viewed as follows. Let $L:=v_{0}(\mathfrak{g})$ be an affine vertex Lie algebra at level 0, see Example \ref{exa:AVLA}, then $\forall u,v\in L$ the $\lambda$-bracket is given by $\{u_{\lambda}v\}:=[u,v]$. Extend it to $\Sym(v_k(\FG))$ and consider the modified Poisson vertex algebra $\Sym'(v_0(\FG))$. By the Master Formula \eqref{MF} we give below, then Theorem \ref{thm:elambda} guarantees that $\mathrm{gr}^{F} V^{-h^{\vee}}(\mathfrak{g})\cong \mathrm{Sym}'(v_{0}(\mathfrak{g}))$ as Poisson vertex algebras.
\end{rmk}
\section{Preliminaries II}\label{secpre2}
In the above section, we see a Poisson vertex algebra $\mathscr{V}$ is a differential algebra with a $\lambda$-bracket $\{\cdot_{\lambda}\cdot\}$. In order to discuss Hamiltonian PDEs we have to extend the $\lambda$-bracket to a larger space. We first define what kind of larger space we want.
\begin{dfn}\label{dfn:ADF}
An algebra of differential functions $\mathscr{V}$ in a set of variables $\{u^{i}\}_{i\in I}$ is a differential algebra, which contains the algebra of polynomials $$\mathscr{P}_{\e}:=\mathbb{C}[u^{i,n}\mid i\in\mathrm{I},n\in\mathbb{Z}_{+}],$$
where $u^{i,n}:=\partial^{n}u^{i}$ and $\lvert \mathrm{I} \rvert=\e$ as a differential subalgebra. In addition, $\mathscr{V}$ endows with linear maps $\frac{\partial}{\partial u^{i,n}}:\mathscr{V}\rightarrow\mathscr{V},\ \forall i\in I,n\in\mathbb{Z}_{+}$, which are commuting derivations on $\mathscr{V}$. Extend the derivation $\partial$ in $\mathscr{P}_{\e}$ and such that
\begin{enumerate}
\item For all $f\in\mathscr{V},\ \frac{\partial f}{\partial u^{i,n}}=0$ for all but finitely many pairs $(i,n)\in\mathrm{I}\times\mathbb{Z}_{+}$,
\item $[\frac{\partial}{\partial u^{i,n}},\partial]=\frac{\partial}{\partial u^{i,n-1}}$(Here we set the $n=0$ case, the right-hand side is equal to 0).
\end{enumerate}
\end{dfn}
\begin{exa}\label{exa:adf}
$\mathscr{P}_{\e}$ itself is an example of the algebra of differential functions.
\end{exa}
\subsection{Hamiltonian PDE and PVA}\label{subsec:HPP}
The following Lemma is key for the connection of Hamiltonian PDE and PVA. In particular, in view of the Hamiltonian field theory, the Lemma \ref{Basic Lemma} and Theorem \ref{thm3.1} show that a Poisson vertex algebra $\mathscr{V}$ gives a space of all physical observables for our field-theoretic system.
\begin{lemm}[{\cite[Lemma 8]{K1}}]\label{Basic Lemma}
Let $\mathscr{V}$ be a PVA. Let $\overline{\mathscr{V}}:=\mathscr{V}/\partial\mathscr{V}$ and let $\int:\mathscr{V}\rightarrow\overline{\mathscr{V}}$ be the quotient map. Then we have the following well-defined brackets:
\begin{enumerate}
\item $\overline{\mathscr{V}}\times\overline{\mathscr{V}}\rightarrow\overline{\mathscr{V}},\ \{\int a,\int b\}:=\int\{a_{\lambda}b\}_{\lambda=0},$
\item $\overline{\mathscr{V}}\times\mathscr{V}\rightarrow\overline{\mathscr{V}},\ \{\int a,b\}:=\{a_{\lambda}b\}_{\lambda=0}.$
\end{enumerate}
Moreover, $(1)$ defines a Lie algebra on $\overline{\mathscr{V}}$, and $(2)$ defines a representation of the Lie algebra $\overline{\mathscr{V}}$ on $\mathscr{V}$ and the $\lambda$-bracket of $\mathscr{V}$ commuting with $\partial$. (Here we first set $\lambda=0$, then take the quotient map.)
\end{lemm}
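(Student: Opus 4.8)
The plan is to reduce every assertion to evaluating the PVA axioms --- sesquilinearity, skew-symmetry, the Jacobi identity, and the left Leibniz rule --- at the special value $\lambda=0$ (and $\mu=0$), where the $\lambda$-bracket collapses to the single product $\{a_\lambda b\}_{\lambda=0}=a_{(0)}b$. The guiding principle is that the two sesquilinearity relations $\{\partial a_\lambda b\}=-\lambda\{a_\lambda b\}$ and $\{a_\lambda\partial b\}=(\lambda+\partial)\{a_\lambda b\}$ control exactly how $\partial\mathscr{V}$ interacts with the bracket at $\lambda=0$, so that passing to the quotient $\overline{\mathscr{V}}=\mathscr{V}/\partial\mathscr{V}$ is harmless.

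First I would establish well-definedness. For bracket (2), if $a$ is replaced by $a+\partial c$ then sesquilinearity gives $\{\partial c_\lambda b\}=-\lambda\{c_\lambda b\}$, which vanishes at $\lambda=0$; hence $\{a_\lambda b\}_{\lambda=0}$ depends only on the class $\int a$. For bracket (1) the first slot is handled identically, and for the second slot, replacing $b$ by $b+\partial d$ produces $\{a_\lambda\partial d\}_{\lambda=0}=\partial\{a_\lambda d\}_{\lambda=0}\in\partial\mathscr{V}$, which is killed by $\int$; thus $\int\{a_\lambda b\}_{\lambda=0}$ depends only on $\int a$ and $\int b$.

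Next I would verify the Lie algebra axioms for (1). Antisymmetry follows from skew-symmetry $\{a_\lambda b\}=-\{b_{-\lambda-\partial}a\}$: at $\lambda=0$ this reads $a_{(0)}b=-\sum_{n}\frac{(-\partial)^n}{n!}b_{(n)}a$, whose $n\ge1$ terms lie in $\partial\mathscr{V}$, so $\int a_{(0)}b=-\int b_{(0)}a$. For the Jacobi identity I would specialize the PVA Jacobi identity $\{a_\lambda\{b_\mu c\}\}-\{b_\mu\{a_\lambda c\}\}=\{\{a_\lambda b\}_{\lambda+\mu}c\}$ to $\lambda=\mu=0$, obtaining the Leibniz-form identity $a_{(0)}(b_{(0)}c)-b_{(0)}(a_{(0)}c)=(a_{(0)}b)_{(0)}c$ in $\mathscr{V}$; applying $\int$ and unwinding the definitions of the nested brackets then yields the Jacobi identity on $\overline{\mathscr{V}}$.

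Finally, the very same identity $(a_{(0)}b)_{(0)}c=a_{(0)}(b_{(0)}c)-b_{(0)}(a_{(0)}c)$, read before applying $\int$, is precisely the statement that $\int a\mapsto\{\int a,\cdot\}=a_{(0)}(\cdot)$ is a representation of $\overline{\mathscr{V}}$ on $\mathscr{V}$. That this action respects the differential structure follows again from sesquilinearity, $\{a_\lambda\partial b\}_{\lambda=0}=\partial\{a_\lambda b\}_{\lambda=0}$, so the action commutes with $\partial$, while the left Leibniz rule at $\lambda=0$ shows each $\{\int a,\cdot\}$ is a derivation of the product. The one place demanding genuine care --- and the main obstacle --- is the bookkeeping of the two spectral parameters in the Jacobi identity: in $\{\{a_\lambda b\}_{\lambda+\mu}c\}$ the variable $\lambda$ occurs both inside the inner bracket and in the outer spectral parameter $\lambda+\mu$, so one must check that expanding $\{a_\lambda b\}=\sum_n\frac{\lambda^n}{n!}a_{(n)}b$ and then setting $\lambda=\mu=0$ correctly isolates only the $n=0$, outer-degree-zero term $(a_{(0)}b)_{(0)}c$, rather than mixing in higher coefficients.
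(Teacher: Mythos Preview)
The paper does not actually supply a proof of this lemma: it is simply quoted from \cite[Lemma~8]{K1} in the preliminaries, with no argument given. Your proposal is the standard proof and is correct in all essentials --- well-definedness from sesquilinearity, antisymmetry from skew-symmetry modulo $\partial\mathscr{V}$, and the Jacobi/representation statements from specializing the conformal Jacobi identity at $\lambda=\mu=0$. The bookkeeping worry you flag at the end is not a real obstacle: since $\{a_\lambda b\}=\sum_{n\ge0}\frac{\lambda^n}{n!}a_{(n)}b$ and then $\{x_{\lambda+\mu}c\}=\sum_{m\ge0}\frac{(\lambda+\mu)^m}{m!}x_{(m)}c$ is applied term by term, setting $\lambda=\mu=0$ kills every term with $n\ge1$ or $m\ge1$ and leaves exactly $(a_{(0)}b)_{(0)}c$, so no mixing occurs.
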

The following theorem extends the $\lambda$-bracket on a Poisson vertex algebra to an algebra of differential functions $\mathscr{V}$ that contains it.
\begin{thm}[{\cite[Theorem 8]{K1}}]\label{thm:elambda}
Let $\mathscr{V}$ be an algebra of differential functions in the variables $\{u^{i}\}_{i{\in}\mathrm{I}}$. For each pair $i,j\in\mathrm{I}$ choose $\{u^{i}_{\lambda}u^{j}\}=H_{ji}(\lambda)\in\mathscr{V}[\lambda]$, then
\begin{enumerate}
\item The Master formula:
\begin{equation}\label{MF}
\{f_{\lambda}g\}=\sum_{i,j\in I,p,q\in\mathbb{Z}_{+}}\frac{\partial g}{\partial u^{j,q}}\{u^{i}_{\partial+\lambda}u^{j}\}_{\rightarrow}(-\partial-\lambda)^{p}\frac{\partial f}{\partial u^{i,p}},
\end{equation}
defines a $\lambda$-bracket on $\mathscr{V}$, which satisfies sesquilinearity, the left, and right Leibniz rules, and extends the given $\lambda$-bracket on the variables $u_{i}$'s. Consequently, any $\lambda$-bracket on the algebra of differential polynomials satisfying these properties is given by the Master Formula.
\item This $\lambda$-bracket is skewsymmetric if skew-symmetry holds for every pair of variables:
\begin{equation*}
\{u^{i}_{\lambda}u^{j}\}=-\{u^{j}_{\lambda}u^{i}\},\ \forall i,j\in\mathrm{I},
\end{equation*}
\item 
If this $\lambda$-bracket is skewsymmetric, then it satisfies the Jacobi identity, provided Jacobi identity holds for every triple of variables:
\begin{equation*}
\{u^{i}_{\lambda}\{u^{j}_{\mu}u^{k}\}\}-\{u^{j}_{\mu}\{u^{i}_{\lambda}u^{k}\}\}=\{\{u^{i}_{\lambda}u^{k}\}_{\lambda+\mu}u^{j}\},\ \forall i,j,k\in\mathrm{I}. 
\end{equation*}
\end{enumerate}
\end{thm}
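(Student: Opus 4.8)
The plan is to establish the three parts in order, the common engine being \emph{reduction to the generators} $u^i$. Since $\mathscr{V}$ is generated as a differential algebra by the $u^i$ and carries the commuting derivations $\frac{\partial}{\partial u^{i,n}}$ subject to $[\frac{\partial}{\partial u^{i,n}},\partial]=\frac{\partial}{\partial u^{i,n-1}}$, any identity that is sesquilinear and compatible with both Leibniz rules propagates from the generators to all of $\mathscr{V}$; each part below amounts to making this propagation precise.

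For part (1) there are two assertions: that \eqref{MF} is \emph{forced} by the axioms (uniqueness) and that it genuinely \emph{defines} a $\lambda$-bracket with the stated properties (existence). For uniqueness I would first fix a generator in the left slot. Sesquilinearity gives $\{u^i_\lambda\partial^q u^j\}=(\lambda+\partial)^q\{u^i_\lambda u^j\}$, and an induction on the factors of $g$ using the left Leibniz rule (together with the structure of an algebra of differential functions) yields the chain rule
\begin{equation*}
\{u^i_\lambda g\}=\sum_{j\in\mathrm{I},\,q\in\mathbb{Z}_{+}}\frac{\partial g}{\partial u^{j,q}}\,(\lambda+\partial)^q\{u^i_\lambda u^j\}.
\end{equation*}
I would then peel off the first argument: the right Leibniz rule together with $\{\partial a_\lambda b\}=-\lambda\{a_\lambda b\}$ reduces $\{f_\lambda g\}$ to brackets with a generator in the left slot, producing the weights $(-\lambda-\partial)^p\frac{\partial f}{\partial u^{i,p}}$ and shifting $\{u^i_\lambda\cdot\}$ to the arrow-bracket $\{u^i_{\lambda+\partial}u^j\}_{\rightarrow}$. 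Substituting the chain rule and collecting the operators reproduces \eqref{MF}, which settles uniqueness. For existence I would instead take \eqref{MF} as a definition and check directly that its right-hand side is sesquilinear in each argument and satisfies both Leibniz rules; these are finite manipulations of $(\lambda+\partial)$, $(-\lambda-\partial)$ and the $\frac{\partial}{\partial u^{i,n}}$, and the prescribed values on generators are recovered by setting $f=u^i$, $g=u^j$.

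Parts (2) and (3) are the propagation arguments. For part (2), writing skewsymmetry as $\{a_\lambda b\}=-\{b_{-\lambda-\partial}a\}$, I would show that, for $b$ fixed, the set of $a$ satisfying it is stable under products and under $\partial$ (here both Leibniz rules and sesquilinearity enter), and symmetrically in $b$; since it contains the generators by hypothesis, it is all of $\mathscr{V}$. Part (3) applies the same idea to the trilinear Jacobi identity $\{a_\lambda\{b_\mu c\}\}-\{b_\mu\{a_\lambda c\}\}=\{\{a_\lambda b\}_{\lambda+\mu}c\}$: for fixed $b,c$ the set of admissible $a$ is shown to be a differential subalgebra, likewise for $c$ with $a,b$ fixed, while the middle argument $b$ is handled by invoking the skewsymmetry already proved in part (2) to move it into an outer slot. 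Reducing all three arguments to generators leaves exactly the hypothesis.

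The main obstacle is the closure step in part (3): verifying that the Jacobi identity survives multiplication of one argument forces one to expand nested brackets such as $\{a_\lambda\{b_\mu(c_1c_2)\}\}$ and $\{(a_1a_2)_\lambda\{b_\mu c\}\}$ through the Leibniz rules and then reorganize three $\lambda,\mu$-dependent operators so that the three terms of Jacobi line up. The bookkeeping of the arrow convention $\{\cdot_{\lambda+\partial}\cdot\}_{\rightarrow}$ and of where each $\partial$ lands is delicate, and it is essential that skewsymmetry (part (2)) be available to treat the symmetric role of the middle argument. By comparison the closure steps in parts (1) and (2) are routine once the chain rule is in hand.
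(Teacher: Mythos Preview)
The paper does not supply its own proof of this theorem: it is quoted verbatim as \cite[Theorem~8]{K1} and used as background input for the later results (in particular for the Involution Theorem and Theorem~\ref{thm:GAKSS}). There is therefore nothing in the paper to compare your proposal against.

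For what it is worth, your outline is the standard route taken in the cited source and in \cite{BDSK09}: derive the Master Formula from sesquilinearity and the two Leibniz rules by first expanding the right slot to obtain the chain rule $\{u^i_\lambda g\}=\sum_{j,q}\frac{\partial g}{\partial u^{j,q}}(\lambda+\partial)^q\{u^i_\lambda u^j\}$, then the left slot; verify the axioms directly on the explicit formula; and propagate skewsymmetry and the Jacobi identity from generators to arbitrary elements by showing the relevant solution sets are differential subalgebras, using the already-established skewsymmetry to handle the middle Jacobi argument. Your identification of the bookkeeping in the Jacobi closure step as the only genuinely laborious point is accurate.
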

\begin{dfn}\label{dfn:PS}
Given a Poisson vertex algebra $\mathscr{V}$ with generators $\{u^i\}_{i\in \mathrm{I}},\ \lvert\mathrm{I}\rvert=\e$ and $\lambda$-bracket $\{\cdot_{\lambda}\cdot\}$. Define a matrix differential operator $H(\partial)\in Mat_{\e\times\e}\mathscr{V}[\partial]$ with entries given by $H_{ij}(\partial):=\left\{u^{j} _{\partial+\lambda} u^i\right\}_{\rightarrow}\rvert_{\lambda=0}$. We will call $H(\partial)$ the Poisson structure corresponding to the Poisson vertex algebra $\mathscr{V}$.
\end{dfn}
\begin{dfn}\label{dfn:HEQ}
Given a PVA $\mathscr{V}$ and a local functional $\int h\in\overline{\mathscr{V}}$, the associated Hamiltonian PDE is 
\begin{equation*}
\frac{d u}{dt}=\{\int h,u\}.	
\end{equation*}
The local functional $\int h$ is called the Hamiltonian of this equation.
\end{dfn}
\begin{thm}\label{thm3.1}
Let the PVA $\mathscr{V}$ be an algebra of differential functions in the variables $\{u_{i}\}_{i\in I}$ and the $\lambda$-bracket is given by the Master Formula \eqref{MF}, then
\begin{enumerate}
\item Hamiltonian PDE: $\frac{d u}{dt}=\{\int h,u\}=H\frac{\delta}{\delta u}\int h$;
\item Poisson bracket on $\mathscr{V}$: $\{\int f,\int g\}=\int\frac{\delta g}{\delta u}\cdot H\frac{\delta f}{\delta u}$.	
\end{enumerate}
where $H$ is a matrix differential operator that we defined in Definition \ref{dfn:PS}.
\end{thm}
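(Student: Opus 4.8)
The plan is to deduce both identities directly from the Master Formula~\eqref{MF} and the Basic Lemma~\ref{Basic Lemma}, the only genuinely analytic ingredient being integration by parts modulo $\partial\mathscr{V}$. Throughout I record the variational derivative as $\frac{\delta f}{\delta u^{i}}=\sum_{p\in\mathbb{Z}_{+}}(-\partial)^{p}\frac{\partial f}{\partial u^{i,p}}$, so that by Definition~\ref{dfn:PS} the $k$-th component of $H\frac{\delta}{\delta u}\int h$ is $\sum_{i}H_{ki}(\partial)\frac{\delta h}{\delta u^{i}}$, while $\int\frac{\delta g}{\delta u}\cdot H\frac{\delta f}{\delta u}$ means $\int\sum_{i,j}\frac{\delta g}{\delta u^{j}}H_{ji}(\partial)\frac{\delta f}{\delta u^{i}}$.

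For~(1), I would first invoke Lemma~\ref{Basic Lemma}(2) to write $\{\int h,u^{k}\}=\{h_{\lambda}u^{k}\}|_{\lambda=0}$, and then apply~\eqref{MF} with $f=h$ and $g=u^{k}$. Because $u^{k}$ is a generator, $\frac{\partial u^{k}}{\partial u^{j,q}}=\delta_{jk}\delta_{q,0}$, so the sum over $j,q$ collapses to $j=k$, $q=0$ and leaves $\{h_{\lambda}u^{k}\}=\sum_{i,p}\{u^{i}{}_{\partial+\lambda}u^{k}\}_{\rightarrow}(-\partial-\lambda)^{p}\frac{\partial h}{\partial u^{i,p}}$. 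Setting $\lambda=0$ turns the inner bracket into $H_{ki}(\partial)$ by Definition~\ref{dfn:PS} and the trailing sum into $\frac{\delta h}{\delta u^{i}}$, giving $\{\int h,u^{k}\}=\sum_{i}H_{ki}(\partial)\frac{\delta h}{\delta u^{i}}$, which is exactly the $k$-th component of $H\frac{\delta}{\delta u}\int h$.

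For~(2), I would begin from Lemma~\ref{Basic Lemma}(1), which gives $\{\int f,\int g\}=\int\{f_{\lambda}g\}|_{\lambda=0}$, and substitute~\eqref{MF}. The factors coming from $f$ assemble immediately: at $\lambda=0$ the tail $\sum_{p}(-\partial)^{p}\frac{\partial f}{\partial u^{i,p}}$ is already $\frac{\delta f}{\delta u^{i}}$. The factors coming from $g$ do not collapse as in~(1); here one uses that modulo $\partial\mathscr{V}$, and for any $b\in\mathscr{V}$, $\int\sum_{q}\frac{\partial g}{\partial u^{j,q}}\,\partial^{q}b=\int\bigl(\sum_{q}(-\partial)^{q}\frac{\partial g}{\partial u^{j,q}}\bigr)b=\int\frac{\delta g}{\delta u^{j}}\,b$, the powers of $\partial$ being exactly those attached to $\frac{\partial g}{\partial u^{j,q}}$ in~\eqref{MF}. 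Taking $b=\sum_{i}H_{ji}(\partial)\frac{\delta f}{\delta u^{i}}$, produced as in part~(1), yields $\{\int f,\int g\}=\int\sum_{i,j}\frac{\delta g}{\delta u^{j}}H_{ji}(\partial)\frac{\delta f}{\delta u^{i}}=\int\frac{\delta g}{\delta u}\cdot H\frac{\delta f}{\delta u}$.

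The main obstacle is purely organizational: after setting $\lambda=0$ in~\eqref{MF} one must separate the occurrences of $\partial$ into those internal to the bracket $\{u^{i}{}_{\partial+\lambda}u^{j}\}_{\rightarrow}$, which survive as the differential operator $H_{ji}(\partial)$, and those external to it, attached to $\frac{\partial g}{\partial u^{j,q}}$, which are the ones discharged by integration by parts to build $\frac{\delta g}{\delta u^{j}}$. Once this bookkeeping is carried out, both parts follow from Definition~\ref{dfn:PS} together with the two halves of the Basic Lemma, so no further structural property of the Poisson vertex algebra is required.
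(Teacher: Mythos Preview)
Your argument is correct and is essentially the standard derivation of these formulas from the Master Formula and the Basic Lemma. Note, however, that the paper does not actually supply a proof of this theorem: it is stated in the preliminaries (Section~\ref{secpre2}) as a known result drawn from \cite{K1} and \cite{BDSK09}, with no proof environment following it. So there is no ``paper's own proof'' to compare against; your write-up simply fills in the omitted standard computation, and does so accurately.
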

Through the above discussions, we re-formulate Liouville integrability of a field-theoretic system in a rigorous way.
\begin{dfn}\label{dfn:LI}
A Hamiltonian system $u_t(x,t)=\{\int h,u\}$. We say this system is Liouville integrable if $\int h$ is contained in an infinite-dimensional Poisson commutative
subalgebra in $\bar{\mathscr{V}}$ with respect to $\{\cdot,\cdot\}$.
\end{dfn}
\begin{exa}\label{exa:KdV}
Let $\mathscr{V}:=\mathbb{C}[u,u',u'',\cdots]$ be a Poisson vertex algebra with $\lambda$-bracket $\{u_{\lambda}u\}=\lambda$. Take $h=\int \frac{1}{2} u_x^2+u^3 d x,$ then $\frac{d u}{d t}=6uu_x+u_{xxx}$. The Poisson vertex algebra here we use can be obtained by taking the classical limit of Heisenberg vertex algebra \cite{K1}. The integrability of the KdV equation can be argued by its bi-Hamiltonian structure, which was first discovered by Faddeev and Zakharov, see \cite{FZ}.
\end{exa}
\section{Modified Yang-Baxter equations of vertex Lie algebra}\label{sec:MYBE}
\begin{dfn}\label{dfn:rmatrix}
Let $(L,Y^{-},T)$ be a vertex Lie algebra. A linear map $R_{ L}\in \End L$ is called an $R$-matrix for the vertex Lie algebra, if 
\begin{enumerate}
\item[1.] $(L,Y^{-}_{R_L}(?,z),T)$ is a vertex Lie algebra where  
\begin{equation*}\label{eq:rmatrix}
Y^{-}_{R_{L}}(u,z):=Y^{-}(R_{L}(u),z)+Y^{-}(u,z)R_{L},	
\end{equation*}
\item[2.] $R_L$ satisfies
\begin{equation}\label{eq:4.1}
[R_{L},T]=0.	
\end{equation}
\end{enumerate}
We will denote the new vertex Lie algebra simply by $L_{R_{L}}$ and call it the $R_{L}$-twisted vertex Lie algebra.
\end{dfn}
\begin{thm}
Let $R_{L}\in \mathrm{End} L$ satisfy \eqref{eq:4.1}, then, each of the following conditions is sufficient for $R_{L}$ to be an $R$-matrix,
\begin{align}
&Y^{-}(R_{L}(u),z)R_{L}=R_{L}Y^{-}_{R_{L}}(u,z)\ \forall u\in L\label{CCYB},\\
&Y^{-}(R_{L}(u),z)R_{L}-R_{L}Y^{-}_{R_{L}}(u,z)=-Y^{-}(u,z)\ \forall u\in L\label{MCYB}.	
\end{align}
The equation \eqref{CCYB} is called the constant classical Yang-Baxter equation for the vertex Lie algebra L, and the equation \eqref{MCYB} is called the modified Yang-Baxter equation $(\mathrm{mYBE})$ for the vertex Lie algebra L.
\end{thm}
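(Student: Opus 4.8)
The plan is to verify that the twisted triple $(L,Y^{-}_{R_{L}},T)$ satisfies the three axioms of a vertex Lie algebra in Definition \ref{def2.1}. By the dictionary of Remark \ref{rmk:lambda} it is equivalent, and notationally cleaner, to check the three Lie conformal algebra axioms for the twisted $\lambda$-bracket
\[
[u_\lambda v]_{R}:=[R(u)_\lambda v]+[u_\lambda R(v)],
\]
where I abbreviate $R:=R_{L}$; this is precisely the $\lambda$-bracket attached to $Y^{-}_{R_{L}}(u,z)=Y^{-}(R(u),z)+Y^{-}(u,z)R$. Throughout, $[\cdot_\lambda\cdot]$ denotes the original bracket of $L$, which already satisfies sesquilinearity, skew-symmetry and the Jacobi identity.

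First I would dispose of the two easier axioms. Skew-symmetry of $[\cdot_\lambda\cdot]_{R}$ follows directly from that of $[\cdot_\lambda\cdot]$: applying $[x_\lambda y]=-[y_{-\lambda-T}x]$ to each of the two summands and regrouping yields $[a_\lambda b]_{R}=-[b_{-\lambda-T}a]_{R}$, with no further hypothesis needed. Sesquilinearity is where \eqref{eq:4.1} enters: for $[Ta_\lambda b]_{R}=[R(Ta)_\lambda b]+[Ta_\lambda R(b)]$ to collapse to $-\lambda[a_\lambda b]_{R}$ one needs $R(Ta)=T(R a)$, that is $[R,T]=0$, and symmetrically for $[a_\lambda Tb]_{R}$.

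The heart of the matter is the Jacobi identity, and this is where the two Yang-Baxter conditions are used. I introduce the defect
\[
B(u,v):=[R(u)_\lambda R(v)]-R\big([u_\lambda v]_{R}\big),
\]
so that, after passing from $Y^{-}$ to the $\lambda$-bracket, \eqref{CCYB} reads $B\equiv 0$ and \eqref{MCYB} reads $B(u,v)=-[u_\lambda v]$. The key computation is to expand the twisted Jacobiator
\[
P(a,b,c):=[a_\lambda[b_\mu c]_{R}]_{R}-[b_\mu[a_\lambda c]_{R}]_{R}-[[a_\lambda b]_{R}{}_{\lambda+\mu}c]_{R}
\]
by repeatedly inserting the definition of $[\cdot_\lambda\cdot]_{R}$ and then using the rewriting $[R(u)_\lambda R(v)]=R([u_\lambda v]_{R})+B(u,v)$ to pull every doubled occurrence of $R$ outside the bracket at the cost of a $B$-term. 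Mirroring the finite-dimensional argument, all the genuinely $R$-applied pieces should recombine into $R$ applied to the original Jacobiator of $[\cdot_\lambda\cdot]$, the residual lower terms cancelling by sesquilinearity and skew-symmetry, and hence vanish; what survives is a combination of terms of the shape $[B(\cdot,\cdot)_{?}\,\cdot]$. The conclusion is then immediate in both cases: if \eqref{CCYB} holds then $B\equiv 0$ and $P\equiv 0$; if \eqref{MCYB} holds then each surviving term becomes $-[[\cdot_\lambda\cdot]_{?}\,\cdot]$, and their combination is precisely a $\lambda$-shifted instance of the original Jacobi identity, so $P\equiv 0$ again.

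I expect the main obstacle to be the bookkeeping in the expansion of $P$. Unlike the Lie algebra case, where the Jacobiator is a genuine cyclic sum and the cross terms cancel by a single application of the Jacobi identity, here the formal variables $\lambda,\mu$ and the operator $T$ sit inside the brackets, so one must track the spectral parameters carefully when invoking sesquilinearity and skew-symmetry, in particular the $\lambda+\mu$ and $-\lambda-T$ shifts, to be certain that the $R$-applied pieces really reassemble into the original Jacobi identity and that the residual $B$-terms carry the correct arguments and spectral shifts. Arranging the computation so that, under \eqref{MCYB}, the substitution $B=-[\cdot_\lambda\cdot]$ lands exactly on the original Jacobi identity is the delicate step; once that is set up, both sufficiency statements follow simultaneously.
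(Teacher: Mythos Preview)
Your approach is essentially the paper's proof translated into $\lambda$-bracket language: the paper works with the half-Jacobi identity \eqref{Eq:HJI}, introduces the defect $Y^{-,*}_{R_{L}}(u,z):=Y^{-}(R_{L}(u),z)R_{L}-R_{L}Y^{-}_{R_{L}}(u,z)$ (your $B$), and shows by direct expansion that the twisted Jacobiator equals a combination of three $Y^{-,*}$-terms, which vanish under \eqref{CCYB} trivially and under \eqref{MCYB} by one more application of the original Jacobi identity. The paper also omits skew-symmetry and translation as obvious, so your inclusion of those is a minor addition.

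One point in your outline is misstated and would cause trouble if you follow it literally. The $R$-applied pieces do \emph{not} ``recombine into $R$ applied to the original Jacobiator''; rather, after you rewrite $[R(u)_\lambda R(v)]=R([u_\lambda v]_R)+B(u,v)$ only in the \emph{outermost} slot (i.e.\ in the term $[R[a_\lambda b]_R{}_{\lambda+\mu}c]$), the nine surviving double-$R$ terms organise themselves into \emph{three separate} instances of the original Jacobi identity, one for each choice of which argument stays undressed (in the paper these are exactly the three identities \eqref{eq:4.6}). There is no residual ``lower-order'' cancellation by sesquilinearity or skew-symmetry to perform. Once you make that correction, the remaining $B$-terms are precisely $[a_\lambda B(b,c)]-[b_\mu B(a,c)]-[B(a,b)_{\lambda+\mu}c]$, and your conclusion for both \eqref{CCYB} and \eqref{MCYB} goes through as you say.
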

\begin{proof}
In order to show this theorem, we only need to find sufficient conditions for $Y^{-}_{R_{L}}$ to satisfy the Yang-Baxter equation.
For convenience, we introduce the following notations:
\begin{equation*}
A:=x^{-1}\delta(\frac{y-z}{x}),\ B:=x^{-1}\delta(\frac{z-y}{-x}),\ C:=z^{-1}\delta(\frac{y-x}{z}).
\end{equation*}	
Under this notation, the half Jacobi identity of vertex Lie algebra $(L, Y^{-}, T)$ reads
\begin{equation*}
	(AY^{-1}(u,y)Y^{-}(v,z)-BY^{-}(v,z)Y^{-}(u,y))_{-}=(CY^{-}(Y^{-}(u,x)v,z))_{-},\ \forall u,v\in L,
\end{equation*}
and we have
\begin{equation}\label{eq:4.4} 
\begin{aligned}
\lefteqn{(AY^{-}_{R_{L}}(u,y)Y^{-}_{R_{L}}(v,z)-BY^{-}_{R_{L}}(v,z)Y^{-}_{R_{L}}(u,y))_{-}}\\
&=(AY^{-}(R_{L}(u),y)Y^{-}(R_{L}(v),z)-BY^{-}(R_{L}(v),z)Y^{-}(R_{L}(u),y))_{-}\\
   &+(A[Y^{-}(R_{L}(u),y)Y^{-}(v,z)R_{L}+Y^{-}(u,y)R_{L}Y^{-}(R_{L}(v),z)+Y^{-}(u,y)R_{L}Y^{-}(v,z)R_{L}])_{-}\\
   &-(B[Y^{-}(R_{L}(v),z)Y^{-}(u,y)R_{L}+Y^{-}(v,z)R_{L}Y^{-}(u,x)R_{L}+Y^{-}(v,z)R_{L}Y^{-}(R_{L}(u),y)])_{-}\\
   &{=}(CY^{-}(Y^{-}(R_{L}(u),x)v),z)_{-}+(A[Y^{-}(R_{L}(u),y)Y^{-}(v,z)R_{L}+Y^{-}(u,y)R_{L}Y^{-}(R_{L}(v),z)_{-}\\
   &+Y^{-}(u,y)R_{L}Y^{-}(v,z)R_{L}])_{-}-(B[Y^{-}(R_{L}(v),z)Y^{-}(u,y)R_{L}+Y^{-}(v,z)R_{L}Y^{-}(u,x)R_{L}\\
   &+Y^{-}(v,z)R_{L}Y^{-}(R_{L}(u),y)])_{-},\qquad{(\hbox{by\ the\ half-Jacobi\ identity\ of\ $(L, Y^{-}, T)$}}.
\end{aligned}
\end{equation}
In addition,
\begin{equation}\label{eq:4.5}
\begin{aligned}
\lefteqn{(CY^{-}_{R_{L}}(Y^{-}_{R_{L}}(u,x)v,z))_{-}=(CY^{-}_{R_{L}}(Y^{-}(R_{L}(u),x)+Y^{-}(u,x)R_{L}v,z))_{-}}\\
&=(C[R_{L}Y^{-}(R_{L}(u),x)v,z)+Y^{-}(R_{L}(u),x)v,z)R_{L}\\&+Y^{-}(R_{L}Y^{-}(u,x)R_{L}(v),z)+Y^{-}(Y^{-}(u,x)R_{L}(v),z)R_{L}])_{-}.
\end{aligned}
\end{equation}
Thanks to the half Jacobi identity of $(L, Y^{-}, T),$ we have the following three identities:
\begin{equation}\label{eq:4.6}
\left\{
{
\begin{aligned}
\lefteqn{(CY^{-}(Y^{-}(R_{L}(u),x)R_{L}(v),z))_{-}}\\&=(AY^{-}(R_{L}(u),y)Y^{-}(R_{L}(v),z))_{-}-(BY^{-}(R_{L}(v),z)Y^{-}(R_{L}(u),z))_{-},\\
\lefteqn{(CY^{-}(Y^{-}(R_{L}(u),x)v,z)R_{L})_{-}}\\&=(AY^{-}(R_{L}(u),y)Y^{-}(v,z)R_{L})_{-}-(BY^{-}(v,z)Y^{-}(R_{L}(u),y)R_{L})_{-},\\
\lefteqn{(CY^{-}(Y^{-}(u,x)R_{L}(v),z)R_{L})_{-}}\\&=(AY^{-}(u,y)Y^{-}(R_{L}(v),z)R_{L})_{-}-(BY^{-}(R_{L}(v),z)Y^{-}(u,y)R_{L})_{-}.	
\end{aligned}
}
\right.
\end{equation}
Therefore, letting $Y^{-,*}_{R_{L}}(u,z):=Y^{-}(R_{L}(u),z)R_{L}-R_{L}Y^{-}_{R_{L}}(u,z)\ \forall u\in L,$ and regarding \eqref{eq:4.6}, we see \eqref{eq:4.4}-\eqref{eq:4.5} equals to
\begin{equation}\label{eq:4.9}
(CY^{-}(Y^{-,*}_{R_{L}}(u,x)v,z)-AY^{-}(u,y)Y_{R_{L}}^{-,*}(v,z)+BY^{-}(v,z)Y_{R_{L}}^{-,*}(u,y))_{-}.
\end{equation}
Note that by definition of the half Jacobi identity, the half Jacobi identity holds for $(L,Y^{-}_{R_{L}},T)$ if and only if \eqref{eq:4.9}=0. Hence, if $Y^{-,*}_{R_{L}}(?,z)=0,$ the operator $Y_{R_{L}}^{-}(?,z)$ satisfies the half Jacobi identity. Also, note that if $Y^{-,*}_{R_{L}}(u,z)=aY^{-,*}(u,z),\ \forall u\in L$ and some $a\in\mathbb{C}^{\times},$ by the half Jacobi identity of vertex algebra $(L,Y^{-},T)$, we have \eqref{eq:4.9} equals to
\begin{equation}\label{eq:4.10}
\begin{aligned}
a((AY^{-}(u,y)Y^{-}(v,z))_{-}-(BY^{-}(v,z)Y^{-}(u,y))_{-}-(CY^{-}(Y^{-}(u,x)v,z))_{-})=0
\end{aligned}
\end{equation}
$\forall u,v\in L$. Hence, if $Y^{-,*}_{R_{L}}(u,z)=aY^{-,*}(u,z),\ \forall u\in L$ and some $a\in\mathbb{C}^{\times},$ the half Jacobi identity holds for the vertex Lie algebra $(L,Y^{-,*}_{R_{L}},T)$. By re-scaling the operator $R_{L}$, i.e., replacing $R_{L}$ by $\sqrt{a}R_{L}$, one translates \eqref{eq:4.10} to 
\begin{equation*}
Y^{-,*}_{R_{L}}(?,z)=-Y^{-}(?,z).	
\end{equation*}
Therefore, we conclude that 
\begin{enumerate}
\item $Y^{-}(R_{L}(u),z)R_{L}=R_{L}Y^{-}_{R_{L}}(u,z),\ \forall u\in L$,\\
\item $Y^{-}(R_{L}(u),z)R_{L}-R_{L}Y^{-}_{R_{L}}(u,z)=-Y^{-}(u,z),\ \forall u\in L,$	
\end{enumerate}
are two sufficient conditions for $R_{L}$ to be an $R$-matrix for vertex Lie algebra.
\end{proof}
\section{Factorization theorems}\label{sec:PFT}
We first fix some notations. In this section we consider a vertex Lie algebra $(L, Y^{-}, T),$ and let $R_{L}$ be an $R$-matrix of $L$ satisfying the modified classical Yang-Baxter equation \eqref{MCYB}. Set
\begin{equation}\label{Eq:5.1}
L_{\pm}:=(R_{L}\pm1)(L),\hspace{1cm} \mathscr{L}_{\pm}:=\ker(R_{L}\mp1).	
\end{equation}
\subsection{Factorization theorem A}\label{subsec:FT}
In this section, we state and prove factorization theorem A.
\begin{lemm}\label{lem5.1}
$(L_{\pm},Y^{-},T)$ are vertex Lie subalgebras of $(L,Y^{-},T)$.
\end{lemm}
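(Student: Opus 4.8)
The plan is to verify directly the two defining properties of a vertex Lie subalgebra of $(L,Y^{-},T)$: that $L_{\pm}$ is $T$-invariant, and that it is closed under all the $n$-th products $a_{(n)}b$, $n\geq0$, equivalently that $Y^{-}(a,z)b\in L_{\pm}\otimes z^{-1}\C[[z^{-1}]]$ whenever $a,b\in L_{\pm}$. The $T$-invariance is immediate from \eqref{eq:4.1}: since $[R_{L},T]=0$, the operator $R_{L}\pm1$ commutes with $T$, so $TL_{\pm}=T(R_{L}\pm1)(L)=(R_{L}\pm1)(TL)\subseteq(R_{L}\pm1)(L)=L_{\pm}$. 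Thus only the closure under the products requires real work.

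For the closure I would isolate the single operator identity
\begin{equation*}
Y^{-}\big((R_{L}\pm1)X,z\big)\,(R_{L}\pm1)Y=(R_{L}\pm1)\big(Y^{-}_{R_{L}}(X,z)Y\big),\qquad\forall X,Y\in L.
\end{equation*}
This is the vertex-algebra incarnation of the classical fact that, when $R$ solves the modified Yang--Baxter equation, $R\pm1$ intertwines the twisted bracket $[\cdot,\cdot]_{R}$ with the original bracket, the twisted field $Y^{-}_{R_{L}}$ now playing the role of $[\cdot,\cdot]_{R}$. To prove it I would first read \eqref{MCYB} as the operator identity $Y^{-}(R_{L}X,z)(R_{L}Y)=R_{L}\big(Y^{-}_{R_{L}}(X,z)Y\big)-Y^{-}(X,z)Y$, obtained by applying both sides to $Y$ and recalling $Y^{-}_{R_{L}}(X,z)Y=Y^{-}(R_{L}X,z)Y+Y^{-}(X,z)(R_{L}Y)$. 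Then I would expand the left-hand side by linearity of $Y^{-}$ in its first slot together with $(R_{L}\pm1)Y=R_{L}Y\pm Y$, producing four terms; substituting the rewritten cross term $Y^{-}(R_{L}X,z)(R_{L}Y)$ makes the two bare $Y^{-}(X,z)Y$ contributions cancel, leaving $R_{L}\big(Y^{-}_{R_{L}}(X,z)Y\big)\pm\big(Y^{-}(R_{L}X,z)Y+Y^{-}(X,z)(R_{L}Y)\big)$, which is exactly $(R_{L}\pm1)\big(Y^{-}_{R_{L}}(X,z)Y\big)$ because the parenthesized sum is $Y^{-}_{R_{L}}(X,z)Y$.

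With this identity, closure is automatic: each coefficient of $Y^{-}_{R_{L}}(X,z)Y$ lies in $L$, so applying $R_{L}\pm1$ coefficientwise puts $Y^{-}\big((R_{L}\pm1)X,z\big)(R_{L}\pm1)Y$ into $(R_{L}\pm1)(L)\otimes z^{-1}\C[[z^{-1}]]=L_{\pm}\otimes z^{-1}\C[[z^{-1}]]$; hence $a_{(n)}b\in L_{\pm}$ for all $a,b\in L_{\pm}$ and $n\geq0$. As the translation, skew-symmetry, and commutator axioms are inherited from $L$ by any $T$-stable subspace closed under the products, $(L_{\pm},Y^{-},T)$ are vertex Lie subalgebras. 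I do not anticipate a genuine obstacle here; the one point demanding care is the bookkeeping of the noncommuting composition in \eqref{MCYB} --- consistently reading $Y^{-}(R_{L}(u),z)R_{L}$ as ``first apply $R_{L}$ to the argument, then act by $Y^{-}(R_{L}(u),z)$'' --- so that the cross term is rewritten correctly; once \eqref{MCYB} is put in its ``acting on $Y$'' form, everything else is linear bookkeeping of coefficients.
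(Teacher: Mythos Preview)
Your proposal is correct and follows essentially the same approach as the paper: both establish the operator identity $(R_{L}\pm1)\,Y^{-}_{R_{L}}(u,z)=Y^{-}((R_{L}\pm1)u,z)(R_{L}\pm1)$ directly from the modified Yang--Baxter equation \eqref{MCYB} and the definition of $Y^{-}_{R_{L}}$. You are slightly more explicit than the paper in spelling out the $T$-invariance via $[R_{L},T]=0$ and in saying why the identity yields closure, but the core computation is the same.
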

\begin{proof}
Note that because of equation \eqref{MCYB}, $\forall u\in L$ we have, 
\begin{align*}
(R_{L}\pm1)(Y_{R_{L}}^{-}(u,z))&=R_{L}Y_{R_{L}}^{-}(u,z)\pm Y_{R_{L}}^{-}(u,z)\\
&=Y^{-}(R_{L}(u),z)R_{L}+Y^{-}(u,z)\pm(Y^{-}(R_{L}(u),z)+Y^{-}(u,z)R_{L})\\
&=Y^{-}((R_{L}\pm1)u,z)(R_{L}\pm1).
\end{align*}
This shows that $(L_{\pm},Y^{-},T)$ are vertex Lie subalgebras of $(L,Y^{-},T)$.
Hence, we complete the proof.
\end{proof}
Note that Lemma \ref{lem5.1} shows that the mappings $R_{L}\pm1$ are vertex Lie algebra homomorphisms from $(L,Y^{-}(?,z),T)$ to $(L_{\pm},Y^{-}(?,z),T)$. According to Lemma \ref{lem5.1}, we obtain the following corollary.
\begin{cor}
The subspaces $\mathscr{L}_{\pm}$ are vertex Lie algebra ideals of $(L,Y^{-}_{R_{L}},T)$ and $L_{\pm}\cong L/\mathscr{L}_{\mp}$ as vertex Lie algebras.
\end{cor}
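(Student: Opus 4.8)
\emph{Proof strategy.} The entire statement will follow from the operator identity already produced in the proof of Lemma~\ref{lem5.1}, namely
\begin{equation*}
(R_{L}\pm1)\,Y_{R_{L}}^{-}(u,z)=Y^{-}\big((R_{L}\pm1)u,z\big)\,(R_{L}\pm1),\qquad\forall u\in L,
\end{equation*}
combined with the first isomorphism theorem for vertex Lie algebras (the evident analogue of the Lie-algebra statement). First I would read this identity, evaluated on an arbitrary $v\in L$, as the assertion that $\phi_{\pm}:=R_{L}\pm1$ satisfies $\phi_{\pm}\big(Y_{R_{L}}^{-}(u,z)v\big)=Y^{-}\big(\phi_{\pm}(u),z\big)\phi_{\pm}(v)$; together with $[T,\phi_{\pm}]=[T,R_{L}]=0$ from \eqref{eq:4.1}, Definition~\ref{def2.2} then shows that $\phi_{\pm}$ is a weak vertex Lie algebra homomorphism from the twisted algebra $(L,Y_{R_{L}}^{-},T)$ to the subalgebra $(L_{\pm},Y^{-},T)$ (which is a vertex Lie algebra by Lemma~\ref{lem5.1}). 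It is surjective onto $L_{\pm}$ by the very definition $L_{\pm}=(R_{L}\pm1)(L)$.

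Next I would compute the kernels. Directly from the definition $\mathscr{L}_{\pm}=\ker(R_{L}\mp1)$ one reads off $\ker\phi_{\pm}=\ker(R_{L}\pm1)=\mathscr{L}_{\mp}$. Since $\phi_{\pm}$ is a homomorphism out of $(L,Y_{R_{L}}^{-},T)$, its kernel is automatically an ideal there: for $b\in\ker\phi_{\pm}$ and any $a\in L$ we have $\phi_{\pm}\big(Y_{R_{L}}^{-}(a,z)b\big)=Y^{-}\big(\phi_{\pm}(a),z\big)\phi_{\pm}(b)=0$, so every coefficient of $Y_{R_{L}}^{-}(a,z)b$ lies in $\ker\phi_{\pm}$, while $T$-invariance of $\ker\phi_{\pm}$ follows from $[T,\phi_{\pm}]=0$. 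As $\pm$ is arbitrary, this already proves that both $\mathscr{L}_{+}$ and $\mathscr{L}_{-}$ are ideals of $(L,Y_{R_{L}}^{-},T)$.

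Finally I would invoke the first isomorphism theorem. Because $\mathscr{L}_{\mp}$ is an ideal, the quotient $(L/\mathscr{L}_{\mp},\overline{Y_{R_{L}}^{-}},\overline{T})$ inherits a vertex Lie algebra structure, and the surjection $\phi_{\pm}$ descends to a bijective homomorphism $\overline{\phi}_{\pm}:L/\mathscr{L}_{\mp}\xrightarrow{\ \sim\ }L_{\pm}$, i.e.\ $L_{\pm}\cong L/\mathscr{L}_{\mp}$ as vertex Lie algebras. I expect no genuine difficulty in the algebra; the one point demanding care is purely bookkeeping, namely that the source, and hence the quotient $L/\mathscr{L}_{\mp}$, carries the \emph{twisted} bracket $Y_{R_{L}}^{-}$ whereas the image $L_{\pm}$ carries the \emph{original} bracket $Y^{-}$, so that $\overline{\phi}_{\pm}$ is exactly the identification of these two a priori different structures. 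Making the quotient construction and the induced-map argument precise for vertex Lie algebras, in exact parallel with the Lie-algebra case, is the only step here that is not a one-line deduction.
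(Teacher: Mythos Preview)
Your argument is correct and is exactly the route the paper intends: the paper itself only says ``According to Lemma~\ref{lem5.1}, we obtain the following corollary,'' having just observed that the identity $(R_{L}\pm1)Y_{R_{L}}^{-}(u,z)=Y^{-}((R_{L}\pm1)u,z)(R_{L}\pm1)$ makes $R_{L}\pm1$ a vertex Lie algebra homomorphism. You have simply written out the first-isomorphism-theorem consequence that the paper leaves implicit.
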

\begin{lemm}\label{lem5.3}
The subspaces $\mathscr{L}_{\pm}$ are vertex Lie algebra ideals of $(L_{\pm},Y^{-},T)$.
\end{lemm}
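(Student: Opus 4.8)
The plan is to bootstrap from the preceding corollary, which already says that $\mathscr{L}_\pm$ are ideals of $(L,Y^-_{R_L},T)$ for the \emph{twisted} product, and to transfer this to an ideal statement for the \emph{untwisted} product on the subalgebra $L_\pm$. The mechanism is that $R_L$ acts as the scalar $\pm1$ on $\mathscr{L}_\pm$, which collapses the twisted action into an untwisted one with a modified first argument.

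First I would dispatch two preliminary points. Since $[R_L,T]=0$ by \eqref{eq:4.1}, the operator $R_L\mp1$ commutes with $T$, so $\mathscr{L}_\pm=\ker(R_L\mp1)$ is $T$-invariant. Next, $\mathscr{L}_\pm\subseteq L_\pm$: if $b\in\mathscr{L}_+$ then $R_Lb=b$, hence $b=\tfrac12(R_L+1)b\in(R_L+1)(L)=L_+$ by \eqref{Eq:5.1}, and symmetrically $\mathscr{L}_-\subseteq L_-$. Thus the assertion that $\mathscr{L}_\pm$ is an ideal of $L_\pm$ is at least well-posed.

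The key computation restricts the twisted product so that its second slot lies in $\mathscr{L}_\pm$. For $b\in\mathscr{L}_+$ we have $R_Lb=b$, so for any $c\in L$,
$$Y^-_{R_L}(c,z)b=Y^-(R_L c,z)b+Y^-(c,z)R_L b=Y^-((R_L+1)c,z)b,$$
and likewise for $b\in\mathscr{L}_-$ (where $R_L b=-b$) one finds $Y^-_{R_L}(c,z)b=Y^-((R_L-1)c,z)b$. In words: the twisted action of $L$ on $\mathscr{L}_\pm$ agrees with the untwisted action of $L_\pm=(R_L\pm1)(L)$ on $\mathscr{L}_\pm$, with matching signs.

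To conclude, I would invoke the corollary: for every $c\in L$ and $b\in\mathscr{L}_\pm$, the series $Y^-_{R_L}(c,z)b$ lies in $\mathscr{L}_\pm\otimes z^{-1}\mathbb{C}[[z^{-1}]]$. Combined with the key computation, $Y^-((R_L\pm1)c,z)b$ lies in $\mathscr{L}_\pm\otimes z^{-1}\mathbb{C}[[z^{-1}]]$; and as $c$ ranges over $L$, the element $(R_L\pm1)c$ sweeps out all of $L_\pm$, so this says exactly that $Y^-(a,z)b$ stays in $\mathscr{L}_\pm$ for all $a\in L_\pm$, $b\in\mathscr{L}_\pm$. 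Together with $T$-invariance and skew-symmetry (which makes left and right ideals coincide), this proves $\mathscr{L}_\pm$ is a vertex Lie algebra ideal of $(L_\pm,Y^-,T)$. I do not anticipate a serious obstacle here; the only point demanding care is tracking the sign so that, for $b\in\mathscr{L}_\pm$, the modified first argument $(R_L\pm1)c$ lands in the \emph{matching} image $L_\pm$ rather than in $L_\mp$.
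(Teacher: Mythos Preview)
Your proof is correct, and it takes a genuinely different route from the paper's. The paper proves Lemma~\ref{lem5.3} by a direct computation with the mYBE: it expands $(R_L\mp1)Y^-((R_L\pm1)v,z)u$ for $u\in\mathscr{L}_\pm$, uses $R_L u=\pm u$, regroups the four cross-terms into $[R_L Y^-_{R_L}(v,z)-Y^-(R_L v,z)R_L-Y^-(v,z)]u$, and then invokes \eqref{MCYB} to kill this bracket. Your argument instead bootstraps from the preceding corollary (that $\mathscr{L}_\pm$ is already an ideal for the \emph{twisted} product on all of $L$) via the clean identity $Y^-_{R_L}(c,z)b=Y^-((R_L\pm1)c,z)b$ whenever $b\in\mathscr{L}_\pm$. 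Since the corollary was itself derived from Lemma~\ref{lem5.1} (where the mYBE already did its work), your route avoids re-running the mYBE computation and is a bit more economical; the paper's route is more self-contained in that it does not appeal back to the homomorphism picture. Both land on exactly the same conclusion, and your sign-tracking---that $b\in\mathscr{L}_\pm$ forces the modified first argument into the \emph{matching} $L_\pm$---is correct.
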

\begin{proof}
First, we verify that $\mathscr{L}_{\pm}\subset L_{\pm}.$ This follows from $\forall u\in\mathscr{L}_{\pm}, (R_{L}\mp1)u=0,$ hence $R_{L}(u)=\pm u, and\ \pm u=\frac{1}{2}(R_{L}\pm1)(u)\in L_{\pm}.$\\
Second, we show $\forall v\in L,$ the identity $(R_{L}\mp1)Y^{-}((R_{L}\pm1)v,z)u=0,\ \forall u\in\mathscr{L}_{\pm}$ holds. This follows from
\begin{align}
\lefteqn{(R_{L}\mp1)Y^{-}((R_{L}\pm1)v,z)u}\nonumber\\&=R_{L}Y^{-}(R_{L}(v),z)u\pm R_{L}Y^{-}(v,z)u\mp Y^{-}(R_{L}(v),z)u-Y^{-}(v,z)u\\
&=R_{L}Y^{-}(R_{L}(v),z)u-Y^{-}(v,z)u+R_{L}Y^{-}(v,z)R_{L}(u)-Y^{-}(R_{L}(v),z)R_{L}(u)\nonumber\\
&=[R_{L}Y^{-}_{R_{L}}(v,z)-Y^{-}(R_{L}(v),z)R_{L}-Y^{-}(v,z)]u\nonumber\\
&=0,	\nonumber
\end{align}
because of the modified Yang-Baxter equation \eqref{MCYB} and $R_{L}(u)=\pm u,\ \forall u\in\mathscr{L}_{\pm}$. The inclusion $T\mathscr{L}_{\pm}\subset\mathscr{L}_{\pm}$ follows from $[T,R_{L}]=0.$ Thus, we finished the proof.
\end{proof}
Let $\overline{\phantom{L}}:L_{\pm}\longrightarrow L_{\pm}/\mathscr{L}_{\pm}$ be the quotient map, and $\forall u,v\in L_{\pm}$, we define
\begin{align*}
&\overline{T}\overline{u}:=\overline{Tu},\ \overline{Y}^{-}(\overline{u},z)v:=\sum_{n\geq0}\overline{u_{n}v}z^{-n-1}.
\end{align*}
Note that by Lemma \ref{lem5.3}, we see that $(L_{\pm}/\mathscr{L}_{\pm},\overline{T},\overline{Y}^{-})$	are vertex Lie algebras.
\begin{lemm}
The map
\begin{equation}\label{Eq:5.7}
\theta(\overline{(R_{L}+1)u}):=\overline{(R_{L}-1)u},\ \forall u\in L,	
\end{equation}
is a vertex Lie algebra isomorphism from $(L_{+}/\mathscr{L}_{+},\overline{Y}^{-},\overline{T})$ to $(L_{-}/\mathscr{L}_{-},\overline{Y}^{-},\overline{T})$.
\end{lemm}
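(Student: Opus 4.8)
The plan is to verify in turn that $\theta$ is well-defined, bijective, and compatible with both $\overline{T}$ and $\overline{Y}^-$. The unifying observation is that $R_L+1$ and $R_L-1$ commute and share the common product
\[
(R_L-1)(R_L+1)=(R_L+1)(R_L-1)=R_L^2-1,
\]
so that membership $(R_L+1)w\in\mathscr{L}_+=\ker(R_L-1)$ and membership $(R_L-1)w\in\mathscr{L}_-=\ker(R_L+1)$ are \emph{the same} condition, namely $(R_L^2-1)w=0$. This single algebraic identity is what makes the two sides match up.

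First I would check well-definedness. Every class in $L_+/\mathscr{L}_+$ has the form $\overline{(R_L+1)u}$ since $L_+=(R_L+1)(L)$ by definition and the quotient map is surjective, so $\theta$ is prescribed on all of $L_+/\mathscr{L}_+$; the issue is that the representative $u$ is not unique. If $\overline{(R_L+1)u}=\overline{(R_L+1)u'}$, then $(R_L+1)(u-u')\in\mathscr{L}_+$, i.e.\ $(R_L^2-1)(u-u')=0$ by the identity above, which is exactly the condition $(R_L-1)(u-u')\in\mathscr{L}_-$; hence $\overline{(R_L-1)u}=\overline{(R_L-1)u'}$ and $\theta$ is well-defined. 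Reading the same equivalence backwards gives injectivity, surjectivity is immediate from $L_-=(R_L-1)(L)$, and linearity is clear. The $\overline{T}$-compatibility is then immediate from $[T,R_L]=0$, which lets $T$ pass through $R_L\pm1$: $\theta(\overline{T}\,\overline{(R_L+1)u})=\theta(\overline{(R_L+1)Tu})=\overline{(R_L-1)Tu}=\overline{T}\,\theta(\overline{(R_L+1)u})$.

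The substantive step is the $\overline{Y}^-$-compatibility, for which the engine is the operator identity extracted from the proof of Lemma \ref{lem5.1},
\[
(R_L\pm1)\,Y^-_{R_L}(u,z)\,v=Y^-\bigl((R_L\pm1)u,z\bigr)(R_L\pm1)v,
\]
which says precisely that $R_L\pm1$ intertwines the twisted field $Y^-_{R_L}$ on $L$ with $Y^-$ on $L_\pm$. Using the $+$ version I would rewrite $\overline{Y}^-(\overline{(R_L+1)u},z)\overline{(R_L+1)v}=\overline{(R_L+1)\,Y^-_{R_L}(u,z)\,v}$, apply $\theta$ to each coefficient of this power series to obtain $\overline{(R_L-1)\,Y^-_{R_L}(u,z)\,v}$, and then invoke the $-$ version to recognize the result as $\overline{Y}^-(\overline{(R_L-1)u},z)\overline{(R_L-1)v}=\overline{Y}^-(\theta(\overline{(R_L+1)u}),z)\,\theta(\overline{(R_L+1)v})$.

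The computation is essentially mechanical once Lemma \ref{lem5.1} is in hand; I expect the only point requiring genuine care to be well-definedness, i.e.\ confirming that the two a priori different defining conditions collapse to the single kernel condition $(R_L^2-1)w=0$. Conceptually, both $R_L+1$ and $R_L-1$ descend to surjective homomorphisms $L\to L_\pm/\mathscr{L}_\pm$ out of the \emph{twisted} vertex Lie algebra $(L,Y^-_{R_L},T)$, each with the common kernel $\ker(R_L^2-1)$, and $\theta$ is nothing but the resulting identification $L_+/\mathscr{L}_+\cong L/\ker(R_L^2-1)\cong L_-/\mathscr{L}_-$; this viewpoint also makes transparent why $\theta$ automatically respects all of the vertex Lie algebra structure rather than merely a single bracket.
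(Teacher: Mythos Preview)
Your proof is correct and follows essentially the same route as the paper: well-definedness and injectivity via the commutation $(R_L+1)(R_L-1)=(R_L-1)(R_L+1)$, $\overline{T}$-compatibility from $[T,R_L]=0$, and $\overline{Y}^-$-compatibility via the intertwining identity $(R_L\pm1)Y^-_{R_L}(u,z)=Y^-((R_L\pm1)u,z)(R_L\pm1)$. The only cosmetic difference is that you cite this intertwining identity from Lemma~\ref{lem5.1}, whereas the paper re-derives it in situ by expanding and invoking the mYBE; your closing remark that $\theta$ is the canonical identification $L_+/\mathscr{L}_+\cong L/\ker(R_L^2-1)\cong L_-/\mathscr{L}_-$ is a nice conceptual gloss the paper does not make explicit.
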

\begin{proof}
The well-definedness directly follows from \eqref{Eq:5.1}. For the injectivity, we have
\begin{equation*}
\begin{aligned}
\ker\theta&=\{(R_{L}+1)u\mod\mathscr{L}_{+}\mid u\in L,\  \overline{(R_{L}-1)u}=0\}\\
&=\{(R_{L}+1)u\mod\mathscr{L}_{+}\mid u\in L,\ (R_{L}-1)u\in\mathscr{L}_{-}=\ker(R_{L}+1)\}\\
&=\{(R_{L}+1)u\mod\mathscr{L}_{+}\mid u\in L,\ (R_{L}+1)(R_{L}-1)u=0\}\\
&=\{(R_{L}+1)u\mod\mathscr{L}_{+}\mid u\in L,\ (R_{L}-1)(R_{L}+1)u=0\}\\
&=\{0\}.
\end{aligned}
\end{equation*}
By definition, the mapping $\theta$ is also surjective, and so it is invertible. Now we verify that $[\overline{T},\theta]=0$. Because of $[T, R_{L}]=0$ and definition of $\theta$, we have
\begin{equation*}
\begin{aligned}
\lefteqn{\theta(\overline{T}\overline{(R_{L}+1)u})}\\&=\theta(\overline{T(R_{L}+1)u})\\
&=\theta(\overline{(R_{L}+1)Tu})\\
&=\overline{(R_{L}-1)Tu}\\
&=\overline{T}\theta(\overline{(R_{L}+1)u}),
\end{aligned}
\end{equation*}
$\forall u\in L$. On the other hand,
\begin{align*}
\lefteqn{\theta(\bar{Y}^{-}(\overline{(R_{L}+1)u},z)((R_{L}+1)v))}\\&=\theta(\overline{Y^{-}((R_{L}+1)u,z)(R_{L}+1)v})\\
&=\theta(\overline{Y^{-}(R_{L}u,z)R_{L}v+Y^{-}(u,z)R_{L}v+Y^{-}(u,z)v+Y^{-}(R_{L}(u),z)v})
\\
&=\theta(\overline{R_{L}Y^{-}_{R_{L}}(u,z)v+Y^{-}_{R_{L}}(u,z)v}) \qquad(\hbox{by\ mYBE \eqref{eq:MYBE}})\\
&=\theta(\overline{(R_{L}+1)Y^{-}_{R_{L}}(u,z)v})\\
&=\theta(\overline{(R_{L}+1)Y_{R_{L}}^{-}(u,z)v})\\
&=\overline{(Y^{-}(R_{L}u,z)R_{L}v+Y^{-}(u,z)v-Y^{-}(R_{L}u,z)v-Y^{-}(u,z)R_{L}v)}\qquad{(\hbox{by\ mYBE\ \eqref{eq:MYBE}})}\\
&=\overline{Y^{-}((R_{L}-1)u,z)(R_{L}-1)v}\\
&=\bar{Y}^{-}(\overline{(R_{L}-1)u},z)\overline{(R_{L}-1)v}\\
&=\bar{Y}^{-}(\theta(\overline{(R_{L}+1)u)},z)\theta(\overline{(R_{L}+1)v}).
\end{align*}
Therefore, we complete the proof.
\end{proof}
\begin{FTA}\label{FTA}
Consider a vertex Lie algebra $L$ and let $R_{L}$ be an $R$-matrix of L. Let $\nu:L\longrightarrow L_{+}\oplus L_{-}$ defined by 
\begin{equation*}
\nu(u):=((R_{L}+1)u,(R_{L}-1)u),\ \forall u\in L,	
\end{equation*}
and $\tau:L_{+}\oplus L_{-}\longrightarrow L$ by
\begin{equation*}
	\tau(u_{+},u_{-}):=u_{+}-u_{-},\ \forall u_{+}\in L_{+},\ u_{-}\in L_{-}.
\end{equation*}
\begin{enumerate}
\item[(i)] The map $\nu$ is an injective vertex Lie algebra homomorphism from $(L,Y^{-}_{R_{L}},T)$ to $(L_{+}\oplus L_{-},Y^{-}\oplus Y^{-},T\oplus T)$ and
\begin{equation*}
\nu(L)=\{(u_{+},u_{-})\in L_{+}\oplus L_{-}\mid\theta(\overline{u_{+}})=\overline{u_{-}}\}.
\end{equation*}
\item[(ii)] Every element $v\in L$ can be uniquely written as $v=v_{+}-v_{-}$ such that $(v_{+},v_{-})\in\nu(L)$.
\end{enumerate}
\end{FTA}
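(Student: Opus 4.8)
The plan is to obtain part (i) almost entirely from the intertwining relation already produced in the proof of Lemma \ref{lem5.1}, and to deduce part (ii) from the single identity $\tau\circ\nu=2\,\mathrm{id}_L$. First I would check that $\nu$ is a homomorphism from $(L,Y^-_{R_{L}},T)$ to $(L_{+}\oplus L_{-},Y^-\oplus Y^-,T\oplus T)$. Compatibility with translation is immediate from $[T,R_{L}]=0$, since $\nu(Tu)=((R_L+1)Tu,(R_L-1)Tu)=(T(R_L+1)u,T(R_L-1)u)$. For the fields, the identity established in Lemma \ref{lem5.1},
\[
(R_{L}\pm1)\,Y^-_{R_{L}}(u,z)v=Y^-\big((R_{L}\pm1)u,z\big)(R_{L}\pm1)v,
\]
applied in each of the two components shows directly that $\nu(Y^-_{R_{L}}(u,z)v)=(Y^-\oplus Y^-)(\nu(u),z)\nu(v)$. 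Injectivity is elementary: $\nu(u)=0$ forces $(R_L+1)u=(R_L-1)u=0$, and subtracting gives $2u=0$.

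The only genuinely non-formal step is the image characterization. The inclusion $\nu(L)\subseteq\{(u_+,u_-)\mid\theta(\overline{u_+})=\overline{u_-}\}$ is immediate from the defining formula \eqref{Eq:5.7} of $\theta$, since for $u_\pm=(R_L\pm1)u$ one has $\theta(\overline{u_+})=\theta(\overline{(R_L+1)u})=\overline{(R_L-1)u}=\overline{u_-}$. For the reverse inclusion I would argue as follows. Given $(u_+,u_-)\in L_+\oplus L_-$ with $\theta(\overline{u_+})=\overline{u_-}$, choose $a\in L$ with $(R_L+1)a=u_+$; then $\theta(\overline{u_+})=\overline{(R_L-1)a}$, so the hypothesis forces $s:=(R_L-1)a-u_-\in\mathscr{L}_-=\ker(R_L+1)$, that is $R_{L}s=-s$. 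The idea is then to correct $a$ by exactly half of this obstruction: set $u:=a+\tfrac12 s$. Since $s\in\ker(R_L+1)$ one gets $(R_L+1)u=u_+$, and since $R_{L}$ acts as $-1$ on $s$ one computes $(R_L-1)(\tfrac12 s)=-s$, whence $(R_L-1)u=(R_L-1)a-s=u_-$. Thus $\nu(u)=(u_+,u_-)$, which establishes the equality.

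Finally, part (ii) follows from the computation $\tau(\nu(u))=(R_L+1)u-(R_L-1)u=2u$, i.e. $\tau\circ\nu=2\,\mathrm{id}_L$. Given $v\in L$, taking $u=\tfrac12 v$ produces the decomposition $v=v_+-v_-$ with $(v_+,v_-)=\nu(\tfrac12 v)\in\nu(L)$; conversely, any decomposition $v=v_+-v_-$ with $(v_+,v_-)=\nu(u)\in\nu(L)$ forces $2u=v$, so that $u=\tfrac12 v$ and $(v_+,v_-)=((R_L+1)u,(R_L-1)u)$ are uniquely determined.

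The hard part, and the only place where more than bookkeeping is needed, is the reverse inclusion in the image characterization, where one must produce a simultaneous preimage under the two maps $R_{L}\pm1$. The half-correction by $s$ is precisely what makes this possible, and it relies essentially on the identifications $\mathscr{L}_\pm=\ker(R_L\mp1)$ together with the fact that $R_{L}$ acts as $\mp1$ on $\mathscr{L}_\pm$; everything else reduces to applying Lemma \ref{lem5.1} and the definition \eqref{Eq:5.7} of $\theta$.
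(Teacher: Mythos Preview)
Your proof is correct and follows essentially the same approach as the paper. The paper's own proof of (i) is terse, simply saying it ``follows from Lemma \ref{lem5.1}--Lemma \ref{lem5.3},'' whereas you spell out the image characterization explicitly; your half-correction argument for the reverse inclusion is exactly what is needed to fill that gap, and your treatment of (ii) via $\tau\circ\nu=2\,\mathrm{id}_L$ is a clean rephrasing of the paper's computation $w=\tfrac12(v_+-v_-)=\tfrac{v}{2}$.
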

\begin{proof}[Proof of factorization theorem A]
(i) follows from Lemma \ref{lem5.1}--Lemma \ref{lem5.3}. We show (ii) here. $\forall v\in L,$ we can write
\begin{equation*}
v=(R_{L}+1)(\frac{v}{2})-(R_{L}-1)(\frac{v}{2})\quad\hbox{and}\quad((R_{L}+1)\frac{v}{2},(R_{L}-1)\frac{v}{2})\in\nu(L).
\end{equation*}
To prove uniqueness, suppose that $v=v_{+}-v_{-}$ with $(v_{+},v_{-})\in\nu(L),$ then $\exists w\in L,$ such that $v_{+}=(R_{L}+1)(w),\ v_{-}=(R_{L}-1)w$. Note the element $w$ satisfies 
\begin{equation*}
	w=\frac{1}{2}[(R_{L}+1)w-(R_{L}-1)w)]=\frac{1}{2}(v_{+}-v_{-})=\frac{v}{2}.
\end{equation*}
This shows the uniqueness of the decomposition we claimed in the factorization theorem A.
\end{proof}
\subsection{Factorization theorem B}
In this section, we state and prove factorization theorem B.
\begin{FTB}\label{FTB}
Let $(L,Y^{-},T)$ be a vertex Lie algebra and let $(L_{+},Y^{-},T)\ and\ \\(L_{-},Y^{-},T)$ be two vertex Lie subalgebras. Also,
suppose $\widetilde{L}$ is a vertex Lie subalgebra of $(L_{+}\oplus L_{-},Y^{-}\oplus Y^{-},T\oplus T).$
\begin{enumerate}
\item[(i)] Suppose that every vector $u\in L$ can be uniquely expressed as
\begin{equation*}
u=u_{+}-u_{-}\ with\ (u_{+},u_{-})\in\widetilde{L}.
\end{equation*}
Define $R_{L}\in\End(L)$ by
\begin{equation*}
R_{L}(u)=u_{+}+u_{-}\ \mathit{for}\ u=u_{+}-u_{-},\ (u_{+},u_{-})\in\widetilde{L}.	
\end{equation*}
Then $R_{L}$ satisfies the modified Yang-Baxeter equation for vertex Lie algebra, i.e., $R_L$ satisfies
\begin{equation}\label{eq:MYBE}
Y^{-}(R_{L}(u),z)R_{L}-R_{L}Y^{-}_{R_{L}}(u,z)=-Y^{-}(u,z),\ \forall u\in L,
\end{equation}
 and $[R_{L},T]=0$.
\item[(ii)] Suppose, in addition, there are vertex Lie algebra ideals $\mathscr{L}_{\pm}\subset L_{\pm}$ (w.r.t. the vertex Lie algebra structures ($L_{\pm}$,$Y^{-}(?,z)$, T)) and a vertex Lie algebra isomorphism $\theta:L_{+}/\mathscr{L}_{+}\rightarrow L_{-}/\mathscr{L}_{-}$ such that
\begin{equation*}
\widetilde{L}=\{(u_{+},u_{-})\in L_{+}\oplus L_{-}\mid \theta(\overline{u_{+}})=\overline{u_{-}}\}.	
\end{equation*}
Then,
\begin{equation*}
L_{\pm}=(R_{L}\pm1)L,\ \mathscr{L}_{\pm}=\ker(R_{L}\mp1).
\end{equation*}
\end{enumerate}
In particular, if $L_{1},\ L_{2}$ are vertex Lie subalgebras of $L$ and $L=L_{1}\oplus L_{2},$ the operator $R_{L}$ defined by
\begin{equation*}
R_{L}(u)=u_{1}+u_{2}\ for\ u=u_{1}-u_{2}\in L,\ u_{i}\in L_{i}.	
\end{equation*}
satisfies the modified Yang-Baxter equation.
\end{FTB}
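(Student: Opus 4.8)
The plan is to prove Factorization Theorem B by reducing part (i) to a direct verification of the modified Yang-Baxter equation \eqref{eq:MYBE}, using the defining uniqueness of the decomposition $u = u_+ - u_-$ together with the fact that $L_+$ and $L_-$ are honest vertex Lie subalgebras. First I would record the two operators $R_L \pm 1$ explicitly on the decomposition: from $R_L(u) = u_+ + u_-$ and $u = u_+ - u_-$ one computes $(R_L + 1)u = 2u_+$ and $(R_L - 1)u = 2u_-$. This identifies the images $L_\pm = (R_L \pm 1)L$ (up to the harmless scalar $2$) with the projections onto the two summands, and it shows at once that $R_L$ is well-defined and linear, since the decomposition is unique by hypothesis. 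The commutation $[R_L, T] = 0$ should follow because $\widetilde{L}$ is a $T \oplus T$-invariant subspace, so $T$ preserves the decomposition: applying $T$ to $u = u_+ - u_-$ gives $Tu = Tu_+ - Tu_-$ with $(Tu_+, Tu_-) \in \widetilde{L}$, whence $R_L(Tu) = Tu_+ + Tu_- = T R_L(u)$.

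For the heart of part (i), I would verify \eqref{eq:MYBE} by expanding both sides on the decomposition. The key structural input is that $Y^-$ restricted to $L_+$ and to $L_-$ stays inside $L_+$ and $L_-$ respectively, because these are vertex Lie subalgebras; the cross terms $Y^-(u_+, z)v_-$ need not lie in either summand, and handling them is where the argument has genuine content. I would take an arbitrary $u = u_+ - u_-$, apply $Y^-(R_L(u), z)R_L$ and $R_L Y^-_{R_L}(u,z)$ to a test vector $v = v_+ - v_-$, and expand everything into the four bilinear pieces $Y^-(u_\pm, z)v_\pm$. The plan is to show that the combination $Y^-(R_L(u),z)R_L(v) - R_L Y^-_{R_L}(u,z)v$ collapses, after cancellation of the subalgebra-internal terms (which $R_L$ acts on by $+1$) against the mixed terms, to exactly $-Y^-(u,z)v = -(Y^-(u_+,z)v_+ - \cdots)$. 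Here I would lean on the observation that for $w_\pm \in L_\pm$ one has $R_L(w_\pm) = \pm w_\pm$ only when $w_\pm$ is genuinely a pure component, so I must carefully track how $R_L$ acts on the images $Y^-(u_\pm, z)v_\pm$, which again land in $L_\pm$ by the subalgebra property and are therefore $R_L$-eigenvectors with eigenvalue $\pm 1$.

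For part (ii), I would show $L_\pm = (R_L \pm 1)L$ and $\mathscr{L}_\pm = \ker(R_L \mp 1)$ directly from the description $\widetilde{L} = \{(u_+, u_-) \mid \theta(\overline{u_+}) = \overline{u_-}\}$. The inclusion $(R_L \pm 1)L \subseteq L_\pm$ is immediate from the formulas $(R_L + 1)u = 2u_+$, $(R_L - 1)u = 2u_-$; for the reverse inclusion I would use that every element of $L_+$ arises as a component $u_+$ of some admissible pair, which follows from surjectivity of the projection $\widetilde{L} \to L_+$ guaranteed by $\theta$ being defined on all of $L_+/\mathscr{L}_+$. For the kernels, $u \in \ker(R_L - 1)$ means $u_+ = 0$, i.e.\ $u = -u_-$ with $(0, u_-) \in \widetilde{L}$, which by the $\theta$-graph description forces $\overline{u_-} = \theta(\overline{0}) = \overline{0}$, so $u_- \in \mathscr{L}_-$, identifying $\ker(R_L - 1)$ with $\mathscr{L}_-$; the other kernel is symmetric. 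The final ``in particular'' clause is the special case $\widetilde{L} = L_1 \oplus L_2$ with $\mathscr{L}_\pm = 0$ and $\theta$ the trivial isomorphism between zero quotients, so it is immediate once part (i) is established.

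The main obstacle I expect is the bookkeeping in part (i): keeping straight which of the four terms $Y^-(u_\pm, z)v_\pm$ survive, and verifying that the mixed terms $Y^-(u_+, z)v_-$ and $Y^-(u_-, z)v_+$—on which $R_L$ has no eigenvalue control, since they need not lie in $\widetilde{L}$—cancel in precisely the right way to yield the single term $-Y^-(u,z)$. I would organize this by writing $Y^-_{R_L}(u,z) = Y^-(R_L(u), z) + Y^-(u,z)R_L$ and substituting the eigenvalue relations $R_L v_\pm = \pm v_\pm$ componentwise, so that the expression reduces to an identity among $Y^-$-terms that the subalgebra and uniqueness hypotheses force to hold. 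An alternative, cleaner route would be to deduce part (i) as a corollary of the converse constructions in Factorization Theorem A, since the graph $\widetilde{L}$ here is exactly the image $\nu(L)$ there; if the formal expansion proves too cumbersome, I would instead argue that the $R_L$ defined here coincides with the one whose factorization data reproduces $\widetilde{L}$, and invoke Lemma \ref{lem5.1} through Factorization theorem A.
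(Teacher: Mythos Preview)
Your overall approach matches the paper's: direct expansion of both sides of \eqref{eq:MYBE} on the decompositions $u = u_+ - u_-$, $v = v_+ - v_-$, together with the identities $(R_L \pm 1)u = 2u_\pm$ for part (ii). There is, however, a genuine conceptual gap in your justification for part (i). You claim that the terms $Y^-(u_\pm, z)v_\pm$ ``land in $L_\pm$ by the subalgebra property and are therefore $R_L$-eigenvectors with eigenvalue $\pm 1$.'' The second clause is false in general: an arbitrary element $w_+ \in L_+$ need not satisfy $R_L(w_+) = w_+$, because its unique $\widetilde{L}$-decomposition $w_+ = a_+ - a_-$ may well have $a_- \neq 0$ (in the language of part (ii), this happens exactly when $w_+ \notin \mathscr{L}_+$). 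The operative hypothesis is not that $L_\pm$ are subalgebras but that $\widetilde{L}$ is a vertex Lie subalgebra of $L_+ \oplus L_-$: since $(u_+,u_-),(v_+,v_-) \in \widetilde{L}$, the \emph{pair} $\bigl(Y^-(u_+,z)v_+,\, Y^-(u_-,z)v_-\bigr)$ lies in $\widetilde{L}$. With this in hand, the cross terms $Y^-(u_\pm,z)v_\mp$ already cancel in the expansion $Y^-_{R_L}(u,z)v = 2\bigl(Y^-(u_+,z)v_+ - Y^-(u_-,z)v_-\bigr)$, so $R_L$ never has to act on them; $R_L$ then sends this expression to $2\bigl(Y^-(u_+,z)v_+ + Y^-(u_-,z)v_-\bigr)$ precisely because the pair is in $\widetilde{L}$, and \eqref{eq:MYBE} drops out in one line, exactly as in the paper.

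There is also a sign slip in part (ii): from $(R_L - 1)u = 2u_-$ one sees that $\ker(R_L - 1)$ consists of those $u$ with $u_- = 0$, not $u_+ = 0$. Thus $u = u_+$ with $(u_+, 0) \in \widetilde{L}$, whence $\theta(\overline{u_+}) = \overline{0}$ and $u_+ \in \mathscr{L}_+$ by injectivity of $\theta$; this yields $\ker(R_L - 1) = \mathscr{L}_+$ as in the statement, whereas your computation as written identifies it with $\mathscr{L}_-$.
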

\begin{proof}[Proof of factorization theorem B]
For the first part, we have
\begin{align*}
R_{L}T(u)&=R_{L}(T(u_{+})-T(u_{-}))\\
&=T(u_{+})+T(u_{-})\qquad{(\hbox{Because of $Tu_{+}\in L_{+},\ Tu_{-}\in L_{-}$})}\\
&=T(u_{+}+u_{-})\nonumber\\
&=TR_{L}(u).\nonumber	
\end{align*}
This shows that $[T,R_{L}]=0$.
Also $\forall u,v\in L$
\begin{align*}
	Y^{-}_{R_{L}}(u,z)v&=Y^{-}(u_{+}-u_{-},z)(v_{+}+v_{-})+Y^{-}(u_{+}+u_{-},z)(v_{+}-v_{-})\\
	&=Y^{-}(u_{+},z)v_{+}+Y^{-}(u_{+},z)v_{-}-Y^{-}(u_{-},z)v_{+}-Y^{-}(u_{-},z)v_{-}\nonumber\\
	&\quad{}+Y^{-}(u_{+},z)v_{+}-Y^{-}(u_{+},z)v_{-}+Y^{-}(u_{-},z)v_{+}-Y^{-}(u_{-},z)v_{-}\nonumber\\
	&=2(Y^{-}(u_{+},z)u_{+}-Y^{-}(u_{-},z)v_{-}).\nonumber
\end{align*}
We verify the modified Yang-Baxter equation of $R_{L}$ now. Note that we have
\begin{align*}
Y&^{-}(R_{L}u,z)R_{L}(v)-R_{L}Y^{-}_{R_{L}}(u,z)v\\
&=Y^{-}(u_{+}+u_{-},z)(v_{+}+v_{-})-R_{L}(2Y^{-}(u_{+},z)v_{+}-2Y^{-}(u_{-},z)v_{-})\nonumber\\
&=Y^{-}(u_{+},z)v_{+}+Y^{-}(u_{+},z)v_{-}+Y^{-}(u_{-},z)v_{+}+Y^{-}(u_{-},z)v_{-}\nonumber\\
&\quad{}-2Y^{-}(u_{+},z)v_{+}-2Y^{-}(u_{-},z)v_{-}\nonumber\\
&=-Y^{-}(u_{+},z)(v_{+}-v_{-})+Y^{-}(u_{-},z)(v_{+}-v_{-})\nonumber\\
&=-Y^{-}(u_{+},z)(v)+Y^{-}(u_{-},z)v\nonumber\\
&=-Y^{-}(u,z)v.\nonumber
\end{align*}
Now, we prove the second statement. Since the statement is symmetry with respect to the signs $+$ and $-$, it is sufficient to show:
\begin{equation*}
	L_{+}=(R_{L}+1)(L),\hspace{1cm} \mathscr{L}_{+}=\ker(R_{L}-1).
\end{equation*} 
In this case, $\forall u\in L_{+},\ \exists v\in L_{-}$ such that $\theta(\bar{u})=\bar{v}.$ Take $w:=\frac{u-v}{2},$ we get $u=(R_{L}+1)(w)\in(R_{L}+1)(L).$ On the other hand, $\forall u\in L$ 
\begin{equation*}
(R_{L}+1)(u)=(u_{+}+u_{-})+(u_{+}-u_{-})=2u_{+}.	
\end{equation*}
Hence, we have proved $L_{+}=(R_{L}+1)(L)$. An analogous argument shows that $L_{-}=(R_{L}-1)(L)$. For the second equality, since $L_+=(R_L+1)(L)$, we have $\forall u\in\mathcal{L}_+\subset L_+=(R_L+1)L.$ Therefore, $\exists\ x\in L$ such that $(R_L+1)(x)=u.$ In addition, we have,
\begin{align*}
(R_L-1)u&=(R_L-1)(R_L+1)(x)\\
&=(R_L-1)(R_L+1)(x_+-x_-)\\
&=(R_L-1)(x_++x_-+x_+-x_-)\\
&=(R_L-1)(2x_+)\\
&=0.
\end{align*}
To show the reverse inclusion, we note that 
\begin{align*}
\ker(R_L-1)&=\{u\in L\mid (R_L-1)u=0\}\\
&=\{u\in L\mid u_++u_--u_++u_-=0\}\\
&=\{u\in L\mid 2u_-=0\}.
\end{align*}
Hence, $\forall u\in\ker(R_L-1),\ u=u_+-u_-=u_+$ and
\[\theta(u_+\ \mathrm{mod}\ \mathcal{L}_+)=u_-\ \mathrm{mod}\ \mathcal{L}_-=0.\]
Because $\theta$ is an isomorphism, we conclude that $u_+\in\mathcal{L}_+.$
\end{proof}
As a corollary, we get:
\begin{cor}\label{cor5.1}
Let $L_{1}$ and $L_{2}$ be two vertex Lie subalgebras of $L$ and $L=L_{1}\oplus L_{2},$ then the operator $R_{L}\in\End L$, defined by 
\begin{equation*}
R_{L}(u)=u_{1}-u_{2},\ for\ u=u_{1}+u_{2},\ u_{i}\in L,
\end{equation*}
satisfies the conditions of an $R$-matrix.
\end{cor}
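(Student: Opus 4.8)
The plan is to recognize the corollary as the specialization of Factorization theorem B in which the auxiliary subalgebra $\widetilde{L}$ is the entire direct sum, and then to pass from the modified Yang-Baxter equation to the full $R$-matrix property by invoking the theorem of Section \ref{sec:MYBE}. Thus the argument splits into two moves: arrange the hypotheses of Factorization theorem B (i), and then apply the sufficiency of \eqref{MCYB}.

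First I would set $L_+ := L_1$ and $L_- := L_2$. By hypothesis both are vertex Lie subalgebras of $(L, Y^-, T)$, so $(L_+ \oplus L_-, Y^- \oplus Y^-, T \oplus T)$ is again a vertex Lie algebra by the observation following Definition \ref{def2.2}. I would then take $\widetilde{L} := L_+ \oplus L_-$, which is trivially a vertex Lie subalgebra of itself. The point to verify is the decomposition hypothesis of Factorization theorem B (i), namely that every $u \in L$ is \emph{uniquely} of the form $u = u_+ - u_-$ with $(u_+, u_-) \in \widetilde{L}$. This is immediate from $L = L_1 \oplus L_2$: writing the unique splitting $u = u_1 + u_2$ with $u_i \in L_i$ and setting $u_+ := u_1$, $u_- := -u_2$ gives such an expression, while any competing expression $u = u_+' - u_-'$ with $u_+' \in L_1$, $u_-' \in L_2$ forces $u_+' = u_1$ and $u_-' = -u_2$ by uniqueness of the direct-sum splitting.

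With these identifications the operator produced by Factorization theorem B (i) is $R_L(u) = u_+ + u_- = u_1 - u_2$, which is exactly the operator in the statement. Factorization theorem B (i) then yields at once that $R_L$ satisfies the modified Yang-Baxter equation \eqref{eq:MYBE} and that $[R_L, T] = 0$. Since \eqref{eq:MYBE} coincides with \eqref{MCYB}, the theorem in Section \ref{sec:MYBE} applies and shows that $R_L$, being an operator commuting with $T$ and satisfying \eqref{MCYB}, is indeed an $R$-matrix in the sense of Definition \ref{dfn:rmatrix}. This completes the plan.

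I do not expect any genuine analytic obstacle, since the corollary is essentially the ``in particular'' clause of Factorization theorem B read off after a relabeling; the only step demanding care is sign bookkeeping. One must confirm that the choice $u_- = -u_2$ reproduces $R_L(u) = u_1 - u_2$ rather than $u_1 + u_2$, i.e.\ that the substitution turning the clause's $u = u_1 - u_2 \mapsto u_1 + u_2$ into the corollary's $u = u_1 + u_2 \mapsto u_1 - u_2$ is carried out consistently. Once the signs are tracked correctly, the corollary follows by combining Factorization theorem B with the sufficiency theorem of Section \ref{sec:MYBE}.
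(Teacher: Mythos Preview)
Your proposal is correct and matches the paper's approach: the corollary is stated without separate proof as an immediate consequence of the ``in particular'' clause of Factorization theorem B, combined with the theorem of Section~\ref{sec:MYBE} to upgrade from \eqref{MCYB} to the $R$-matrix property. Your sign bookkeeping (replacing $u_2$ by $-u_2$ to pass between the two formulations) is exactly the relabeling needed, and there is nothing further to do.
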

\subsection{Comparison of the classical $R$-matrices of Lie algebras and VLA}\label{subsec:CRLCRB} In this subsection, we indicate the connection between the classical $R$-matrix of Lie algebras and the classical $R$-matrix of vertex Lie algebras. For convenience we use the $\lambda$-bracket notation, in particular, we introduce the following definition. 
\begin{dfn}
Let $L$ be a vertex Lie algebra with $\lambda$-bracket $\{\cdot_{\lambda}\cdot\}$, and $R_L$ be a classical $R$-matrix of vertex Lie algebra $L$. Based on the above discussions, the $R$-matrix $R_{L}$ gives another $\lambda$-bracket structure on the vector space $L$, we will denote this new $\lambda$-bracket by $\{\cdot_{\lambda}\cdot\}_{R_{L}}$ and call it the $\lambda$-bracket twisted by the $R$-matrix $R_{L}$.
\end{dfn}
\begin{rmk}\label{rmk:eqrmatrix}
Use the $\lambda$-bracket notation, Definition \ref{dfn:rmatrix} is equivalent to the following: Let $(L,\{\cdot_{\lambda}\cdot\},T)$ be a vertex Lie algebra. A linear map $R_{ L}\in \mathrm{End} L$ is called an $R$-matrix for the vertex Lie algebra, if the $\lambda$-bracket defined by
\begin{equation*}
\{u_{\lambda}v\}_{R_{L}}=\{R_{L}(u)_{\lambda}v\}+\{u_{\lambda}R_{L}(v)\},
\end{equation*}
defines vertex Lie algebra structure on $L$, and if
\begin{equation*}
[R_{L},T]=0.	
\end{equation*}
\end{rmk}
Let $\mathfrak{g}$ be a finite-dimensional Lie algebra with dimension $\e$ and let $u^{1},u^{2},\cdots,u^{\e}$ be its basis. Consider its affine vertex Lie algebra $v_0(\FG)$ at level $0$ with $\lambda$-bracket $\{\cdot_{\lambda}\cdot\}$ and its vertex Lie subalgebra
\[v_0(\FG)_0=\Span_{\C}\{u^{i}\mid i=1,2,\cdots,\dim\FG\},\]
then $v_0(\FG)_{0}\cong\FG$ as Lie algebras. In addition, the Lie bracket on the Lie algebra $v_0(\FG)_0$ coincides with the $\lambda$-bracket $\{\cdot_{\lambda}\cdot\}$. Let $R_L^0\in\End v_0(\FG)$ be a classical $R$-matrix of $v_0(\FG)_0$, then by the equivalent definition that we give in Remark \ref{rmk:eqrmatrix}, we see $R_{L}^0$ is a classical $R$-matrix of Lie algebra $v_0(\FG)_0\cong\FG$. In addition, we see that using the $\lambda$-bracket notation, if $R_L\in\End v_0(\FG)$ the classical and modified Yang-Baxter equations of vertex Lie algebra can be rewritten as $\forall u,v\in v_0(\FG),$
\begin{align*}
&\{R_L(u)_{\lambda}R_L(v)\}=R_L\{u_{\lambda}v\},\\
&\{R_{L}(u)_{\lambda}R_{L}(v)\}-R_{L}\{u_{\lambda}v\}_{R_{L}}=-\{u_{\lambda}v\}.
\end{align*}
Hence, the classical and modified Yang-Baxter equations of the vertex Lie subalgebra $v_0(\FG)_0$ are the same as the classical and modified Yang-Baxter equations for the Lie algebra $\FG$.
\subsection{Examples of the classical $R$-matrix}\label{subsec:ECR}
In this subsection, we give two examples of the classical $R$-matrix for the $v_{k}(\mathfrak{g})$ given by the universal affine vertex algebras $V^{k}(\mathfrak{g})$. In particular, we calculate $\lambda$-brackets $\{\cdot_{\lambda}\cdot\}_{R_{L}}$ on the generators for $v_{k}(\mathfrak{g})_{R_{L}}$ obtained from the Borel and Iwasawa decompositions of $\mathfrak{g}$. Regarding Corollary \ref{cor5.1}, we have the following theorem, which gives a convenient way to construct the classical $R$-matrix for $v_{k}(\mathfrak{g})$.\par
Let $\mathfrak{g}$ be a finite-dimensional Lie algebra that admits a Lie algebra decomposition $\mathfrak{g}\cong\mathfrak{a}\oplus\mathfrak{b}$ i.e., $\FG\cong\mathfrak{a}\oplus\mathfrak{b}$ as vector space and $\mathfrak{a}$, $\mathfrak{b}$ are Lie subalgebras of $\FG$. In addition, suppose $\mathfrak{b}$ is an isotropic subalgebra of $\FG$.
Consider the affine vertex Lie algebra $(v_k(\FG),Y^{-}(?,z),T)$, as a vector space $v_k(\FG)$ that can be factorized as $v_k(\FG)=v_k(\mathfrak{a})\oplus v_k'(\mathfrak{b})$ where $v'_{k}(\mathfrak{b}):=\Span_{\C}\{xt^n.v_k\mid x\in\mathfrak{b}, n\leq-1\}$. Moreover, $(v_k'(\mathfrak{b}),Y^{-}(?,z),T)$ forms a vertex Lie subalgebra of $v_k(\FG)$. This motivates the following theorem.
\begin{thm}\label{thm:RM}
Let $\FG$ be a finite-dimensional Lie algebra as above with Lie algebra decomposition $\FG=\mathfrak{a}\oplus\mathfrak{b}$ where $\mathfrak{b}$ is an isotropic subalgebra of $\FG$. Define $R_{L}:=\frac{1}{2}(P_{+}-P_{-})$ where $P_{+},\ P_{-}$ are the projection operators from $v_{k}(\mathfrak{g})$ to $v_{k}(\mathfrak{a})$ and $v_{k}'(\mathfrak{b})$, respectively. Then $R_L$ is a classical $R$-matrix of vertex Lie algebra $v_k(\FG)$. Following Lie bialgebra's terminology, we shall call the $R$-matrix obtained in this way the factorizable $R$-matrix of vertex Lie algebra.
\end{thm}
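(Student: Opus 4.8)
The plan is to obtain this theorem as a direct application of Corollary \ref{cor5.1}, after checking that the factorization $v_k(\FG)=v_k(\mathfrak a)\oplus v_k'(\mathfrak b)$ is a decomposition into two vertex Lie \emph{subalgebras}, and then to absorb the normalization factor $\tfrac{1}{2}$ by a scaling argument.

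First I would verify that the two summands are vertex Lie subalgebras. That $v_k'(\mathfrak b)$ is one is already recorded in the discussion preceding the statement; the content is that, because $\mathfrak b$ is isotropic, the only term in the $\lambda$-bracket of two generators $xt^{-1}.v_k,\,yt^{-1}.v_k$ (with $x,y\in\mathfrak b$) that is proportional to the vacuum---the central term carrying the factor $(x\mid y)$---vanishes, so the span $v_k'(\mathfrak b)$, which deliberately omits $v_k$, is closed under the bracket. For $v_k(\mathfrak a)$ the situation is easier: since $\mathfrak a$ is a subalgebra and $v_k(\mathfrak a)$ \emph{contains} $v_k$, both the bracket term $[x,y]t^{-1}.v_k$ and the central term $(x\mid y)v_k\in\C v_k$ stay inside, so $v_k(\mathfrak a)$ is just the affine vertex Lie algebra of $\mathfrak a$ in the sense of Example \ref{exa:AVLA}. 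Both summands are $T$-invariant because $T$ acts within each $\FG$-factor. Hence $v_k(\FG)=v_k(\mathfrak a)\oplus v_k'(\mathfrak b)$ is a direct sum of vertex Lie subalgebras.

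Setting $L_1:=v_k(\mathfrak a)$ and $L_2:=v_k'(\mathfrak b)$, Corollary \ref{cor5.1} then applies verbatim and shows that the operator $u\mapsto u_1-u_2=(P_+-P_-)(u)$ is an $R$-matrix of $v_k(\FG)$: it satisfies the modified Yang-Baxter equation \eqref{eq:MYBE} and commutes with $T$. To pass to $R_L=\tfrac{1}{2}(P_+-P_-)$ I would record that the $R$-matrix property is invariant under scalar rescaling: for an $R$-matrix $R$ and any $c\in\C$ one has $Y^-_{cR}(u,z)=cY^-_R(u,z)$, equivalently $\{u_\lambda v\}_{cR}=c\{u_\lambda v\}_R$, and multiplying a $\lambda$-bracket by a scalar preserves sesquilinearity and skew-symmetry (both sides scale by $c$) as well as the Jacobi identity (both sides scale by $c^2$); together with $[cR,T]=c[R,T]=0$ this makes $cR$ an $R$-matrix. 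Taking $c=\tfrac{1}{2}$ and $R=P_+-P_-$ gives the assertion. The choice $\tfrac{1}{2}$ is the natural one because, by Section \ref{subsec:CRLCRB}, it is exactly the scaling under which $R_L$ restricts on the finite part to the classical factorizable $R$-matrix $\tfrac{1}{2}(P_+-P_-)$ of Semenov-Tyan-Shanskii.

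I expect the only genuinely delicate point to be the closure of $v_k'(\mathfrak b)$ as a vacuum-free vertex Lie subalgebra, which is precisely where isotropy of $\mathfrak b$ is indispensable: without it the central term would drag the vacuum back into the bracket, so the naive $\mathfrak b$-summand would fail to be a subalgebra and the hypothesis of Corollary \ref{cor5.1} would break down. Everything else is routine given the preceding development; alternatively, one could bypass Corollary \ref{cor5.1} and check directly that $\tfrac{1}{2}(P_+-P_-)$ satisfies $Y^-(R_L(u),z)R_L-R_LY^-_{R_L}(u,z)=-\tfrac{1}{4}Y^-(u,z)$, which is the scalar form (with $a=\tfrac{1}{4}$) of the sufficient condition established in Section \ref{sec:MYBE}.
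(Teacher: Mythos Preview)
Your proposal is correct and follows the same route as the paper: the paper states Theorem~\ref{thm:RM} as a consequence of Corollary~\ref{cor5.1} together with the observation (made in the paragraph immediately preceding the theorem) that $v_k(\mathfrak a)$ and $v_k'(\mathfrak b)$ are vertex Lie subalgebras, the latter precisely because $\mathfrak b$ is isotropic. You are in fact more careful than the paper on one point: the paper does not explicitly address the passage from the $R$-matrix $P_+-P_-$ produced by Corollary~\ref{cor5.1} to the normalized operator $\tfrac{1}{2}(P_+-P_-)$, whereas you supply the scaling argument (and correctly note that $\tfrac{1}{2}(P_+-P_-)$ satisfies the mYBE only up to the constant $\tfrac{1}{4}$, which is still covered by the sufficient condition in Section~\ref{sec:MYBE}).
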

\begin{rmk}\label{rmk:level0rmatrix}
Let $\FG$ be a finite-dimensional Lie algebra with a Lie algebra decomposition $\FG=\mathfrak{a}\oplus\mathfrak{b}$. Consider the affine vertex Lie algebra $v_0(\FG)$, it has a vertex Lie algebra decomposition $v_0(\FG)\cong v_0(\mathfrak{a})\oplus v_0'(\mathfrak{b})$. Therefore, for any Lie algebra decomposition $\FG\cong\mathfrak{a}\oplus\mathfrak{b}$, we have a factorizable R-matrix for the level $0$ vertex Lie algebra of $\FG$. In this case, we simply say that $R_L$ is an $R$-matrix associated with Lie algebra decomposition $\mathfrak{a}\oplus\mathfrak{b}$.
\end{rmk}
Recall a finite-dimensional Manin triple which is a triple of finite-dimensional Lie algebras  $(\FG,\FG_+,\FG_-)$, where $\FG$ has an invariant bilinear form $(~\mid~)$ such that
\begin{itemize}
\item[i.] $\FG=\FG_+\oplus\FG_-$ is a Lie algebra decomposition.
\item[ii.] $\FG_+$ and $\FG_-$ are isotropic Lie subalgebras of $\FG$ with respect to $(~\mid~)$.
\end{itemize}
This gives the following example immediately.
\begin{cor}\label{cor:FRMT}
Let $(\FG,\FG_+,\FG_-)$ be a finite-dimensional Manin triple with an invariant bilinear form $(~\mid~)$. Then the affine vertex Lie algebra $v_k(\FG)$ admits a vertex Lie algebra factorization $v_k(\FG)\cong v_k(\FG_+)\oplus v_k'(\FG_-)$. In addition, $v_k(\FG)$ has a factorizable $R$-matrix $R_L:=\frac{1}{2}(P_+-P_-)$, where $P_+$, $P_-$ are the projection operators from $v_k(\FG)$ to $v_k(\FG_+)$ and $v_k'(\FG_-)$, respectively.
\end{cor}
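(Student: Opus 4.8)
The plan is to read Corollary \ref{cor:FRMT} as nothing more than the special case $\mathfrak{a} = \FG_+$, $\mathfrak{b} = \FG_-$ of Theorem \ref{thm:RM}, so that the whole argument reduces to checking that a finite-dimensional Manin triple satisfies that theorem's hypotheses and then quoting its conclusion.

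First I would match the data. Axiom (i) of a Manin triple, $\FG = \FG_+ \oplus \FG_-$, is exactly a Lie algebra decomposition of $\FG$ into two Lie subalgebras, as required in Theorem \ref{thm:RM}. Axiom (ii) asserts that both $\FG_+$ and $\FG_-$ are isotropic for $\nondp$; in particular $\mathfrak{b} := \FG_-$ is an isotropic subalgebra, which is the only additional hypothesis Theorem \ref{thm:RM} imposes (the summand playing the role of $\mathfrak{a}$ is only required to be a subalgebra). Thus, setting $\mathfrak{a} = \FG_+$ and $\mathfrak{b} = \FG_-$, all hypotheses are in force.

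Next I would transport the conclusions. The discussion preceding Theorem \ref{thm:RM} already supplies the vector-space factorization $v_k(\FG) = v_k(\FG_+) \oplus v_k'(\FG_-)$, where $v_k'(\FG_-) = \Span_{\C}\{xt^n.v_k \mid x \in \FG_-,\ n \leq -1\}$. The isotropy of $\FG_-$ is precisely what upgrades this to a vertex Lie algebra factorization: since $(x \mid y) = 0$ for $x,y \in \FG_-$, the central term $n(x \mid y)\delta_{n+m,0}K$ in the affine bracket vanishes, so $v_k'(\FG_-)$, which omits the vacuum $v_k$, is closed under the operations $a_{(n)}$ and forms a vertex Lie subalgebra; meanwhile $v_k(\FG_+)$ contains the vacuum and is a subalgebra because $\FG_+$ is one. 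Applying Theorem \ref{thm:RM} then yields that $R_L = \frac{1}{2}(P_+ - P_-)$, with $P_+, P_-$ the projections onto the two factors, is a classical $R$-matrix of $v_k(\FG)$.

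Because everything rests on the already-established Theorem \ref{thm:RM}, I do not expect a genuine obstacle; the only point demanding care is the bookkeeping of which isotropic summand carries the vacuum. One must consistently assign $\FG_+$ to the role of $\mathfrak{a}$ (so that $v_k(\FG_+)$ includes $v_k$) and $\FG_-$ to the role of $\mathfrak{b}$ (so that $v_k'(\FG_-)$ excludes it). The symmetry of a Manin triple under $\FG_+ \leftrightarrow \FG_-$ shows the reversed assignment is equally valid and merely interchanges $P_+$ and $P_-$.
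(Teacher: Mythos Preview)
Your proposal is correct and is precisely the argument the paper has in mind: the corollary is stated immediately after the definition of a Manin triple with the remark ``This gives the following example immediately,'' and no further proof is given, so the intended content is exactly the specialization $\mathfrak{a}=\FG_+$, $\mathfrak{b}=\FG_-$ of Theorem~\ref{thm:RM} that you carry out. Your explicit check that the isotropy of $\FG_-$ kills the central term and hence makes $v_k'(\FG_-)$ a vertex Lie subalgebra is in fact more detail than the paper supplies.
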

\begin{dfn}\label{dfn:RM}
Let $R_{L}$ be an $R$-matrix of vertex Lie algebra $v_{k}(\mathfrak{g})$ obtained by Theorem \ref{thm:RM}. Then, we say the $R_{L}$ is an $R$-matrix of $v_{k}(\mathfrak{g})$ associated with the Lie algebra decomposition $\mathfrak{a}\oplus\mathfrak{b}$.
\end{dfn}
We will use the following notations in this subsection. Let $\mathfrak{g}$ be a semi-simple Lie algebra with a bilinear, non-degenerate, $ad$-invaraint form $(\cdot,\cdot)$ and $\mathfrak{h}$ be its Cartan subalgebra. Then, its root space decomposition reads as $$\mathfrak{g}=\mathfrak{h}\oplus\bigoplus_{\alpha\in\Delta}\mathfrak{g}_{\alpha},\hspace{1cm} \mathfrak{g}_{\alpha}:=\{y \in \mathfrak{g}:[x, y]=\alpha(x) y \text { for all } x \in \mathfrak{h}\},$$ where $\Delta$ is the set of roots of $\mathfrak{g}$. Let $\Pi:=\{\alpha_{1},\alpha_{2},\ldots,\alpha_{r}\}$ be the set of simple roots of $\mathfrak{g}$ where $r=\dim\mathfrak{h}$. Also, we let $\Delta_{+},\ \Delta_{-}$ be the set of positive roots, and negative roots of $\mathfrak{g}$, respectively. Then the triangle decomposition of $\mathfrak{g}$ can be written as $$\mathfrak{g}=\mathfrak{n}_{-}\oplus\mathfrak{h}\oplus\mathfrak{n}_{+}=\mathfrak{n}_{-}\oplus\mathfrak{b},$$ where $\mathfrak{n}_{\pm}=\oplus_{\alpha\in\Delta_{\pm}}\mathfrak{g}_{\alpha}$ and $\mathfrak{b}:=\mathfrak{h}\oplus\mathfrak{n}_{+}$ is called the Borel subalgera of $\mathfrak{g}$. Since $\dim\mathfrak{g}_{\alpha}=1$, one chooses a non-zero element $e^{\alpha}\in\mathfrak{g}_{\alpha},\ \forall\alpha\in\Delta$ and $h^{i}\in\mathfrak{h},\ i=1,2,\cdots,r$ as generators of Lie algebra $\mathfrak{g}$.
\begin{exa}\label{exa:BD}
Consider the Borel decomposition of a rank $r$ finite-dimensional Lie algebra $\mathfrak{g}$, $\mathfrak{g}=\mathfrak{b}\oplus\mathfrak{n}_{-}$ with $\dim\mathfrak{n}_{\pm}=n$ and the $R$-matrix $R_{L}$ associated with it. The corresponding vertex Lie algebra factorization is $v_{k}(\mathfrak{g})=v_{k}(\mathfrak{b})\oplus v_{k}'(\mathfrak{n}_{-})$ where $$v_{k}(\mathfrak{b})=\mathrm{Span}_{\mathbb{C}}\{v_{k},h^{i}_{-p}.v_{k},e^{\alpha_{j}}_{-p}.v_{k}\mid i=1,2,\ldots, r,\ j=1,2,\ldots,n,p\geq0\},$$ and $$v_{k}'(\mathfrak{n}_{-})=\mathrm{Span}_{\mathbb{C}}\{e^{-\alpha_{i}}_{-p}.v_{k}\mid i=1,2,\ldots,n,p\geq0\}.$$
Let $L:=v_{k}(\mathfrak{g})_{R_{L}}$, we compute the commutators of local Lie algebra $Lie(L)$ here. Note that we have identities,
\begin{align*}
&R_{L}(e^{-\alpha_{i}}_{-1}.v_{k})=\frac{1}{2}(P_{+}-P_{-})(f^{i}_{-1}.v_{k})=-\frac{1}{2}f^{i}_{-1}.v_{k},\\
&R_{L}(h^{j}_{-1}.v_{k})=\frac{1}{2}h^{j}_{-1}.v_{k},\\
&R_{L}(e^{\alpha_{i}}_{-1}.v_{k})=\frac{1}{2}e^{i}_{-1}.v_{k},
\end{align*}
$\forall i=1,2,\ldots n,j=1,2,\ldots r$. Hence, the operators $Y^{-}_{R_{L}}(?,z)$ read: $\forall i=1,2,\ldots,n$ and $j=1,2\ldots,r$,
\begin{align*}
Y^{-}_{R_{L}}(e^{-\alpha_{i}}_{-1}.v_{k},z)=\sum_{n\geq0}(-\frac{1}{2}e^{-\alpha_{i}}_{(n)}+e^{-\alpha_{i}}_{(n)}R_{L})z^{-n-1},
\end{align*}
\begin{align*}
Y^{-}_{R_{L}}(h^{j}_{-1}.v_{k},z)=\sum_{n\geq0}(\frac{1}{2}h^{j}_{(n)}+h^{j}_{(n)}R_{L})z^{-n-1},
\end{align*}
\begin{align*}
Y^{-}_{R_{L}}(e^{\alpha_{i}}_{-1}.v_{k},z)=\sum_{n\geq0}(\frac{1}{2}e^{\alpha_{i}}_{(n)}+e^{\alpha_{i}}_{(n)}R_{L})z^{-n-1}.
\end{align*}
We compute the Lie brackets of the local Lie algebra of $v_k(\FG)_{R_{L}}$. By formula \eqref{eq:LLA}, we have $\forall i,j=1,2,\ldots,n$, $p,q=1,2,\ldots,r$, and $m,k\in\Z$,
\begin{align*}
&[e^{-\alpha_{i}}_{[m]},e^{-\alpha_{j}}_{[k]}]_{R_{L}}=[e^{-\alpha_{i}}_{[m]},e^{-\alpha_{j}}_{[k]}],\\
&[e^{-\alpha_{i}}_{[m]},h^{j}_{[k]}]_{R_{L}}=0,\\
&[e^{-\alpha_{i}}_{[m]},e^{\alpha_{j}}_{[k]}]_{R_{L}}=0,\\
&[h^{p}_{[m]},h^{q}_{[k]}]_{R_{L}}=[h^{p}_{[m]},h^{q}_{[k]}],\\
&[h^{p}_{[m]},e^{\alpha_{j}}_{[k]}]_{R_{L}}=[h^{p}_{[m]},e^{\alpha_{j}}_{[k]}],\\
&[e^{\alpha_{i}}_{[m]},e^{\alpha_{j}}_{[k]}]_{R_{L}}=[e^{\alpha_{i}},e^{\alpha_{j}}]_{[m+k]}+m(e^{\alpha_{i}},e^{\alpha_{j}})K\delta_{m+k,0}.
\end{align*}
\end{exa}
The calculations in Example \ref{exa:BD} give the following statement.
\begin{thm}
For a semi-simple Lie algebra $\mathfrak{g}$ with triangle decomposition $\mathfrak{g}=\mathfrak{n}_{-}\oplus\mathfrak{h}\oplus\mathfrak{n}_{+}$, let $R_{L}$ be the $R$-matrix associated with the Borel decomposition $\mathfrak{g}=\mathfrak{n}_{-}\oplus\mathfrak{b}$. Denote the universal enveloping vertex algebra of the $R_{L}$-twisted vertex Lie algebra $v_k(\FG)_{R_{L}}$ by $V^{k}(\FG)_{R_{L}}$.
Then, generators of fields in $V^{k}(\mathfrak{g})_{R_{L}}$ have the following OPEs,
\begin{align*}
&e^{-\alpha_{i}}_{R_{L}}(z)e^{-\alpha_{j}}_{R_{L}}(w)\sim-\frac{[e^{-\alpha_i},e^{-\alpha_j}](w)}{z-w}-\frac{(e^{\alpha_i},e^{\alpha_j})K}{(z-w)^{2}},\\
&e^{-\alpha_{i}}_{R_{L}}(z)h^{p}_{R_{L}}(w)\sim0,\\
&e^{-\alpha_{i}}_{R_{L}}(z)e^{\alpha_{j}}_{R_{L}}(w)\sim0,\\
&h^{p}_{R_{L}}(z)h^{q}_{R_{L}}(w)\sim\frac{(h^{p},h^{q})K}{(z-w)^{2}},\\
&h^{p}_{R_{L}}(z)e^{\alpha_{j}}_{R_{L}}(w)\sim\frac{\alpha_{j}(h^{p})e^{\alpha_{j}}(w)}{z-w}+\frac{(h^{p},e^{\alpha_{j}})K}{(z-w)^{2}},\\
&e^{\alpha_{i}}_{R_{L}}(z)e^{\alpha_{j}}_{R_{L}}(w)\sim\frac{[e^{\alpha_{i}},e^{\alpha_{j}}](w)}{z-w}+\frac{(e^{\alpha_{i}},e^{\alpha_{j}})K}{(z-w)^{2}}.
\end{align*}
If we use the $\lambda$-bracket,
\begin{align*}
&[e^{-\alpha_i}{_\lambda}e^{-\alpha_j}]_{R_{L}}=[e^{-\alpha_{i}},e^{-\alpha_j}]-(e^{\alpha_i},e^{\alpha_j})\lambda,\\
&[e^{-\alpha_i}{_\lambda}h^{p}]_{R_{L}}=0,\\
&[e^{-\alpha_i}{_\lambda}e^{\alpha_j}]_{R_{L}}=0,\\
&[h^{p}{_\lambda}h^{q}]_{R_{L}}=(h^{p},h^{q})K\lambda,\\
&[h^{p}{_\lambda}e^{\alpha_j}]_{R_{L}}=\alpha_j(h^{p})e^{\alpha_j}+(h^{p},e^{\alpha_{j}})K\lambda,\\
&[e^{\alpha_i}{_\lambda}e^{\alpha_j}]_{R_{L}}=[e^{\alpha_i},e^{\alpha_j}]+(e^{\alpha_i},e^{\alpha_j})K\lambda,
\end{align*}
$\forall i,j=1,2,\ldots,n$ and $p,q=1,2,\ldots,r$.
\end{thm}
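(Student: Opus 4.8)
The plan is to regard this statement as the translation of the local Lie algebra computation already carried out in Example~\ref{exa:BD} into the language of operator product expansions and $\lambda$-brackets; no new input is needed beyond that example together with the standard dictionary relating the two languages. First I would recall this dictionary. In the universal enveloping vertex algebra $V^{k}(\FG)_{R_{L}}=\mathrm{Vac}(v_{k}(\FG)_{R_{L}})$, the singular part of the OPE of two generating fields $a(z),b(w)$ is $\sum_{n\geq0}(a_{(n)}b)(w)/(z-w)^{n+1}$, where the products $a_{(n)}b$ are the $n$-th products of the $R_{L}$-twisted vertex Lie algebra, i.e. the coefficients of $Y^{-}_{R_{L}}(a,z)b$; these products are precisely the data encoded in the local Lie algebra brackets through \eqref{eq:LLA}. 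Correspondingly $\{a_{\lambda}b\}_{R_{L}}=\sum_{n\geq0}\frac{\lambda^{n}}{n!}a_{(n)}b$ repackages the same data as the listed $\lambda$-brackets. Hence the OPE list and the $\lambda$-bracket list are equivalent, and it suffices to establish either one.

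Next I would extract the twisted products $a_{(n)}b$ for each pair of generators by matching the brackets of Example~\ref{exa:BD} against \eqref{eq:LLA}: comparing the coefficient of $(\cdot)_{[m+k]}$ fixes the $n=0$ product (the simple-pole / bracket term) and the coefficient of $m(\cdot)_{[m+k-1]}$ fixes the $n=1$ product (the double-pole / central term). Since each bracket listed in Example~\ref{exa:BD} contains only these two terms, every product with $n\geq2$ vanishes, which is exactly why the stated OPEs have poles of order at most two. A clean cross-check runs directly through the $\lambda$-bracket form of the twist in Remark~\ref{rmk:eqrmatrix}: each generator is an $R_{L}$-eigenvector, with $R_{L}(h^{p})=\tfrac12 h^{p}$, $R_{L}(e^{\alpha_{i}})=\tfrac12 e^{\alpha_{i}}$ and $R_{L}(e^{-\alpha_{i}})=-\tfrac12 e^{-\alpha_{i}}$ as recorded there. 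Thus for generators with eigenvalues $\varepsilon_{u},\varepsilon_{v}\in\{\pm\tfrac12\}$ one obtains $\{u_{\lambda}v\}_{R_{L}}=\{R_{L}(u)_{\lambda}v\}+\{u_{\lambda}R_{L}(v)\}=(\varepsilon_{u}+\varepsilon_{v})\{u_{\lambda}v\}$, so the twisted bracket equals $+\{u_{\lambda}v\}$ when both generators lie in $\mathfrak{b}$, equals $-\{u_{\lambda}v\}$ when both lie in $\mathfrak{n}_{-}$, and vanishes when they lie in opposite factors. Feeding in the untwisted affine brackets $\{x_{\lambda}y\}=[x,y]+(x,y)K\lambda$ for $x,y\in\FG$ then reproduces every line of the list.

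The points that need attention are bookkeeping rather than a genuine obstacle. I must track the central contribution carefully: the $n=1$ twisted product lands on the vacuum and reappears in \eqref{eq:LLA} exactly as the $m\,\delta_{m+k,0}K$ term, so its coefficient is what must be identified with the $(z-w)^{-2}$ residue. I would also note that by the root-space grading the forms $(e^{\pm\alpha_{i}},e^{\pm\alpha_{j}})$ and $(h^{p},e^{\alpha_{j}})$ vanish, so that $(h^{p},h^{q})K\lambda$ is in fact the only surviving central term, the remaining central symbols in the list being retained only for uniformity of notation. Finally I would verify that the sign and normalization conventions in the OPE-to-$\lambda$-bracket passage agree with those in the statement, after which the equivalence observed in the first paragraph completes the proof.
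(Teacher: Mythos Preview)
Your proposal is correct and matches the paper's approach: the paper offers no separate proof but simply states that the theorem is the content of the computations in Example~\ref{exa:BD}, and you have spelled out exactly the translation via \eqref{eq:LLA} by which those local Lie algebra brackets become the stated OPEs and $\lambda$-brackets. Your eigenvector observation $\{u_{\lambda}v\}_{R_{L}}=(\varepsilon_{u}+\varepsilon_{v})\{u_{\lambda}v\}$ is just a compact rephrasing of the same case-by-case computation carried out there.
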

\begin{exa}\label{exa:Iwa}
Consider a simple Lie algebra $\mathfrak{g}$ with the Iwasawa decomposition $\mathfrak{g}=\mathfrak{a}\oplus\mathfrak{b}$ where 
$$\mathfrak{a}=\mathrm{Span}_{\mathbb{C}}\{e^{\alpha}+e^{-\alpha}\mid\alpha\in\Delta_{+}\}\oplus\mathfrak{h},$$
$$\mathfrak{b}=\mathrm{Span}_{\mathbb{C}}\{e^{\alpha}-e^{-\alpha}\mid\alpha\in\Delta_{+}\}.$$
At level $0$, we have a vertex Lie algebra factorization $v_0(\FG)=v_0(\mathfrak{a})\oplus v_0'(\mathfrak{b})$. Let $R_{L}$ be the $R$-matrix associated with the Iwasawa decomposition and $L=v_{0}(\mathfrak{g})$. Then the operators act on the generators as
\begin{align*}
&R_{L}(e_{-1}^{-\alpha_{i}}.v_{k})=\frac{1}{2}e^{\alpha_{i}}_{-1}.v_{k},\\
&R_{L}(h^{j}_{-1}.v_{k})=\frac{1}{2}h^{i}_{-1}.v_{k},\\
&R_{L}(e_{-1}^{\alpha_i}.v_{k})=\frac{1}{2}e_{-1}^{-\alpha_i}.v_{k},
\end{align*}
$\forall\ i=1,2,\ldots,n,$ and $j=1,2,\ldots r$, and the operators $Y^{-}_{R_{L}}(?,z)$ are given by
\begin{align*}
&Y^{-}_{R_{L}}(e^{-\alpha_{i}}_{-1}.v_{k},z)=\sum_{n\geq0}(\frac{1}{2}e^{\alpha_{i}}_{(n)}+e^{-\alpha_{i}}_{(n)}R_{L})z^{-n-1},\\
&Y^{-}_{R_{L}}(h^{j}_{-1}.v_{k},z)=\sum_{n\geq0}(\frac{1}{2}h^{j}_{(n)}+h^{j}_{(n)}R_{L})z^{-n-1},\\
&Y^{-}_{R_{L}}(e^{\alpha_{i}}_{-1}.v_{k},z)=\sum_{n\geq0}(\frac{1}{2}e^{-\alpha_{i}}_{(n)}+e^{\alpha_{i}}_{(n)}R_{L})z^{-n-1}.
\end{align*}
Hence, by formula \eqref{eq:LLA} the local Lie algebra $\mathrm{Lie}(L_{R_{L}})$ satisfies the following commutation relations: $\forall i,j=1,2,\ldots,n,\ p,q=1,2,\ldots,r,m,n\in\Z$,
\begin{align}
&[e^{-\alpha_{i}}_{[m]},e^{-\alpha_{j}}_{[k]}]_{R_{L}}=\frac{1}{2}([e^{\alpha_{i}},e^{\alpha_{j}}]_{[m+k]}+[e^{-\alpha_{i}},e^{\alpha_{j}}]_{[m+k]})\nonumber\\
&[e^{-\alpha_{i}}_{[m]},h^{p}_{[k]}]_{R_{L}}=-\frac{1}{2}\alpha_{i}(h^{p})(e^{\alpha_{i}}_{[m+k]}-e^{-\alpha_{i}}_{[m+k]})\nonumber\\
&[e^{-\alpha_{i}}_{[m]},e^{\alpha_{j}}_{[k]}]_{R_{L}}=\frac{1}{2}([e^{\alpha_{i}},e^{\alpha_{j}}]_{[m+k]}+[e^{-\alpha_{i}},e^{-\alpha_{j}}]_{[m+k]})\nonumber\\
&[h^{p}_{[m]},h^{q}_{[k]}]_{R_{L}}=0\nonumber\\
&[h^{p}_{[m]},e^{\alpha_{j}}_{[k]}]_{R_{L}}=\frac{1}{2}(\alpha_{j}(h^{p})e^{\alpha_{j}}_{[m+k]}-\alpha_{j}(h^{p})e^{-\alpha_{j}}_{[m+k]})\nonumber\\
&[e^{\alpha_{i}}_{[m]},e^{\alpha_{j}}_{[k]}]_{R_{L}}=\frac{1}{2}([e^{\alpha_{i}},e^{\alpha_{j}}]_{[m+k]})\nonumber
\end{align}
\end{exa}
The calculations in Example \ref{exa:BD} give the following statement.
\begin{thm}
For a semi-simple Lie algebra $\mathfrak{g}$ with Iwasawa decomposition $\mathfrak{g}=\mathfrak{n}_{-}\oplus\mathfrak{h}\oplus\mathfrak{n}_{+}$, let $R_{L}$ be the $R$-matrix associated with the Iwasawa decomposition $\mathfrak{g}=\mathfrak{n}_{-}\oplus\mathfrak{b}$. Consider the level $0$  universal enveloping vertex algebra of the $R_{L}$-twisted vertex Lie algebra $v_0(\FG)_{R_{L}}$ by $V^{0}(\FG)_{R_{L}}$.
Then, generators of fields in $V^{0}(\mathfrak{g})_{R_{L}}$ have the following OPEs,
\begin{align*}
&e_{R_L}^{-\alpha_i}(z)e_{R_L}^{-\alpha_j}(w)\sim\frac{1}{2}\frac{[e^{\alpha_i},e^{\alpha_j}](w)+[e^{-\alpha_i},e^{\alpha_j}](w)}{z-w}\\
&e^{-\alpha_i}_{R_L}(z)h_{R_{L}}(w)\sim-\frac{\alpha_i(h^p)}{2}\frac{e^{\alpha_i}(w)-e^{-\alpha_j}(w)}{z-w}\\
&e^{-\alpha_i}_{R_L}(z)e^{\alpha_j}_{R_L}(w)\sim\frac{1}{2}\frac{[e^{\alpha_i},e^{\alpha_j}](w)+[e^{-\alpha_i},e^{-\alpha_j}](w)}{z-w}\\
&h_{R_L}(z)h_{R_L}(w)\sim0\\
&h_{R_L}(z)e^{\alpha_j}_{R_L}(w)\sim\frac{1}{2}\frac{\alpha_{j}(h^{p})e^{\alpha_{j}}(w)-\alpha_{j}(h^{p})e^{-\alpha_{j}}(w)}{z-w}\\
&e^{\alpha_i}_{R_L}(z)e^{\alpha_j}_{R_L}(w)\sim\frac{1}{2}\frac{[e^{\alpha_i},e^{\alpha_j}](w)}{z-w}
\end{align*}
If we use the $\lambda$-bracket,
\begin{align*}
&[e^{-\alpha_i}{_\lambda}e^{-\alpha_j}]_{R_{L}}=\frac{1}{2}([e^{\alpha_i},e^{\alpha_j}]+[e^{-\alpha_i},e^{\alpha_j}]),\\
&[e^{-\alpha_i}{_\lambda}h^{p}]_{R_{L}}=-\frac{\alpha_i(h^p)}{2}(e^{\alpha_i}-e^{-\alpha_j}),\\
&[e^{-\alpha_i}{_\lambda}e^{\alpha_j}]_{R_{L}}=\frac{1}{2}([e^{\alpha_i},e^{\alpha_j}]+[e^{-\alpha_i},e^{-\alpha_j}]),\\
&[h^{p}{_\lambda}h^{q}]_{R_{L}}=0,\\
&[h^{p}{_\lambda}e^{\alpha_j}]_{R_{L}}=\frac{\alpha_j(h^p)}{2}(e^{\alpha_j}-e^{-\alpha_j}),\\
&[e^{\alpha_i}{_\lambda}e^{\alpha_j}]_{R_{L}}=\frac{1}{2}[e^{\alpha_i},e^{\alpha_j}],
\end{align*}
$\forall i,j=1,2,\ldots,n$ and $p,q=1,2,\ldots,r$.
\end{thm}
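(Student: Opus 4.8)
The plan is to reduce the statement to the explicit computation of the twisted $\lambda$-brackets on the degree-one generators, since for an $R_L$-twisted affine vertex Lie algebra the OPEs of the generating fields in the universal enveloping vertex algebra $V^{0}(\FG)_{R_L}=\mathrm{Vac}(v_0(\FG)_{R_L})$ are completely determined, via the construction of \cite[Lemma 16.1.7]{FBZ} together with the commutator formula \eqref{eq:LLA}, by the products $u^{R_L}_{(n)}v$ for $u,v$ generators. Concretely, the singular part of $u_{R_L}(z)v_{R_L}(w)$ is $\sum_{n\geq0}(u^{R_L}_{(n)}v)(w)/(z-w)^{n+1}$, so the two displayed lists (OPEs and $\lambda$-brackets) carry identical information under the dictionary of Remark \ref{rmk:lambda}. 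It therefore suffices to produce the $\lambda$-brackets of the generators, and the two halves of the theorem become one computation.

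First I would recall from Remark \ref{rmk:eqrmatrix} that the twisted bracket is $\{u_{\lambda}v\}_{R_L}=\{R_L(u)_{\lambda}v\}+\{u_{\lambda}R_L(v)\}$, and that on $v_0(\FG)$ at level $0$ the untwisted bracket of two degree-one generators $x,y$ is simply $\{x_{\lambda}y\}=[x,y]$, with no central $\lambda$-term because $k=0$ (this is precisely why all the displayed OPEs have only a simple pole and no double-pole/central contribution). Next I would insert the explicit action of $R_L$ on the generators established in Example \ref{exa:Iwa}, namely $R_L(e^{-\alpha_i})=\frac{1}{2}e^{\alpha_i}$, $R_L(h^{j})=\frac{1}{2}h^{j}$, and $R_L(e^{\alpha_i})=\frac{1}{2}e^{-\alpha_i}$ (abbreviating $x$ for $x_{-1}.v_k$). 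Substituting these into the twisted-bracket formula for each of the six pairs of generator types and using bilinearity of $[\,\cdot,\cdot\,]$ then yields the claimed $\lambda$-brackets line by line; the matching OPEs follow immediately from the dictionary above, and applying \eqref{eq:LLA} to pass from the twisted products $u^{R_L}_{(n)}v$ to local-Lie-algebra commutators recovers exactly the relations of Example \ref{exa:Iwa}.

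The genuinely delicate input is the $R_L$-action itself, which rests on decomposing each root generator into its $\mathfrak{a}$- and $\mathfrak{b}$-components relative to the Iwasawa splitting $\FG=\mathfrak{a}\oplus\mathfrak{b}$ and applying $R_L=\frac{1}{2}(P_+-P_-)$; for instance $e^{-\alpha_i}=\frac{1}{2}(e^{\alpha_i}+e^{-\alpha_i})-\frac{1}{2}(e^{\alpha_i}-e^{-\alpha_i})$ gives $R_L(e^{-\alpha_i})=\frac{1}{2}e^{\alpha_i}$. This, however, is exactly what Example \ref{exa:Iwa} has already carried out, so no new structural obstacle arises at the level of the theorem. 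The remaining care is purely bookkeeping: tracking the cancellations among the four Lie-bracket terms produced in each pairing, confirming in particular that $\{h^{p}_{\lambda}h^{q}\}_{R_L}=0$ (both Cartan images under $R_L$ stay in $\mathfrak h$, whose bracket vanishes) and that the $e^{-\alpha_i}$--$e^{\alpha_j}$ bracket symmetrizes into the stated form. Once these finitely many identities are verified, the equivalence of OPE and $\lambda$-bracket data completes the proof.
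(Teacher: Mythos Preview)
Your proposal is correct and matches the paper's approach: the paper simply states that the theorem follows from the calculations in Example~\ref{exa:Iwa} (the introductory sentence ``The calculations in Example~\ref{exa:BD} give the following statement'' is evidently a typo for Example~\ref{exa:Iwa}), and your argument spells out exactly that reduction---use the $R_L$-action on generators computed there, plug into $\{u_{\lambda}v\}_{R_L}=\{R_L(u)_{\lambda}v\}+\{u_{\lambda}R_L(v)\}$ with the level-$0$ bracket $\{x_{\lambda}y\}=[x,y]$, and translate to OPEs via the standard dictionary. There is nothing to add.
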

\section{Generalized AKS Scheme}\label{sec:AKS}
In this section, we prove the generalized AKS scheme, we first fix some notations here. In this section and the next section, we will consider the Poisson vertex algebra constructed in the following way, let $L$ be a vertex Lie algebra with $\lambda$-bracket $\{\cdot_{\lambda}\cdot\}$, and the Poisson vertex algebra $\V:=\Sym(L)$ with $\lambda$-bracket $\{\cdot_{\lambda}\cdot\}$ extended by the Master formula \eqref{MF}. Suppose $R_{L}$ is an $R$-matrix of $L,$ and the $\lambda$-bracket $\{\cdot_{\lambda}\cdot\}_{R_{L}}$ is twisted by $R_{L}$. Extend $\{\cdot_{\lambda}\cdot\}_{R_{L}}$ on $\V$ by the Master formula and denote this Poisson vertex algebra by $\V_{R_{L}}$. In this case, we will say the Poisson vertex algebra $\V$ has $\lambda$-bracket structures $\{\cdot_{\lambda}\cdot\}$ and $\{\cdot_{\lambda}\cdot\}_{R_{L}}$.
\begin{lemm}\label{lem7.1}
Consider a vertex Lie algebra $L$ and let $R_{L}$ be an R-matrix of it. Then, $\forall a,b\in L$, identity
\begin{equation*}
\{a_{\lambda}b\}_{R_{L}}=\{R_{L}(a)_{\lambda}b\}+\{a_{\lambda}R_{L}(b)\},	
\end{equation*}
holds.
\end{lemm}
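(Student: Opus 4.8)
The plan is to unwind the two definitions in play and check that they agree. By Definition \ref{dfn:rmatrix}, the twisted structure on $L$ is governed by the field
\[
Y^{-}_{R_{L}}(a,z)=Y^{-}(R_{L}(a),z)+Y^{-}(a,z)R_{L},
\]
while by Remark \ref{rmk:lambda} the $\lambda$-bracket attached to any vertex Lie algebra structure is read off from its $n$-th products via $\{a_{\lambda}b\}=\sum_{n\geq0}\frac{\lambda^{n}}{n!}a_{(n)}b$. The twisted $\lambda$-bracket $\{\cdot_{\lambda}\cdot\}_{R_{L}}$ is, by definition, the one associated in this way to the field $Y^{-}_{R_{L}}$ of the $R_{L}$-twisted vertex Lie algebra $L_{R_{L}}$. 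Thus the entire content of the lemma is to translate the additive decomposition of $Y^{-}_{R_{L}}$ into bracket language.

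First I would apply the twisted field to a vector $b\in L$. Since $R_{L}\in\End L$ acts after the (untwisted) fields and commutes with extracting coefficients in $z$, one obtains
\[
Y^{-}_{R_{L}}(a,z)b=Y^{-}(R_{L}(a),z)b+Y^{-}(a,z)(R_{L}(b)).
\]
Reading off the coefficient of $z^{-n-1}$ on both sides identifies, for every $n\geq0$, the twisted $n$-th product as
\[
a^{R_{L}}_{(n)}b=(R_{L}(a))_{(n)}b+a_{(n)}(R_{L}(b)).
\]

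Then I would substitute these products into the defining series for the twisted $\lambda$-bracket and split the sum:
\begin{align*}
\{a_{\lambda}b\}_{R_{L}}
&=\sum_{n\geq0}\frac{\lambda^{n}}{n!}\,a^{R_{L}}_{(n)}b\\
&=\sum_{n\geq0}\frac{\lambda^{n}}{n!}(R_{L}(a))_{(n)}b+\sum_{n\geq0}\frac{\lambda^{n}}{n!}a_{(n)}(R_{L}(b))\\
&=\{R_{L}(a)_{\lambda}b\}+\{a_{\lambda}R_{L}(b)\},
\end{align*}
which is exactly the asserted identity.

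I do not expect a genuine obstacle here: the statement is essentially the $\lambda$-bracket reformulation already anticipated in Remark \ref{rmk:eqrmatrix}, so the argument is a direct manipulation of definitions. The only points deserving care are that the twisted bracket $\{\cdot_{\lambda}\cdot\}_{R_{L}}$ is precisely the one induced by $Y^{-}_{R_{L}}$ (matching the field and bracket conventions used throughout), and that $R_{L}$, being an endomorphism of $L$ rather than a field, passes cleanly through the mode extraction because it commutes with taking $z$-coefficients. With these conventions fixed, the computation above is complete.
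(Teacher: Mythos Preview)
Your proof is correct and follows essentially the same route as the paper: both arguments expand $Y^{-}_{R_L}(a,z)b$ using Definition \ref{dfn:rmatrix}, read off the $n$-th products mode by mode, and then reassemble the $\lambda$-bracket to obtain the two summands $\{R_L(a)_\lambda b\}$ and $\{a_\lambda R_L(b)\}$. The only cosmetic difference is that the paper carries the mode expansion of the operator $Y^{-}_{R_L}(a,z)$ itself before applying it to $b$, whereas you apply to $b$ first; the computations are otherwise identical.
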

\begin{proof}
$\forall a,b\in L,$ by the definition of $R$-matrix of vertex Lie algebra, we have 
\begin{align*}
Y^{-}_{R_{L}}(a,z)&=Y^{-}(R_{L}(a),x)+Y^{-}(a,z)R_{L}\\
	&=\sum_{n \geq 0}\left(R_L\left(a\right)_{(n)}\right) z^{-n-1}+\sum_{n \geq 0} a_{(n)} R_L z^{-n-1}\\
	&=\sum_{n \geq 0}\left[\left(R_L a\right)_{(n)}+a_{(n)} R_L\right] z^{-n-1}.
\end{align*}
Hence,
\begin{equation*}
	\left\{a_\lambda b\right\}_{R_L}=\sum_{n \geq 0}\left(R_L a\right)_{(n)}\left(b\right) \lambda^n+\sum_{n \geq 0} a_{(n)} R_L\left(b\right) \lambda^n.
\end{equation*}
On the other hand, we have
\begin{equation*}
	\left\{R_L (a) _{\lambda} b\right\}=\sum_{n \geq 0}\left(R_L a\right)_{(n)}\left(b\right) \lambda^n,
\end{equation*}
and
\begin{equation*}
\left\{a _\lambda R_L\left(b\right)\right\}=\sum_{n \geq 0} a_{(n)}\left(R_L\left(b\right)\right) \lambda^n	.
\end{equation*}
This completes the proof.
\end{proof}
\begin{notation}
Through Lemma \ref{lem7.2}--Theorem \ref{thm:GAKSS}, we will use the Einstein convention for the summation of the indices in $I$.
\end{notation}
\begin{lemm}\label{lem7.2}
	$\forall k,l\in I,\ r\in\mathbb{Z}_{+},$
\begin{align*}
	&\{u^{k,r}_{\lambda+\partial}u^{l}\}_{\rightarrow}=\{u^{k}_{\lambda+\partial}u^{l}\}_{\rightarrow}(-\lambda-\partial)^{r},\\
	&\{u^{l}_{\lambda+\partial}u^{k,r}\}_{\rightarrow}=(\partial+\lambda)^{r}\{u^{l}_{\lambda+\partial}u^{k}\}_{\rightarrow}	.
\end{align*}
\end{lemm}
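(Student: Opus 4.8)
The plan is to reduce both identities to sesquilinearity of the $\lambda$-bracket together with a careful reading of the arrow substitution $\mu\mapsto\lambda+\partial$. Recall from Remark \ref{rmk:lambda} (with $T=\partial$) that the $\lambda$-bracket satisfies $\{(\partial a)_\mu b\}=-\mu\{a_\mu b\}$ and $\{a_\mu(\partial b)\}=(\mu+\partial)\{a_\mu b\}$, where on the right $\partial$ acts only on the $\mathscr V$-coefficients, since $\mu$ is treated as a constant. Because $u^{k,r}=\partial^r u^k$, iterating these gives the two ``un-arrowed'' identities $\{u^{k,r}_\mu u^l\}=(-\mu)^r\{u^k_\mu u^l\}$ and $\{u^l_\mu u^{k,r}\}=(\mu+\partial)^r\{u^l_\mu u^k\}$ inside $\mathscr V[\mu]$. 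These hold before any substitution and require nothing beyond sesquilinearity.

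It helps to isolate the substitution as a map. Writing $\{a_\mu b\}=\sum_n c_n\mu^n$ with $c_n\in\mathscr V$, let $\Phi$ send this polynomial to the differential operator $\sum_n c_n(\lambda+\partial)^n$, with the $\mathscr V$-coefficients on the left and the operators $(\lambda+\partial)$ on the right; by definition $\{a_{\lambda+\partial}b\}_\rightarrow=\Phi(\{a_\mu b\})$. The two facts I would record about $\Phi$ are: (a) $\Phi(\mu^r S)=\Phi(S)(\lambda+\partial)^r$ for every $S\in\mathscr V[\mu]$, which is immediate since right multiplication by $\mu$ merely raises each exponent; and (b) $\Phi((\mu+\partial)S)=(\lambda+\partial)\Phi(S)$, where on the left $\partial$ differentiates coefficients while on the right $(\lambda+\partial)$ is the differential operator. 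Property (b) is checked on a monomial $S=c\mu^m$: the left-hand side becomes $c(\lambda+\partial)^{m+1}+(\partial c)(\lambda+\partial)^m$, while expanding the operator composition $(\lambda+\partial)\circ M_c\circ(\lambda+\partial)^m$ through $\partial\circ M_c=M_{\partial c}+M_c\circ\partial$ yields exactly the same expression.

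With these in hand the first identity is immediate: applying $\Phi$ to $\{u^{k,r}_\mu u^l\}=(-\mu)^r\{u^k_\mu u^l\}$ and invoking (a) gives $\{u^{k,r}_{\lambda+\partial}u^l\}_\rightarrow=\{u^k_{\lambda+\partial}u^l\}_\rightarrow(-1)^r(\lambda+\partial)^r$, and $(-1)^r(\lambda+\partial)^r=(-\lambda-\partial)^r$. The second identity follows by applying $\Phi$ to $\{u^l_\mu u^{k,r}\}=(\mu+\partial)^r\{u^l_\mu u^k\}$ and iterating (b) exactly $r$ times, which produces $\{u^l_{\lambda+\partial}u^{k,r}\}_\rightarrow=(\lambda+\partial)^r\{u^l_{\lambda+\partial}u^k\}_\rightarrow=(\partial+\lambda)^r\{u^l_{\lambda+\partial}u^k\}_\rightarrow$.

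The step I expect to be the genuine obstacle, and the only place where anything subtle occurs, is property (b): one must not conflate the two roles of $\partial$. On the polynomial side $\partial$ acts solely on the coefficients $c_n$, whereas on the operator side the $\partial$ inside $(\lambda+\partial)$ is a derivation that also commutes past the trailing factors $(\lambda+\partial)^m$. Reconciling the two is precisely the operator Leibniz rule, and if one prefers to argue term-by-term the relevant bookkeeping is the Pascal identity $(\lambda+\partial)^rM_h=\sum_j\binom rj M_{\partial^j h}(\lambda+\partial)^{r-j}$. An alternative that avoids introducing $\Phi$ altogether is to deduce the second identity from the first by means of the skew-symmetry $\{a_\lambda b\}=-\{b_{-\lambda-\partial}a\}$, at the cost of some sign bookkeeping; I would keep the $\Phi$-route as the cleaner option.
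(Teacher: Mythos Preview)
Your proof is correct. Both you and the paper first establish the ``un-arrowed'' identities $\{u^{k,r}_\mu u^l\}=(-\mu)^r\{u^k_\mu u^l\}$ and $\{u^l_\mu u^{k,r}\}=(\mu+\partial)^r\{u^l_\mu u^k\}$, then pass to the arrowed versions by the substitution $\mu\mapsto\lambda+\partial$. The difference is in how the un-arrowed identities are obtained: the paper plugs $u^{k,r}$ and $u^l$ directly into the Master formula \eqref{MF} and uses $\partial u^{l,0}/\partial u^{j,q}=\delta^l_j\delta^0_q$, $\partial u^{k,r}/\partial u^{i,p}=\delta^k_i\delta^r_p$ to collapse the sums, whereas you invoke sesquilinearity from Remark~\ref{rmk:lambda} directly. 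Your route is slightly more conceptual and, importantly, makes explicit via the map $\Phi$ why the passage from $(-\mu)^r$ and $(\mu+\partial)^r$ to $(-\lambda-\partial)^r$ on the right and $(\lambda+\partial)^r$ on the left is legitimate; the paper simply asserts this step. The Master-formula computation has the advantage that it stays entirely within the framework used in the rest of Section~\ref{sec:AKS}, so no separate justification of sesquilinearity for the extended bracket on $\mathcal V$ is needed.
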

\begin{proof}
By the Master formula, first, we have
\begin{align*}
	\left\{u_\lambda^{k, r} u^{l}\right\}&=\sum_{p,q\in\mathbb{Z}_{+}} \frac{\partial u^{l, 0}}{\partial u^{j,q}}(\partial+\lambda)^q\left\{u_{\lambda+\partial}^i u^j\right\}_{\rightarrow}(-\partial-\lambda)^p \frac{\partial u^{k, r}}{\partial u^{i, p}}\\
	&=\sum_{p, q\in\mathbb{Z}_{+}} \delta_j^l \delta_q^0(\partial+\lambda)^{q}\left\{u_{\lambda+\partial}^i u^j\right\}_{\rightarrow}(-\lambda-\partial)^p \delta_i^k \delta_p^r\\
	&=\left\{u^k_{\lambda}u^l\right\}(-\lambda)^r.
\end{align*}
This shows,
\begin{align*}
\{u^{k,r}_{\lambda+\partial}u^{l}\}_{\rightarrow}=\{u^{k,r}_{\lambda+\partial}u^{l}\}(-\lambda-\partial)^{r}.
\end{align*}
Hence, we proved the first equation. For the second equation, we compute $\{u^{l}_{\lambda}u^{k,r}\}$ first. By the Master formula and similar computation as above, we have
\begin{align*}
\left\{u_\lambda^l u^{k, r}\right\}&=\sum_{p, q\in\mathbb{Z}_{+}} \frac{\partial u^{k, r}}{\partial u^{j, q}}(\partial+\lambda)^{q}\left\{u_{\lambda+\partial}^i u^j\right\}_{\rightarrow}(-\partial-\lambda)^p \frac{\partial u^{l, 0}}{\partial u^{i, p}}=(\partial+\lambda)^r\left\{u_\lambda^l u^k\right\}.
\end{align*}
Therefore, $\{u^{l}_{\lambda+\partial}u^{k,r}\}_{\rightarrow}=(\partial+\lambda)^{r}\{u^{l}_{\lambda+\partial}u^{k}\}_{\rightarrow}$.
\end{proof}
\begin{IVL}\label{thm7.1}
Let $\mathcal{V}$ be a Poisson vertex algebra with $\lambda$-bracket$ \{\cdot_{\lambda}\cdot\}$ and $\{\cdot_{\lambda}\cdot\}_{R_{L}}$. Then $\forall f,g\in\mathcal{V},$ identity 
\begin{equation}\label{eq:twb}
	\{f_{\lambda}g\}_{R_{L}}=\sum_{p\in\mathbb{Z}_{+}}\{R_{L}(u^{i})_{\lambda}g\}(-\lambda-\partial)^{p}\frac{\partial f}{\partial u^{i,p}}+\sum_{q\in\mathbb{Z}_{+}}\frac{\partial g}{\partial u^{j,q}}(\partial+\lambda)^{q}\{f_{\lambda}R_{L}(u^{j})\},
\end{equation}
holds. Let $\mathfrak{z}(\V)$ be the $\lambda$-bracket center of $\V$ with respect to the $\lambda$-bracket $\{\cdot_{\lambda}\cdot\},$ then $\forall f,g\in\mathfrak{z}(\V),\ \{f_{\lambda}g\}_{R_{L}}=0$.
\end{IVL}
\begin{proof}[Proof of involution theorem]
By Lemma \ref{lem7.1} we have $\forall f,g\in\V,$
\begin{align*}
\{f_{\lambda}g\}_{R_{L}}=\sum_{p,q \in\mathbb{Z}_{+}}& \frac{\partial g}{\partial u^{j, q}}(\partial+\lambda)^q\left\{R_{L}\left(u^i\right)_{\lambda+\partial} u^j\right\}_{\rightarrow}(-\partial-\lambda)^p \frac{\partial f}{\partial u^{i, p}}\\
&+\sum_{p, q \in\mathbb{Z}_{+}} \frac{\partial g}{\partial u^{j, q}}(\partial+\lambda)^q\left\{u_{\lambda+\partial}^i R_{L}\left(u^j\right)\right\}_{\rightarrow}(-\partial-\lambda)^p \frac{\partial f}{\partial u^{i,p}}.
\end{align*}
Let 
\begin{equation*}
	(*)=\sum_{p,q \in\mathbb{Z}_{+}}\frac{\partial g}{\partial u^{j, q}}(\partial+\lambda)^q\left\{R_{L}\left(u^i\right)_{\lambda+\partial} u^j\right\}_{\rightarrow}(-\partial-\lambda)^p \frac{\partial f}{\partial u^{i, p}},
\end{equation*}
and
\begin{equation*}
	(**)=\sum_{p, q \in\mathbb{Z}_{+}} \frac{\partial g}{\partial u^{j, q}}(\partial+\lambda)^q\left\{u_{\lambda+\partial}^i R_{L}\left(u^j\right)\right\}_{\rightarrow}(-\partial-\lambda)^p \frac{\partial f}{\partial u^{i,p}}.
\end{equation*}
In order to show \eqref{eq:twb}, it suffices to verify that
\[(*)=\sum_{p\in\mathbb{Z}_{+}}\{R_{L}(u^{i})_{\lambda}g\}(-\lambda-\partial)^{p}\frac{\partial f}{\partial u^{i,p}},\]
and
\[(**)=\sum_{q\in\mathbb{Z}_{+}}\frac{\partial g}{\partial u^{j,q}}(\partial+\lambda)^{q}\{f_{\lambda}R_{L}(u^{j})\}.\]
Since $R_{L}(u^{i})$ lies in the vertex Lie algebra $L$, $R_{L}(u^{i})=\sum_{r\in\mathbb{Z}_{+}}c_{k,r}^{i}u^{k,r}.$ Thus, by Lemma \ref{lem7.1},
\begin{align*}
(*)&=\sum_{p,q \in\mathbb{Z}_{+}}\frac{\partial g}{\partial u^{j, q}}(\partial+\lambda)^q\left\{R_{L}\left(u^i\right)_{\lambda+\partial} u^j\right\}_{\rightarrow}(-\partial-\lambda)^p \frac{\partial f}{\partial u^{i, p}}\\
&=\sum_{p, q\in\mathbb{Z}_{+}} \sum_{r \in\mathbb{Z}_{+}} c^{i}_{k, r} \frac{\partial g}{\partial u^{j,q}}(\partial+\lambda)^q\left\{u_{\lambda+\partial}^{k,r} u^j\right\}_{\rightarrow}\frac{\partial f}{\partial u^{i, p}}	\\
&=\sum_{p, q\in\mathbb{Z}_{+}} \sum_{r \in\mathbb{Z}_{+}} c^{i}_{k, r} \frac{\partial g}{\partial u^{j,q}}(\partial+\lambda)^q\left\{u_{\lambda+\partial}^k u^j\right\}_{\rightarrow}(-\lambda-\partial)^r(-\partial-\lambda)^p \frac{\partial f}{\partial u^{i, p}}.	
\end{align*}
On the other hand, by the Master formula,
\begin{align*}
\sum_{p\in\mathbb{Z}_{+}}\{R_{L}(u^{i})_{\lambda}g\}&(-\lambda-\partial)^{p}\frac{\partial f}{\partial u^{i,p}}\\
&=\sum_{p,r\in\mathbb{Z}_{+}}\{c_{k,r}^{i}u^{k,r}_{\lambda}g\}(-\lambda-\partial)^{p}\frac{\partial f}{\partial u^{i,p}}	\\
&=\sum_{p,r\in\mathbb{Z}_{+}}\left(\sum_{s,q\in\mathbb{Z}_{+}}c_{k,r}^{i}\frac{\partial g}{\partial u^{j, q}}(\lambda+\partial)^{q}\left\{u_{\lambda+\partial}^t u^j\right\}_{\rightarrow}(-\lambda-\partial)^{s} \frac{\partial u^{k, r}}{\partial u^{t,s}}\right)(-\lambda-\partial)^{p}\frac{\partial f}{\partial u^{i,p}}\\
&=\sum_{p,r\in\mathbb{Z}_{+}}\left(\sum_{q\in\mathbb{Z}_{+}}c_{k,r}^{i}\frac{\partial g}{\partial u^{j, q}}(\lambda+\partial)^{q}\left\{u_{\lambda+\partial}^{k} u^j\right\}_{\rightarrow}(-\lambda-\partial)^{r}\right)(-\lambda-\partial)^{p}\frac{\partial f}{\partial u^{i,p}}\\
&=(*).
\end{align*}
Suppose $R_{L}(u^{j})=\sum_{r\in\mathbb{Z}_{+}}d_{k,r}^{j}u^{k,r},$ then by Lemma \ref{lem7.1},
\begin{align*}
(**)&=\sum_{q\in\mathbb{Z}_{+}} \frac{\partial g}{\partial u^{j, q}}(\partial+ \lambda)^q\left\{u^{j}_{\lambda} \sum_{r \in \mathbb{Z}_{+}} d_{k, r}^j u^{k, r}\right\}_{\rightarrow}(-\partial-\lambda)^p \frac{\partial f}{\partial u^{i, p}}\\
&=\sum_{p, q,r\in\mathbb{Z}_{+}} d_{k, r}^{j} \frac{\partial g}{\partial u^{j,q}}(\partial+\lambda)^q(\partial+\lambda)^{r}\left\{u_{\lambda+\partial}^j u^k\right\}_{ \rightarrow}(-\partial-\lambda)^p \frac{\partial f}{\partial u^{i, p}}.
\end{align*}
On the other hand, by the Master formula,
\begin{align*}
	\sum_{q\in\mathbb{Z}_{+}}\frac{\partial g}{\partial u^{j,q}}(\partial+\lambda)^{q}&\{f_{\lambda}R_{L}(u^{j})\}\\
 &=\sum_{q,r\in\mathbb{Z}_{+}}d_{k,r}^{j}\frac{\partial g}{\partial u^{j,q}}\{f_{\lambda}u^{k,r}\}\\
	&=\sum_{q,r,p,s\in\mathbb{Z}_{+}} \frac{\partial g}{\partial u^{j, q}}(\partial+\lambda)^q\left(d_{k, r}^j\frac{\partial u^{k, r}}{\partial u^{t, s}}(\partial+\lambda)^s\left\{u_{\lambda+\partial}^i u^t\right\} _{\rightarrow}(-\partial-\lambda)^p \frac{\partial f}{\partial u^{i, p}}\right)\\
&=\sum_{p,q,r\in\mathbb{Z}_{+}}d_{k, r}^{j}\frac{\partial g}{\partial u^{j,q}}(\partial+\lambda)^q(\partial+\lambda)^r\left\{u_{\lambda+\partial}^i u^k\right\}_{\rightarrow}(-\partial-\lambda)^p \frac{\partial f}{\partial u^{i, p}}\\
&=(**).
\end{align*}
Hence, we conclude that
\begin{align*}
\{f_{\lambda}g\}_{R_{L}}&=(*)+(**)\\
&=\sum_{p\in\mathbb{Z}_{+}}\{R_{L}(u^{i})_{\lambda}g\}(-\lambda-\partial)^{p}\frac{\partial f}{\partial u^{i,p}}+\sum_{q\in\mathbb{Z}_{+}}\frac{\partial g}{\partial u^{j,q}}(\partial+\lambda)^{q}\{f_{\lambda}R_{L}(u^{j})\}.
\end{align*}
In particular, as $f,g\in\grF(\mathfrak{z}(\widehat{\mathfrak{g}}))$, then $\{f_{\lambda}g\}_{R_{L}}=0$.
\end{proof}
Now, let $\FG$ be a finite-dimensional Lie algebra with a non-degenerate, $ad$-invariant bilinear form $\nondp$. Consider affine Kac-Moody vertex algebra $V^{k}(\FG)$ of $\FG$ at the critical level $k=-h^{\vee}$ where $h^{\vee}$ is the dual Coxter number of $\FG$. Then, by Theorem \ref{thm:Ar} and Remark \ref{rmk:level0} we have $\grF\mathfrak{z}(\widehat{g})\subset\grF V^{k}(\FG)\cong\C[\Jinf\FG^{*}]=\Sym'(v_0(\FG))$. Hence, in this case, we have the following scheme of integrability.
\begin{thm}[Generalized AKS theorem]\label{thm:GAKSS}
Let $\FG$ be a finite-dimensional Lie algebra with a non-degenerate, $ad$-invariant bilinear form $\nondp$. Consider affine Kac-Moody vertex algebra $V^{k}(\FG)$ of $\FG$ at the critical level $k=-h^{\vee}$, where $h^{\vee}$ is the dual Coxter number of $\FG$. Let $\V:=\grF V^{k}(\FG),\ R_{L}$ be a classical $R$-matrix of $v_0(\FG),$ then the Poisson vertex algebra $\V$ has two $\lambda$-bracket structures $\{\cdot_{\lambda}\cdot\}$ and $\{\cdot_{\lambda}\cdot\}_{R_{L}}$. Take $h_n\in\grF\mathfrak{z}(\kmg),\ n\in\Z_{+},$ then the sequence of functionals $\int h_n\in\V/\partial\V,\ n\in\Z$ satisfy $\{\int h_n,\int h_m\}_{R_{L}}=0$. This defines a Hamiltonian system $\frac{du^i}{dt_n}:=\{\int h_n,u^i\}$ where $i=1,2,\ldots,\dim\FG$ and $n\in\Z_{+}$.
\end{thm}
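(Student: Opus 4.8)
The plan is to deduce the statement from the Involution theorem, so that the substantive work reduces to (a) producing the two $\lambda$-bracket structures on $\V$ and (b) checking that the graded Feigin--Frenkel center lies in the $\{\cdot_\lambda\cdot\}$-center of $\V$; everything else is a formal passage from $\lambda$-brackets to functionals. For (a), Theorem \ref{thm:Ar} together with Remark \ref{rmk:level0} furnishes a Poisson vertex algebra isomorphism $\V = \grF V^{-h^\vee}(\FG) \cong \C[\Jinf\FG^*] = \Sym'(v_0(\FG))$, the right-hand side carrying the level-$0$ bracket $\{u_\lambda v\} = [u,v]$ on $v_0(\FG)$ extended by the Master formula \eqref{MF}; this is $\{\cdot_\lambda\cdot\}$. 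Because $R_L$ is a classical $R$-matrix of $v_0(\FG)$, Lemma \ref{lem7.1} supplies the twisted bracket $\{a_\lambda b\}_{R_L} = \{R_L(a)_\lambda b\} + \{a_\lambda R_L(b)\}$ on $v_0(\FG)$, and extending it again by \eqref{MF} yields the second structure $\{\cdot_\lambda\cdot\}_{R_L}$ on $\V$.

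For (b) I would argue as follows. Take $b \in \mathfrak{z}(\kmg) = \mathfrak{z}(V^{-h^\vee}(\FG))$; by Definition \ref{dfn:CVA} one has $[Y(a,z),Y(b,w)] = 0$ for all $a$, and since this relation is symmetric in $a$ and $b$ under the skew-symmetry of the vertex algebra, $b_{(n)}a = 0$ for every $a$ and every $n \geq 0$. Writing $\sigma_p(b)$ for the symbol of $b$ in the Li filtration, the graded bracket \eqref{eq:thmLi3} then gives $\sigma_p(b)_{(n)}\sigma_q(a) = \sigma_{p+q-n}(b_{(n)}a) = 0$ for $n \geq 0$, so $\sigma_p(b) \in \mathfrak{z}(\V)$. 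Hence $\grF\mathfrak{z}(\kmg) \subseteq \mathfrak{z}(\V)$, which is precisely the centrality hypothesis required to invoke the Involution theorem.

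With both points in place the conclusion is essentially formal. For $h_n, h_m \in \grF\mathfrak{z}(\kmg) \subseteq \mathfrak{z}(\V)$ the Involution theorem yields $\{(h_n)_\lambda h_m\}_{R_L} = 0$, and applying Lemma \ref{Basic Lemma}(1) to the Poisson vertex algebra $(\V, \{\cdot_\lambda\cdot\}_{R_L})$ produces $\{\int h_n, \int h_m\}_{R_L} = \int \{(h_n)_\lambda h_m\}_{R_L}|_{\lambda=0} = 0$, the asserted commutativity. Finally, Definition \ref{dfn:HEQ} and Theorem \ref{thm3.1} let the functionals $\int h_n$ generate, through the twisted bracket, the evolutions $\frac{du^i}{dt_n} = \{\int h_n, u^i\}_{R_L} = H_{R_L}\frac{\delta}{\delta u}\int h_n$; selecting infinitely many linearly independent $h_n$ from the infinite-dimensional graded Feigin--Frenkel center and invoking the involution just established shows each such flow to be Liouville integrable in the sense of Definition \ref{dfn:LI}.

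I expect the one genuinely delicate point to be the inclusion $\grF\mathfrak{z}(\kmg) \subseteq \mathfrak{z}(\V)$ of step (b): one must verify that centrality in the vertex algebra, phrased via vanishing commutators, really survives the passage to the associated graded under the Li filtration, and this is exactly where the skew-symmetry of $Y$ and the multiplicative rule \eqref{eq:thmLi3} are indispensable. Everything downstream of the Involution theorem is then routine bookkeeping.
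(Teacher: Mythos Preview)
Your proposal is correct and follows essentially the same route as the paper: the paper's own proof is the single sentence ``This scheme follows from the involution theorem, Theorem~\ref{thm:Ar}, and Remark~\ref{rmk:level0},'' and your argument is precisely an unpacking of that sentence. Your explicit verification of step~(b), namely that $\grF\mathfrak{z}(\kmg)\subseteq\mathfrak{z}(\V)$ via the graded $(n)$-products of~\eqref{eq:thmLi3}, is a detail the paper leaves implicit (absorbed into the citation of Theorem~\ref{thm:Ar}), and your appeal to Lemma~\ref{Basic Lemma} to pass from $\{(h_n)_\lambda h_m\}_{R_L}=0$ to $\{\int h_n,\int h_m\}_{R_L}=0$ is likewise a routine step the paper omits; both are correct and appropriate.
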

\begin{proof}
This scheme follows from the involution theorem, Theorem \ref{thm:Ar}, and Remark \ref{rmk:level0}.
\end{proof}
\subsection{Classical and generalized AKS scheme}\label{subsec:CGAKS} 
We show that in a special case, the generalized AKS scheme in the affine vertex Lie algebra case will be back to the classical AKS scheme of Lie algebra. Let $\FG$ be a finite-dimensional Lie algebra with a non-degenerate, $ad$-invariant bilinear form $\nondp$, and suppose $\FG$ has basis $u^i,\ i=1,2,\ldots\e$ where $\e=\dim\FG$. Consider the affine vertex Lie algebra associated with $\FG$ at level 0, i.e., $v_0(\FG)$. Denote the $\lambda$-bracket on the $v_{0}(\FG)$ by $\{\cdot_{\lambda}\cdot\}.$
Use the same notations as in Section \ref{subsec:CRLCRB}, the vertex Lie subalgebra $v_0(\FG)_0$ gives a Lie algebra with Lie bracket $\{\cdot_{\lambda}\cdot\}$. Recall as a vector space 
\[v_0(\FG)_0=\Span_{\C}\{u^i\mid i=1,2,\ldots,\e\}.\] In addition, as we indicated in Section \ref{subsec:CRLCRB}, the classical $R$-matrix of vertex Lie algebra $v_0(\FG)_0$ gives a classical $R$-matrix of Lie algebra $v_0(\FG)_0$. Let $L:=v_0(\FG)_0$ and $\V:=\Sym(L)$ be the Poisson vertex algebra with $\lambda$-bracket extended from $L$. Note that in this case $\forall f,g\in\V$ by the Master Formula,
\begin{equation}\label{eq:eqPoisson}
\{f_{\lambda}g\}=\frac{\partial g}{\partial u^{j}}\left\{u^{i}_{ \partial+\lambda} u^{j}\right\}_{\rightarrow}\frac{\partial f}{\partial u^{i}}=\frac{\partial g}{\partial u^j}\{u^i_{\lambda}u^j\}\frac{\partial f}{\partial u^i}.
\end{equation}
Since $v_0(\FG)_0\cong\FG$, consider $v_0(\FG)_0$ as a Lie algebra, the computation \eqref{eq:eqPoisson} shows that the $\lambda$-bracket structure on the $\Sym(L)$ is the same as Poisson bracket structure on the $\Sym(\FG)$. 
Therefore, we can recover all results in the classical AKS scheme by using the generalized AKS scheme.
\section{Applications}\label{sec:app}
In this section, we give several examples of our scheme. In particular, we find systems of PDEs, ODEs, and their Poisson structures associated with Poisson vertex algebras $\mathrm{Sym}'(v_{0}(\mathfrak{sl}_2))$ and $\mathrm{Sym}'(v_{0}(\mathfrak{sl}_3))$.
For convenience, we introduce the following definitions.
\begin{dfn}\label{dfn:RPS}
Let $v_k(\mathfrak{g})$ be an affine vertex Lie algebra, $R_{L}$ be an $R$-matrix of $v_{k}(\mathfrak{g})$, and $\mathrm{Sym}(v_{k}(\mathfrak{g}))$ be the Poisson vertex algebra equipped with the $\lambda$-bracket extended from $\lambda$-bracket of $v_{k}(\mathfrak{g})$ by the Master Formula. As we described in Section \ref{sec:AKS}, $\Sym(v_k(\FG))$ has two $\lambda$-bracket structures. Let $\mathcal{V}:=\Sym'(v_k(\FG))$, we obtain two $\lambda$-bracket inherited from $\Sym(v_k(\FG))$. We denote the Poisson structure on $\mathcal{V}$ given by the $\lambda$-bracket $\{\cdot_{\lambda}\cdot\}_{R_{L}}$ by $H_{R_{L}}(\partial)$. Also, we will denote the new Poisson vertex algebra by $\mathcal{V}_{R_L}$.
\end{dfn}
In the following, we will consider the affine vertex algebras at level $0$. Following the same notations as in Definition \ref{dfn:RPS}, consider an affine vertex Lie algebra $v_0(\FG)$, regarding Remark \ref{rmk:level0rmatrix}, any Lie algebra factorization $\FG\cong\mathfrak{a}\oplus\mathfrak{b}$ gives an $R$-matrix associated with Lie algebra decomposition $\mathfrak{a}\oplus\mathfrak{b}$. For simplicity, we will call the Poisson structure $H_{R_L}(\partial)$ on the $\mathcal{V}_{R_L}$ the Poisson structure associated with the Lie algebra decomposition $\mathfrak{a}\oplus\mathfrak{b}$. Recall that in the Examples \ref{exa:BD} and \ref{exa:Iwa}, we have denoted the basis elements of $v_k(\FG)$ by $e^{-\alpha_i}_{-1}.v_{k},h^{j}_{-1}.v_{k},e^{\alpha_i}_{-1}.v_{k}$.
In the following examples, we will denote the image of $e^{-\alpha_i}_{-1}.v_{k},h_j:=h^{j}_{-1}.v_{k},e_{i}:=e^{\alpha_i}_{-1}.v_{k}$ in the modified Poisson vertex algebra $\Sym'(v_0(\FG))$ by $f_i,h_j,e_i$, respectively.
\begin{exa}\label{exa:BSL2O}
By the data we computed in Example \ref{exa:BD}, the Poisson structure associated with the Borel decomposition of $\mathfrak{sl}_{2}$ is 
$$H_{R_{L}}(\partial)=\begin{bmatrix}0&0&0\\0&0&2e\\0&-2e&0\end{bmatrix}.$$
Take $\int h_n=\int S^{n}$ then the associated system is
\begin{align*}
&\frac{df}{dt_1}=0,\\
&\frac{dh}{dt_1}=4ef,\\
&\frac{de}{dt_1}=-2eh.
\end{align*}
This system admits a special solution:
$$f=1,\ h=\frac{1}{t_1+C},\ e=-\frac{1}{4}(\frac{1}{t_1+C})^{2},$$ where $C$ is an integration constant.
\end{exa}
\begin{exa}\label{exa:ISL2O}
By the data computed in Example \ref{exa:Iwa}, the Poisson structure associated with the Iwasawa decomposition of $\mathfrak{sl}_2$ is 
$$H_{R_{L}}(\partial)=\begin{bmatrix}0&f-e&0\\-f+e&0&e-f\\0&-e+f&0\end{bmatrix}.$$
Take $\int h_{n}:=\int S^{n},\ n=1,2,\cdots$ where $S$ is the Sugawara operator of $V^{-2}(\mathfrak{sl}_{2})$ i.e., $S=\frac{1}{2}h^{2}+2fe$. Then the first system of our hierarchy is
\begin{align*}
\frac{d f}{d t_{1}}=-h(f-e),\ \frac{d h}{d t_{1}}=2(f-e)(f+e),\ \frac{d e}{d t_{1}}=-h(f-e).
\end{align*}
This system is easy to solve, first note that $(f-e)_{t_{1}}=0$, hence we may choose $f-e=\frac{1}{2}$. Let $\xi:=f+e$, $\eta:=h$, then this system is equivalent to 
\begin{align*}
&\frac{d\xi}{d t_1}=-\eta,\\
&\frac{d\eta}{d t_1}=\xi.
\end{align*}
This is an oscillator equation (see e.g. \cite{LL1}): $$\frac{d^{2}\eta}{dt_1^{2}}+\frac{d\eta}{d t_1}=0.$$
\end{exa}
\begin{exa}\label{exa:BSL2P}
Take the same Poisson structure as in Example \ref{exa:BSL2O}.
Let $\int h_n:=\int SS^{(n)},\ n=1,2,\cdots$, in particular, note that $n=1$ gives a trivial flow. The second flow is,
\begin{equation}\label{eq:BSL2P}
\begin{aligned}
&\frac{d f}{dt_2}=0,\\
&\frac{d h}{dt_2}=-8 fe({h^{\prime}}^2+4 f^{\prime} e^{\prime}+2 e f^{\prime\prime}+hh^{\prime\prime}+2 f e^{\prime\prime}),\\
&\frac{d e}{d t_2}=4 he({h^{\prime}}^2+4 f^{\prime}e^{\prime}+2 e f^{\prime \prime}+hh^{\prime\prime}+2f e^{\prime\prime}).
\end{aligned}
\end{equation}
This system may be viewed as infinitely many integrable systems parameterized by $t_2$-independent function $f$. For example, let $f:=\frac{1}{8},$ then our system is equivalent to system,
\begin{align*}
&\frac{dh}{dt_2}=-e({h^{\prime}}^2+hh^{\prime\prime}+\frac{1}{4}e^{\prime\prime}),\\
&\frac{de}{dt_2}=4he({h^{\prime}}^{2}+hh^{\prime\prime}+\frac{1}{4}e^{\prime\prime}).
\end{align*}
Take $f=0$, then we may choose $h=\frac{1}{4}$, and our equations deduce to \begin{equation}\label{eq:BSL2P1}
\frac{de}{dt_2}=ee^{\prime\prime}.
\end{equation}
We seek a traveling wave solution of \eqref{eq:BSL2P1} with form $f(\xi)=f(x+ct_2)$. Pluging ansatz $f(\xi)$ into \eqref{eq:BSL2P1} separating variables and integrating, we obtain that, $$c\ln f-A=\frac{df}{d\xi},$$ where $A$ is an integration constant. Setting $A:=0$, and integrating it once more, we obtain that, 
$$\xi+\xi_0=\frac{1}{c}\Ei(\ln f),$$
where $\Ei(x)$ is the exponential integral, i.e., $\Ei(x):=-\int_{-x}^{\infty}\frac{e^{-t}}{t}dt$, see \cite{GT} and \cite{Ba1}. The function $\Ei(x)$ is a higher transcendental function \cite{Ba1}, see also \cite{WW}.
Also, we seek a traveling solution for the system \eqref{eq:BSL2P} in full generality. Note,
\[{h^{\prime}}^2+4 f^{\prime} e^{\prime}+2 e f^{\prime\prime}+hh^{\prime\prime}+2 f e^{\prime\prime}=\frac{1}{2}\partial^{2}_{x}h^2+2\partial^{2}_{x}fe,\]
and $\frac{d f}{d t_2}=0$, hence we may suppose $f(x,t_2)=\frac{1}{4}C(x).$
Let $\xi(x,t_2):=h(x,t_2),\ \eta(x,t_2):=e(x,t_2),$ then the system \eqref{eq:BSL2P} is equivalent to 
\begin{equation*}
\begin{aligned}
&\frac{d\xi}{dt_2}=-4C(x)\eta\partial_{x}^{2}(\frac{1}{2}h^2+\frac{1}{2}C(x)\eta),\\
&\frac{d\eta}{dt_2}=4\xi\eta\partial_{x}^{2}(\frac{1}{2}h^2+\frac{1}{2}C(x)\eta).
\end{aligned}
\end{equation*}
Therefore, $\frac{d}{dt_2}(\frac{1}{2}\xi^2+\frac{1}{2}C(x)\eta)=0.$ This means that quantity $\frac{1}{2}\xi^2+\frac{1}{2}C(x)\eta$ is a first integral of system \eqref{eq:BSL2P3}. Let $\frac{1}{2}B(x)^2:=\frac{1}{2}\xi^2+\frac{1}{2}C(x)\eta,$ then
\begin{align}
&\frac{d\xi}{dt_2}=-2C(x)(B(x)^{2})_{xx}\eta=-2(B(x)^2)_{xx}B(x)^2+2(B(x)^2)_{xx}\xi^2,\label{eq:BSL2P2}\\
&(\frac{d\eta}{dt_2})^2=4\xi^2\eta^2(B(x)^2)_{xx}^{2}=4\eta^2(B(x)^2)_{xx}^2(B(x)^2-C(x)\eta).\label{eq:BSL2P3}
\end{align}
From equation \eqref{eq:BSL2P3}, we find
\begin{equation}\label{eq:BSL2P4}
\frac{d\eta}{dt_2}=2\eta(B(x)^2-C(x)\eta)^{\frac{1}{2}}(B(x)^2)_{xx}.
\end{equation}
Separating variables and integrating \eqref{eq:BSL2P4} once, we have
\[\int\frac{1}{2\eta\sqrt{B^2-C(x)\eta}}d\eta=\int (B(x)^2)_{xx} dt_2.\]
This gives that
\begin{equation*}
-\frac{1}{B}\tanh^{-1}[\frac{\sqrt{B^2-C(x)\eta}}{B(x)}]=(B(x)^{2})_{xx}t_2+c(x),
\end{equation*}
where $c(x)$ is a function only depends on $x$.
Therefore, by rescaling the integration constant,
\begin{equation*}
\eta(x,t_2)=\frac{B^2(x)}{C(x)}\sech^2[-B(B^2)_{xx}t_2+c(x)]=\frac{B^2(x)}{C(x)}\sech^{2}\phi,
\end{equation*}
where $\phi=-B(B^2)_{xx}t_2+c(x).$ Using formula $\frac{1}{2}B(x)^2:=\frac{1}{2}\xi^2+\frac{1}{2}C(x)\eta,$ we find
\[\xi(x,t_2)=B(x)\tanh\phi.\]
Choose $c(x)=\frac{\sqrt{c}}{2}x$, where $c$ is a real number, $B(x)=\frac{9^{\frac{1}{3}}c^{\frac{1}{2}}}{2}(x-x_0)^{\frac{2}{3}}$,  where $x_0$ is a constant, and $C(x)=\frac{c}{2B(x)^2}$ then 
\begin{equation}\label{eq:preKdV}
\eta(x,t_2)=\frac{c}{2}\sech^{2}(\frac{\sqrt{c}}{2}(x-ct_2)).
\end{equation}
is coincident with the KdV one soliton. Also, in this case
\[\xi(x,t_2)=B(x)\tanh{(\frac{\sqrt{c}}{2}(x-ct_2))}\]
Rewrite equation \eqref{eq:BSL2P2} as 
\begin{equation}\label{eq:BSL2PN1}
\frac{d\xi}{d t_2}=E(x)\xi^2-F(x),
\end{equation}
where $E(x):=2(B(x)^2)_{xx},\ F(x):=2(B(x)^2)_{xx}B(x)^2$. We obtain a Ricatti equation whose coefficients depend on the parameter $x$. Hence, equation \eqref{eq:BSL2PN1} can be linearized: Let $\tau(t_2)$ be a function satisfying $\xi(t_2)=-\frac{1}{E(x)}\frac{d}{d t_2}\log\tau(t_2)$. Then, the equation \eqref{eq:BSL2PN1} is equivalent to
\begin{equation}\label{eq:BSL2PN2}
\frac{d^2\tau}{d t_2^2}-P(x)^2\tau=0,
\end{equation}
where $P(x)=2(B(x)^{2})_{xx}B(x)$. Take two linearly independent solution of equation \eqref{eq:BSL2PN2}, say $\tau_1(x,t_2)=\exp{(-c(x)+P(x)t_2)}$ and $\tau_2(x,t_2)=\exp{(c(x)-P(x)t_2)}$ where $c(x)$ is an arbitrary function only depends on variable $x$. We see all solutions of \eqref{eq:BSL2PN2} have form
\[\tau(x,t_2)=c_1e^{-c(x)+P(x)t_2}+c_2e^{c(x)-P(x)t_2},\]
for some $c_1,c_2\in\C$. This means all solutions of \eqref{eq:BSL2PN1} have form
\[\xi(x,t_2)=-\frac{1}{E(x)}\frac{c_1\tau'_1(t_2)+c_2\tau'_2(t_2)}{c_1\tau_1(t_2)+c_2\tau_2(t_2)},\]
where $\prime$'s are derivative with respect to the variable $t_2$.
Since this solution is completely determined by the ratio of $c_1,c_2$, we may identify them with points on $\mathbb{CP}^{1}$. Note that by the choice of $B(x)$, $E(x)=-\frac{1}{B(x)}$. Hence,
\begin{align*}
\xi(x,t_2)&=-\frac{B(x)}{P(x)}\frac{c_1P(x)\tau_1(t_2)-c_2P(x)\tau_2(t_2)}{c_1\tau_1(t_2)-c_2\tau_2(t_2)}\\
&=B(x)\frac{-c_1 e^{-\zeta}+c_2 e^{\zeta}}{c_1 e^{-\zeta}+c_2 e^{\zeta}},
\end{align*}
where $\zeta:=c(x)-P(x)t_2$. Note that as $[c_1:c_2]=[1:1],$ $\xi(x,t_2)=B(x)\tanh\zeta$. Therefore, by identity $\frac{1}{2}B(x)^2:=\frac{1}{2}\xi^2+\frac{1}{2}C(x)\eta,$
\[\eta(x,t_2)=\frac{B(x)^{2}}{C(x)}\sech^2\zeta.\]
Let $c\in\mathbb{R}$, choose $c(x):=\frac{\sqrt{c}}{2}x$, $B(x):=(-\frac{9}{16})^{\frac{1}{3}}c^{\frac{1}{2}}(x-x_0)^{\frac{2}{3}}$, where $x_0$ is a constant. We find
\begin{equation*}
\eta(x,t_2)=\frac{c}{2}\sech^2(\frac{\sqrt{c}}{2}(x-ct_2)).
\end{equation*}
We see this solution as the same as equation \eqref{eq:preKdV}.
\end{exa}
\begin{exa}\label{exa:ISL2P}
Take the same Poisson structure as in Example \ref{exa:ISL2O}.
Let $\int h_n:=\int SS^{(n)},$ $\ n=1,2,\ldots$, in particular, note that $n=1$ gives a trivial system. The second system, i.e., $n=2$, is
\begin{equation*}
\begin{aligned}
&\frac{df}{dt_2}=2 h(f-e)({h^{\prime}}^2+4 f^{\prime} e^{\prime}+2 e f^{\prime \prime}+hh^{\prime \prime}+2 f e^{\prime \prime}),\\
&\frac{dh}{dt_2}=-4(f-e)(f+e)({h^{\prime}}^2+4 f^{\prime} e^{\prime}+2 e f^{\prime \prime}+hh^{\prime \prime}+2 fe^{\prime \prime}),\\
&\frac{de}{dt_2}=2 h(f-e)({h^{\prime}}^2+4 f^{\prime} e^{\prime}+2 ef^{\prime \prime}+hh^{\prime \prime}+2f e^{\prime \prime}).
\end{aligned}
\end{equation*}
Observing this system, we see the first and third equations guarantee that $f-e$ is a first integral. Therefore, we may put $f-e=C(x).$ Let
\begin{equation}\label{eq:newvar}
\xi(x,t_2):=h(x,t_2),\ \eta(x,t_2):=f(x,t_2)+e(x,t_2),
\end{equation}
and note
\[{h^{\prime}}^2+4 f^{\prime} e^{\prime}+2 e f^{\prime \prime}+hh^{\prime \prime}+2 f e^{\prime \prime}=\partial_x^{2}(\frac{1}{2}h^2+2fe).\]
In addition, Equation \eqref{eq:newvar} implies $2fe=\frac{1}{2}\eta^2-\frac{1}{2}C(x)^2.$
Hence, our system is equivalent to
\begin{align*}
&\frac{d\xi}{dt_2}=-4C(x)\eta\partial_x^{2}(\frac{1}{2}\xi^2+\frac{1}{2}\eta^2-\frac{1}{2}C(x)^2),\\
&\frac{d\eta}{dt_2}=4C(x)\xi\partial_x^{2}(\frac{1}{2}\xi^2+\frac{1}{2}\eta^2-\frac{1}{2}C(x)^2).
\end{align*}
This implies that
\[\xi\frac{d\xi}{dt_2}=-\eta\frac{d\eta}{dt_2},\]
and hence $\frac{d}{dt_2}(\frac{1}{2}\xi^2+\frac{1}{2}\eta^2)=0.$ Namely, $\frac{1}{2}\xi^2+\frac{1}{2}\eta^2$ is a first integral. Let $\frac{1}{2}\xi^2+\frac{1}{2}\eta^2=\frac{1}{2}A(x)^2$ then
\begin{align*}
&\frac{d\xi}{dt_2}=-4C(x)P_{xx}\eta,\\
&\frac{d\eta}{dt_2}=4C(x)P_{xx}\xi,
\end{align*}
where $P(x):=\frac{1}{2}A(x)^2-\frac{1}{2}C(x)^2.$ Note that if $A(x)=C(x)$ we get trivial equation. Hence, we may suppose $A(x)\neq C(x).$
The identity $\frac{1}{2}\xi^2+\frac{1}{2}\eta^2=\frac{1}{2}A(x)^2$ gives
\begin{equation}\label{eq:ISL2P2}
\frac{d\xi}{dt_2}=-4C(x)P_{xx}\sqrt{A(x)^2-\xi^2}.
\end{equation}
Use separation of variables and integrate \eqref{eq:ISL2P2} once. We find
\begin{equation*}
\int\frac{d\xi}{\sqrt{A(x)^2-\xi^2}}=-4C(x)P_{xx}\int dt_2.
\end{equation*}
Therefore,
\[\mathrm{tan}^{-1}(\frac{\xi}{\sqrt{A(x)^2-\xi^2}})=-4C(x)P_{xx}(t_2+\zeta_0(x)),\]
where $\zeta_0(x)$ does not depend on $t_2$.
Let $\psi(x,t_2)=-4C(x)P_{xx}(t_2+\zeta_0)$ then
\begin{equation}\label{eq:TrSol}
\xi(x,t_2)=\frac{A(x)\mathrm{tan}\psi}{\sqrt{1+\mathrm{tan}^{2}\psi}}=\frac{A(x)\mathrm{tan}\psi}{\mathrm{sec}\psi}=A(x)\mathrm{sin}\psi.
\end{equation}
Using the identity $\frac{1}{2}\xi^2+\frac{1}{2}\eta^2=\frac{1}{2}A(x)^2,$ we find 
\begin{equation}\label{eq:TrSol1}
\eta(x,t_2)=A(x)\cos\psi.
\end{equation}
One writes solution \eqref{eq:TrSol} in terms of a traveling wave solution. To do this, suppose $A(x)$ and $-4C(x)P_{xx}=\alpha$ are constants with respect to both $x,t_2$. Since $A(x)$ is a constant with respects to the variable $x$,
\[-4C(x)P_{xx}=-4C(x)\frac{\partial^2}{\partial x^2}(\frac{1}{2}A^{2}-\frac{1}{2}C(x)^{2})=2C(x)\frac{\partial^{2}C(x)^{2}}{\partial x^2}=\alpha.\] 
Choose $C(x)=\frac{(9\alpha)^{\frac{1}{3}}}{2}(x-x_0)^{\frac{2}{3}}$ and $\zeta_0(x):=\frac{x}{\alpha},$ then from \eqref{eq:TrSol} and \eqref{eq:TrSol1} we have
\[\xi(x,t_2)=A\sin(x+\alpha t_2),\]
and
\[\eta(x,t_2)=A\cos(x+\alpha t_2).\]
\end{exa}
\section{Appendix I: The first few hierarchies of $\fraksl_3$}\label{sec:APPIII}
\subsection{Poisson structure associated with the Borel decomposition of $\fraksl_3$}
The Poisson structure associated with the Borel decomposition of $\mathfrak{sl}_3$ is
$$H_{R_{L}}(\partial)=\begin{bmatrix}
0 & f_3 & 0 & 0 & 0 & 0 & 0 & 0 \\
-f_3 & 0 & 0 & 0 & 0 & 0 & 0 & 0 \\
0 & 0 & 0 & 0 & 0 & 0 & 0 & 0 \\
0 & 0 & 0 & 0 & 0 & 2 e_1 & -e_2 & e_3 \\
0 & 0 & 0 & 0 & 0 & -e_1 & 2 e_2 & e_3 \\
0 & 0 & 0 & -2 e_1 & e_1 & 0 & e_3 & 0 \\
0 & 0 & 0 & e_2 & -2 e_2 & -e_3 & 0 & 0 \\
0 & 0 & 0 & -e_3 & -e_3 & 0 & 0 & 0
\end{bmatrix}.$$
Let $S_{1}$ be the quadratic Sugawara operator of $\mathfrak{sl}_2$ and $\int h_n=\int S_1^{n},$ then the first system is given by 
\begin{align*}
&\frac{d f_1}{d t_1}=-2f_3e_2,\ \frac{d f_2}{d t_1}=2f_3e_1,\\
&\frac{d f_3}{d t_1}=0,\ \frac{d h_1}{dt_1}=-4 f_1e_1+2f_2e_2-2f_3e_3,\\
&\frac{d h_2}{d t_1}=2 f_1e_1-2f_2e_2-2f_3e_3,\ \frac{d e_1}{d t_1}=2 h_1e_1-2f_2e_3,\\
&\frac{d e_2}{d t_1}=\frac{2}{3}(-h_1e_2+h_2e_2+3f_1e_3),\ \frac{d e_3}{d t_1}=2(h_1+h_2) e_3.
\end{align*}
Take the cubic Sugawara operator $S_2$ of $\mathfrak{sl}_3$ and $\int h_n=\int S_2^{n}$, the first system is a system of ODEs that takes the form
\begin{align*}
\frac{d f_1}{d t_1}&=\frac{1}{3} f_3\left(2 h_1e_2+h_2e_2-3 f_1e_3\right),\\
\frac{d f_2}{d t_1}&=f_3(h_1e_1+f_2e_3),\\
\frac{d f_3}{d t_1}&=0,\\
\frac{d h_1}{d t_1}&=\frac{1}{3}(-f_2(2 h_1+h_2)e_2+f_3(-h_1+h_2) e_3-3f_1(2h_1e_1+f_2e_3)),\\
\frac{d h_2}{d t_1}&=\frac{1}{3}(f_2(2h_1+h_2)e_2+f_3(-h_1+h_2) e_3+3 f_1(h_1e_1-h_1e_3)),\\
\frac{d e_1}{d t_1}&=\frac{1}{3}(h_1^2 e_1+6 f_1e_1^2-3 f_2e_1e_2+f_2 h_2e_3+ 3f_3e_1(-h_1+2h_2+e_3)+2 h_1(h_2e_1+f_2e_3)),\\
\frac{d e_2}{d t_1}&=\frac{1}{9}(-h_1^2 e_2+e_2(-h_2^2-9 f_1e_1+3 f_2e_2)+
3 f_3e_2(3h_1-3h_2-2e_3)+h_1(-4h_2e_2+9 f_1e_3)),\\
\frac{d e_3}{d t_1}&=\frac{1}{3}(h_1^2-h_2^2+3 f_3(h_1+h_2)+3 f_1e_1-3f_2e_2)e_3.
\end{align*}
\newpage
\subsection{Poisson structure associated with the Iwasawa decomposition of $\fraksl_3$}
By the computations in Example \ref{exa:Iwa}, the Poisson structure associated with the Iwasawa decomposition of $\mathfrak{sl}_3$ is:
\[
 \renewcommand{\arraystretch}{1.3}
 \frac{1}{2}\begin{bmatrix}
 0 & -h_1 & f_2-e_2 & -2f_1+2e_1 &f_1+e_1 & 0 &-e_3 & 0\\
h_1&0&f_1+e_1&f_2-e_2&f_2-e_2&-f_3+e_8&0&0\\
-f_2+e_2&-f_1-e_1&0&-f_3+e_3&-f_3+e_3&0&0&0\\
2f_1-2e_1&-f_2+e_2&f_3-e_3&0&0&2f_1-2e_1&f_2-e_2&f_3-e_3\\
-f_1-e_1&-f_2+e_2&f_2-e_3&0&0&-f_1-e_1&2f_2-2e_2&f_3-e_8\\
0&f_3-e_3&0&-2f_1+2e_1&f_1+e_1&0&0&f_2-e_2\\
e_3&0&0&-f_2-e_2&-f_2+e_2&0&0&-f_1+e_1\\
0&0&0&-f_3+e_3&-f_3+e_3&-f_2+e_2&f_1-e_1&0
\end{bmatrix}
\]
Let $S_1,\ S_2$ be the quadratic and cubic Sugawara operators of $V^{-3}(\mathfrak{sl}_3)$ respectively. Let $\int h^{i}_n:=\int S_i^{n}$. Then, the first system, i.e., $n=1$ of $\int h_n^{1}$ is
\begin{align*}
\frac{df_1}{dt_1}&=\frac{1}{3}(-3 f_1 h_1+4 h_2e_1+h_1(5 e_1-3 e_2)-3 e_2 e_3),\\
\frac{df_2}{dt_1}&=f_2(h_1+h_2)+h_1 e_1-h_1 e_2-h_2e_2-f_1(f_3-2 e_3)+e_1 e_3,\\
\frac{df_3}{dt_1}&=-f_3(h_1+h_2)-f_2e_1-f_1 e_2+h_1 e_3+h_2 e_3,\\
\frac{dh_1}{dt_1}&=2 f_1^2+f_2^2+f_3^2-2 e_1^2+e_2^2-e_3^2,\\
\frac{dh_2}{dt_1}&=-f_1^2+2 f_2^2+f_3^2-2 f_1 e_1-e_1^2-3 f_2 e_2+e_2^2-e_3^2,\\
\frac{de_1}{dt_1}&=f_2f_3-f_1h_1+\frac{5}{3} h_1 e_1+\frac{4}{3} h_2 e_1-e_2 e_3,\\
\frac{de_2}{dt_1}&=-f_1f_3-\frac{1}{3}f_2(4 h_1+5 h_2)+f_3 e_1+h_2 e_2+e_1e_3,\\
\frac{de_3}{dt_1}&=-f_3(h_1+h_2)-f_2e_1+f_1e_2+h_1e_3+h_2e_3.
\end{align*}
The first hierarchy of $\int h_n^{2}$ , i.e., $n=1$ is,
\begin{align*}
\frac{df_1}{dt_1}&=\frac{1}{6}(-6 {f_1}^2 e_1-3f_3h_1e_1+h_1^2 e_1+6 f_3 h_1e_1+2 h_1h_2e_1)\\
&-3h_1h_2e_2+f_1(-h_1^2-2h_1h_2+6 e_1^2+3f_2e_2+3f_3(h_1-2h_2-e_3))\\
&+3 f_3e_1e_3-h_1e_2e_3+h_2 e_2e_3-f_2(f_3(2 h_1+h_2)+3 e_1e_2-3h_1(h_2+e_3))),\\
\frac{df_2}{dt_1}&=\frac{1}{6}(3 h_1h_2e_1+6f_3h_1e_2-3 f_3 h_1e_2-2h_1h_2e_2-h_2e_2+h_1e_1e_3-h_2e_1e_3-3f_3e_2e_3\\&+f_2(h_2(2 h_1+h_2)+3 f_3(-2h_1+h_2+e_3))+f_1(-3 f_3h_1+3f_2e_1
-3 e_1 e_2+h_2e_3+h_1(-3e_2+5e_3))),\\
\frac{df_3}{dt_1}&=\frac{1}{6}(-3 f_3^2(h_1+h_2)-3 f_2 h_1e_1-2f_1 h_1e_2-f_1h_2e_2+5h_1e_1e_2+h_2e_1e_2\\&+3 f_1^2 e_3-3 f_2^2 e_3+h_1^2e_3-h_2^2 e_3+f_3(-h_1^2+h_2^2-3f_1e_1+3f_2 e_2+3h_1 e_3+3 h_2e_3)),\\
\frac{dh_1}{dt_1}&=\frac{1}{6}(3 f_1^2 h_1+f_3^2 h_1-f_3^2 h_2+3 f_3h_1h_2-2f_2^2(2 h_1+h_2)+3h_1e_1^2+4 h_1e_2^2\\
&+2 h_2e_2^2-3 h_1h_2e_3+3 f_2e_1e_3-h_1e_3^2+h_2e_3^2+3 f_1(f_2 f_3-2 e_2 e_3)),\\
\frac{dh_2}{dt_1}&= \frac{1}{6}(3 f_1^2 h_1+f_3^2 h_1-f_3^2 h_2+3 f_3 h_1h_2-2 f_2^2(2 h_1+h_2)+3 h_1e_1^2\\
&+4 h_1 e_2^2+2h_2e_2^2-3 h_1h_2e_3+3 f_2e_1e_3-h_1e_3^2+h_2e_3^2+3 f_1(f_2 f_3-2 e_2e_3)),\\
\frac{de_1}{dt_1}&=\frac{1}{18}(9 f_1^2(f_2-2 e_1)+f_1(-5 h_1^2-2 h_1 h_2+4 h_2^2+18 e_1^2+6 f_2 e_2-3 f_3(2 h_1+7 h_2-2 e_3)\\
&-18 e_3^2)+3(h_1^2 e_1+e_2(-3 f_2 e_1+2 h_2e_3)+2 h_1(h_2e_1+2e_2e_3)-3 f_3(h_1(e_1+e_2)-e_1(2h_2+e_3)))),\\
\frac{de_2}{dt_1}&=\frac{1}{6}(-3 f_1^2 f_2-2 f_3h_1e_1-f_3h_2 e_1+f_1(f_3(-h_1+h_2)+3 e_1(2f_2-e_2))+6 f_3h_1e_2\\
&-3 f_3h_2e_2-2 h_1h_2 e_2-h_2^2e_2+3h_1e_1e_3-3 f_3e_2e_3+f_2(2h_1h_2+h_2^2+f_3(-6 h_1+3 h_2)+3e_3^2)),\\
\frac{de_3}{dt_1}&=\frac{1}{6}(-3 f_1^2h_1-3 f_3^2(h_1+h_2)+2 f_2h_1 e_1+f_2h_2e_1+h_1e_3-h_2^2e_3-3f_2e_2 e_3\\
&+f_3(-h_1^2+h_2^2+3 f_2 e_2+3h_1e_3+3 h_2e_3)- f_1(f_2(2 h_1+h_2)+3f_3e_1-3 h_1e_2-3 e_1e_3)).
\end{align*}
\bibliographystyle{jalpha}
\bibliography{math}

\end{document}